\newcommand{\mC}{\mathcal{C}}
\newcommand{\tmC}{\tilde{\mathcal{C}}}
\newcommand{\mD}{\mathcal{D}}
\newcommand{\mS}{\mathcal{S}}
\newcommand{\mT}{\mathcal{T}}
\newcommand{\mU}{\mathcal{U}}
\newcommand{\mV}{\mathcal{V}}
\newcommand{\mX}{\mathcal{X}}
\newcommand{\mR}{\mathcal{R}}
\newcommand{\R}{\mathbb{R}}
\newcommand{\mbE}{\mathbb{E}}
\newcommand{\mbP}{\mathbb{P}}
\newcommand{\mO}{\mathcal{O}}
\newcommand{\tx}{\tilde{x}}
\newcommand{\tX}{\tilde{X}}
\newcommand{\tS}{\tilde{S}}
\newcommand{\dd}{\mathrm{d}}
\newcommand{\dX}{\mathrm{d}X}
\newcommand{\dt}{\mathrm{d}t}
\newcommand{\dW}{\mathrm{d}W}
\newcommand{\amgf}[1]{\Phi_{n,\lambda}(#1)}
\newcommand{\der}[2]{\frac{\mathrm{d} #1 }{\mathrm{d} #2 }} 
\newcommand{\defeq}{:=}
\newcommand{\tr}[1]{\text{trace}(#1)}
\newcommand{\innerp}[1]{\langle #1 \rangle}
\newcommand{\expect}[1]{\mathbb{E}\left( #1 \right)}
\newcommand{\expectw}[2]{\mathbb{E}_{#1}\left( #2 \right)}
\newcommand{\prob}[1]{\mathbb{P}\left( #1 \right)}
\newcommand{\ballone}[1]{\mathcal{B}^1\left(#1\right)}
\newtheorem{definition}{Definition}[section]
\newtheorem{lemma}{Lemma}[section]
\newtheorem{thm}{Theorem}
\newtheorem{proposition}{Proposition}
\newtheorem{assumption}{Assumption}
\newtheorem{remark}{Remark}[section]
\newtheorem{problem}{Problem}
\title{\LARGE \bf
Safety Verification of Nonlinear Stochastic Systems via Probabilistic Tube
}
\author{Zishun Liu$^{1}$, Saber Jafarpour$^{2}$ and Yongxin Chen$^{1}$
\thanks{$^{1}$Zishun Liu and Yongxin Chen are with Georgia Institute of Technology, Atlanta, GA 30332 
        {\tt\small \{zliu910\}\{yongchen\}@gatech.edu}}%
\thanks{$^{2}$Saber Jafarpour is with University of Colorado Boulder, Boulder, CO 80309
        {\tt\small Saber.Jafarpour@colorado.edu}}%
}
\begin{document}
\maketitle
\thispagestyle{empty}
\pagestyle{empty}

\begin{abstract}
We address the problem of safety verification for nonlinear stochastic systems, specifically the task of certifying that system trajectories remain within a safe set with high probability. To tackle this challenge, we adopt a set-erosion strategy, which decouples the effects of stochastic disturbances from deterministic dynamics. This approach converts the stochastic safety verification problem on a safe set into a deterministic safety verification problem on an eroded subset of the safe set. The success of this strategy hinges on the depth of erosion, which is determined by a probabilistic tube that bounds the deviation of stochastic trajectories from their corresponding deterministic trajectories. Our main contribution is the establishment of a tight bound for the probabilistic tube of nonlinear stochastic systems. To obtain a probabilistic bound for stochastic trajectories, we adopt a martingale-based approach.
The core innovation lies in the design of a novel energy function associated with the averaged moment generating function, which forms an affine martingale — a generalization of the traditional c-martingale. Using this energy function, we derive a precise bound for the probabilistic tube. Furthermore, we enhance this bound by incorporating the union-bound inequality for strictly contractive dynamics. By integrating the derived probabilistic tubes into the set-erosion strategy, we demonstrate that the safety verification problem for nonlinear stochastic systems can be reduced to a deterministic safety verification problem. Our theoretical results are validated through applications in reachability-based safety verification and safe controller synthesis, accompanied by several numerical examples that illustrate their effectiveness.
\end{abstract}


\section{Introduction}
\label{sec:introduction}
Safety is a basic requirement in many engineering systems, including autonomous vehicles, manipulators, and drones etc \cite{singletary2021safety}. To deploy these systems in real-world applications, safety verification to certify the safety of these systems in operation plays an essential role in practice. In many applications, the safety of a dynamic system is formulated as its trajectory remaining within a defined safe region over a given time horizon \cite{li2023survey}. 
Since real-world systems are often subject to unexpected disturbances from the environment, safety verification needs to account for these disturbances properly.

For deterministic systems under bounded disturbances, many methods for safety verification have been developed, including 
reachability analysis \cite{AA-MP-JL-SS:08,SB-MC-SH-CJT:17}, barrier functions \cite{prajna2007framework,prajna2004safety}, and predictive control of the model etc \cite{mesbah2016stochastic}.
In the presence of stochastic disturbances, these deterministic methods that rely on worst-case analysis are no longer applicable, as stochastic disturbances are often unbounded \cite{hewing2018stochastic,cosner2023robust}. Even when stochastic disturbances are bounded, these methods can be overly conservative as they aim to analyze the worst-case system behavior that rarely happens under stochastic disturbances. 

To better reflect the effects of stochastic disturbances, most methods in stochastic safety verification turn to a stochastic notion of safety in which the finite-time trajectory remains in the safe set with a high probability guarantee (e.g., $>99.9\%$). Compared to its deterministic counterpart, the safety verification problem for stochastic dynamics is much less studied. A natural strategy is Monte Carlo simulation, which simulates a large number of trajectories to test and evaluate safety. 
Besides brutal force Monte Carlo, there are two other commonly used strategies for stochastic safety verification, one based on reachability analysis and one based on barrier functions. 

The reachability analysis-based methods certify safety by ensuring that the probabilistic reachable tubes of trajectories do not intersect with unsafe regions. Several methods exist for reachability analysis of stochastic systems, including dynamic programming-based methods \cite{AA-MP-JL-SS:08,PM-DC-JL:16} and simulation-based methods \cite{HS-APV-BA-MO:19,NH-XQ-LL-JVD:23}. These methods are computationally demanding and fall short in high-dimensional tasks. A recent set-erosion strategy that decouples the effects of deterministic drift and stochastic disturbance on the reachable set \cite{kohler2024predictive,liu2024safety} promises better scalability. This strategy bounds the stochastic deviation of the stochastic state from its deterministic counterpart and then enlarges the reachable set of the associated deterministic dynamics with this bound to approximate the probabilistic reachable set. This method effectively reduces the stochastic reachability analysis problem into a deterministic one, enabling the utilization of scalable deterministic reachability analysis methods such as set-propagation \cite{MA-GF-AG:21,AG-CL:08,JM-MA:15} and monotonicity \cite{PJM-AD-MA:19,SC:20,AH-SJ-SC:23b}. For linear dynamics, a bound of the stochastic deviation can be obtained directly from the state covariance. For nonlinear dynamics, a bound can be obtained through stochastic incremental stability together with Markov inequality \cite{kohler2024predictive} or Chebyshev inequality \cite{gopalakrishnan2017prvo,hewing2018stochastic,vlahakis2024probabilistic}. This bound is improved in our recent works \cite{szy2024TAC,szy2024Auto}, which provide a tight bound of the stochastic deviation. Nevertheless, all these methods approximate reachable set at one time step instance and thus can only certify safety at one time step. For discrete-time (DT) systems, trajectory-level safety can be obtained based on the union bound \cite{5970128,ono2015chance,liu2024safety}. However, the union bound can be conservative \cite{frey2020collision}, especially for long-horizon trajectories. More importantly, union bound does not apply to continuous-time (CT) dynamics. 

In contrast, the martingale-based methods are applicable to CT systems. 
To verify the safety of a finite trajectory, these methods construct barrier functions that are martingales, semi-martingales \cite{kushner1965stability} or $c$-martingales \cite{steinhardt2012finite,jagtap2020formal,santoyo2021barrier}, and then leverage the Doob's martingale inequality \cite{hajek2015random} to bound their probability of exceeding a given value over the trajectory. These methods can be equally applied to both DT and CT systems. However, a major limitation of these methods lies in the construction of the martingale. It is challenging and computationally heavy to construct a barrier function that belongs to the martingale family and meanwhile aligned with safety specifications, especially for real-world applications \cite{jagtap2020formal}. 

In this paper, we present a novel framework of safety verification for nonlinear stochastic systems that enjoys the advantages of both reachability-based and barrier function-based methods. We follow a set-erosion strategy similar to that used in stochastic reachability analysis to decouple stochastic disturbance and deterministic dynamics. One key difference is that we now consider the deviation of the stochastic dynamics from its deterministic counterpart in the {\em trajectory level}. Specifically, we introduce the concept of the {\em probabilistic tube} (PT), the tube in which stochastic trajectories stay with a high probability. Equipped with this PT, the safety verification problem for stochastic systems reduces to one for deterministic systems with an eroded safe set. The effectiveness of this set-erosion strategy is captured by the depth of erosion determined by the size of the PT. To achieve a tight PT, especially in the CT setting, we present a martingale-based approach. In particular, we introduce a generalization of martingale dubbed affine martingale (AM) for both CT and DT systems. Moreover, we provide a concrete instance of AM based on the Averaged Moment Generative Function (AMGF) \cite{szy2024TAC,szy2024Auto} that applies to general nonlinear dynamics. With this AM, we establish a tight PT for a large class of systems. In cases of strictly contractive systems, we improve the PT further by combining this AM with the union bound. In view of the set-erosion strategy, our PT can be combined with any deterministic safety verification method to solve the stochastic safety verification problem. Note that the size of our PT scales as $\mO(\log(1/\delta))$ with respect to the tolerance $\delta$, making it perfectly suitable for safety verification tasks with a high safety guarantee requirement. 

In addition to their immediate applications in safety verification, our methods and theoretical results on the probabilistic tube have profound implications for understanding the behavior of stochastic dynamics. In particular, our probabilistic bounds can be viewed as a non-asymptotic version of the Freidlin–Wentzell theorem \cite{kifer1988random}, an important result in large deviation theory. In contrast to the Freidlin–Wentzell theorem that provides an estimate for the probability that a stochastic trajectory will stray far from its deterministic counterpart in the zero noise limit, our results provide a tight estimate for the same probability for nonlinear dynamics with an arbitrary noise level. Moreover, our results are also closely related to the reflection principle \cite{jacobs2010stochastic} that describes the probability distribution of the supremum of a one-dimensional Wiener process. With our bounds on the probabilistic tube, for the first time, it is possible to effectively quantify the distance of a nonlinear stochastic trajectory from its deterministic counterpart.

The rest of the paper is organized as follows. In Section \ref{sec: problem}, we formulate the problem of finite-time safety of stochastic systems. The overall strategy, existing works built on it, and their limitations are discussed in Section \ref{sec: existing}. Section \ref{sec: CT bound} introduces the key ingredients of our theoretical analysis and presents a probabilistic tube on CT stochastic systems. The bound proposed in Section \ref{sec: CT bound} is improved in Section \ref{sec: improved CT bound} for contractive systems. Section \ref{sec: DT bound} provides the probabilistic tubes for general DT stochastic systems. In Section \ref{sec: app}, we summarize the safety verification method based on our derived bounds and explore its applications in two scenarios. All these theoretical works are validated through two numerical experiments in Section \ref{sec: example}.  

\textit{Notations.} For a vector $x\in \R^n$, $x^{\top}$ denotes its transpose, $\|x\|$ denotes its $\ell_2$ norm, and $\innerp{x,y}$ denotes the inner product. Throughout the paper, we use $\mbE$ to denote expectations, $\mbP$ to denote the probability, $\mathcal{B}^n(r,x_0)$ to denote the ball $\{x\in\R^n: \|x-x_0\|\leq r\}$, and $\mS^{n-1}$ to denote the unit sphere: $\{x\in\R^n: \|x\|=1\}$. Given two sets $A,B\subseteq \R^n$, the Minkowski sum of them is defined by $A\oplus B = \{x+y: x\in A,~ y\in B\}$, and the Minkowski difference is defined by $A\ominus B=(A^c\oplus(-B))^c$, where $A^c,B^c$ are the complements of $A,B$ and $-B=\{-y: y\in B\}$. For a random variable $X$, $X\sim \mathcal{G}$ means $X$ is independent and identically drawn from the distribution $\mathcal{G}$. Especially, for a set $S$, $X\sim S$ means $X$ is drawn uniformly from $S$. For a number $x\in\R$,  $\lceil x\rceil=\min_{a}\{a\in\mathbb{N}:a\geq x\}$. We say $x$ is $T$-dividable if $T>0$ and $T/x\in\mathbb{N}$. Given a continuously differentiable vector-valued function $f:\R^n\to \R^m$, we denote the Jacobian of $f$ at $x$ by $D_xf(x)$. For a twice-differentiable scalar-valued function $f:\R^n\to \R$, its gradient at $x$ is $\nabla f(x)$, and the Hessian matrix is denoted as $\nabla^2f(x)$. 

\section{Safety Verification of Nonlinear Stochastic Systems} \label{sec: problem}
Many real-world systems are affected by both deterministic and stochastic disturbances.
Guaranteeing system safety under these disturbances with high probability is a critical task in most applications. 
In this section, we formulate this problem of stochastic safety verification.

\subsection{Nonlinear Stochastic Dynamics}
We study the safety verification problem for stochastic systems in both the continuous-time (CT) and the discrete-time (DT) settings.
\subsubsection{CT Stochastic System}
The CT stochastic system considered in this paper is modeled as 
\begin{equation}\label{sys: c-t ss}
    \dX_t=f(X_t,d_t,t)\dt+g_t(X_t)\dW_t,
\end{equation}
where the state $X_t\in \R^n$ is the state, $d_t\in \R^{p}$ is a bounded disturbance whose statistical properties are unknown, $g_t(X_t)\in\R^{n\times m}$ is the diffusion coefficient, and $W_t\in\R^m$ is an $m$-dimensional Wiener process (Brownian motion) modeling the stochastic disturbance.

We default standard Lipschitz and linear growth conditions \cite[Theorem 5.2.1]{BO:13} to ensure \eqref{sys: c-t ss} has a solution. These assumptions are standard in scientific studies and engineering. For theoretical analysis, we follow \cite{szy2024TAC} and impose the boundedness assumptions on $g_t(x)$ and the matrix measure of $D_xf(x,d,t)$, which play an important role in characterizing the evolution of system trajectories.  

\begin{definition}[Matrix Measure]\label{def:matrix}
    Given a matrix $A\in\R^{n\times n}$, its matrix measure with respect to $\|\cdot\|$, denoted by $\mu(A)$, is defined as
    \begin{align*}
        \mu(A)=\lim_{\epsilon\to 0^{+}}\frac{\|I_n + \epsilon A\|-1}{\epsilon}.
    \end{align*}
\end{definition}

\begin{assumption}\label{as: boundness}
    For the CT stochastic system~\eqref{sys: c-t ss}, there exist $c\in\R$ and $\sigma>0$ such that,
    \begin{enumerate}
        \item $\mu(D_xf(x,d,t))\leq c$ for any $t\ge 0$, $d\in\mD$, and $x\in\R^n$.
        \item $g_t(x)g_t(x)^{\top}\preceq \sigma^2 I_n$ for any $t\geq0$ and $x\in\R^n$.
    \end{enumerate}
\end{assumption}

\subsubsection{DT Stochastic System}
The DT stochastic system considered in this paper is
\begin{equation}\label{sys: d-t ss}
     X_{t+1}=f(X_t,d_t,t)+w_t,
\end{equation}
where $X_t\in \R^n$ is the state, $d_t\in\mD\subseteq\R^{p}$ is a deterministic disturbance, and $w_t\in\R^n$ is a stochastic disturbance.

We assume the Lipschitz nonlinearity condition on DT systems, which is standard in applications \cite{661604}.
\begin{assumption}\label{ass: Lipschitz f}
    For the DT stochastic system \eqref{sys: d-t ss}, there exists $L\geq0$ such that, for every $x,y\in \R^n$ and every $d\in \mD$, 
        $$\|f(x,d,t) - f(y,d,t)\| \leq L\|x-y\|.$$
\end{assumption}

We impose a mild assumption on the stochastic disturbance. We assume $w_t$ follows a \textit{sub-Gaussian} distribution \cite{rigollet2023high}, which covers a wide range of distributions including Gaussian, uniform, and any bounded zero-mean distributions.

\begin{definition}[sub-Gaussian] \label{def: subG}
    A random variable $X\in\R^n$ is said to be sub-Gaussian with variance proxy $\vartheta^2$, denoted as $X\sim subG(\vartheta^2)$, if $\mbE(X)=0$ and 
    for any $\ell$ on the unit sphere $\mS^{n-1}$,
$\mbE_X\left(e^{\lambda \innerp{\ell,X}}\right)\leq e^{\frac{\lambda^2\vartheta^2}{2}},\mbox{for all } \lambda\in\R$.
\end{definition}
\begin{assumption}\label{ass: bounded sigma}
    For the DT stochastic system \eqref{sys: d-t ss}, the disturbance $w_t\sim subG(\vartheta_t^2)$ and there exists $\vartheta<+\infty$ such that $\vartheta_t\leq \vartheta^2$ for every $t\geq0$. 
\end{assumption}

\subsection{Safety Verification of Stochastic Systems}
Consider the deterministic dynamics  
\begin{equation}\label{sys: c-t ds}
    \dot{x}_t=f(x_t,d_t,t)\quad \mbox{(CT)},
\end{equation}
\begin{equation}\label{sys: d-t ds}
    x_{t+1}=f(x_t,d_t,t)\quad \mbox{(DT)},
\end{equation}
which can be viewed as the noise-free version of their associated stochastic systems \eqref{sys: c-t ss} and \eqref{sys: d-t ss} respectively.
Given a time horizon $[0,\,T]$ and a safe set $\mC\subseteq\R^n$, we say a deterministic system \eqref{sys: c-t ds} or \eqref{sys: d-t ds} 
with initial state set $\mX_0$ is \textit{safe} during $t\leq T$ if $\mX_0\subseteq\mC$ and
\begin{equation}\label{eq: det safety}
  x_0\in\mX_0 ~ \Rightarrow~ x_t\in\mC,~  \;\;\;\;\forall t\leq T,~ \forall d_t\in\mD.
\end{equation}

For stochastic systems, to better capture the effects of stochastic noise, we consider safety with a probabilistic guarantee. Specifically, given a probability level $\delta\in[0,1]$, a safe set $\mC\subset\mathbb{R}^n$ and a time horizon $[0,\,T]$, the system \eqref{sys: c-t ss} or \eqref{sys: d-t ss} with initial state set $\mX_0$ is said to be \textit{safe with $1-\delta$ guarantee} during $t\leq T$ if $\mX_0\subseteq\mC$ and
    \begin{equation} \label{eq: sto safety}
        X_0\in\mathcal{X}_0~ \Rightarrow~ \prob{X_t\in\mC,~ \forall t\leq T}\geq 1-\delta,~ \forall d_t\in\mD.
    \end{equation}

Compared to safety in the deterministic sense, safety with a probabilistic guarantee is more suitable for stochastic systems. Indeed, for the CT system \eqref{sys: c-t ss} and the DT system \eqref{sys: d-t ss} with unbounded $w_t$ such as Gaussian noise, the deterministic safe set is often unbounded. For the DT system \eqref{sys: d-t ss} with bounded sub-Gaussian $w_t$, the deterministic safe set based on worst-case analysis can be overly conservative \cite{kohler2024predictive}. 

The goal of this paper is to provide a safety certificate with $1-\delta$ guarantee for both CT stochastic system \eqref{sys: c-t ss} and DT stochastic system \eqref{sys: d-t ss}, as formalized below.
\begin{problem} \label{prob: safe veri}
    For both CT stochastic system \eqref{sys: c-t ss} satisfying Assumption \ref{as: boundness} and DT stochastic system \eqref{sys: d-t ss} satisfying Assumption \ref{ass: Lipschitz f}-\ref{ass: bounded sigma}, develop an effective method that can verify the safety of the system with $1-\delta$ guarantee.
\end{problem}

\section{Verification via Probabilistic Tube}\label{sec: existing}
In this section, we introduce our overall strategy for solving Problem \ref{prob: safe veri} and the main challenge of applying this strategy. 

\subsection{Set-Erosion Strategy and Probabilistic Tube}

\begin{figure}
\centering
\begin{subfigure}[t]{0.48\textwidth}
			\centering
			\includegraphics[width=0.8\textwidth]{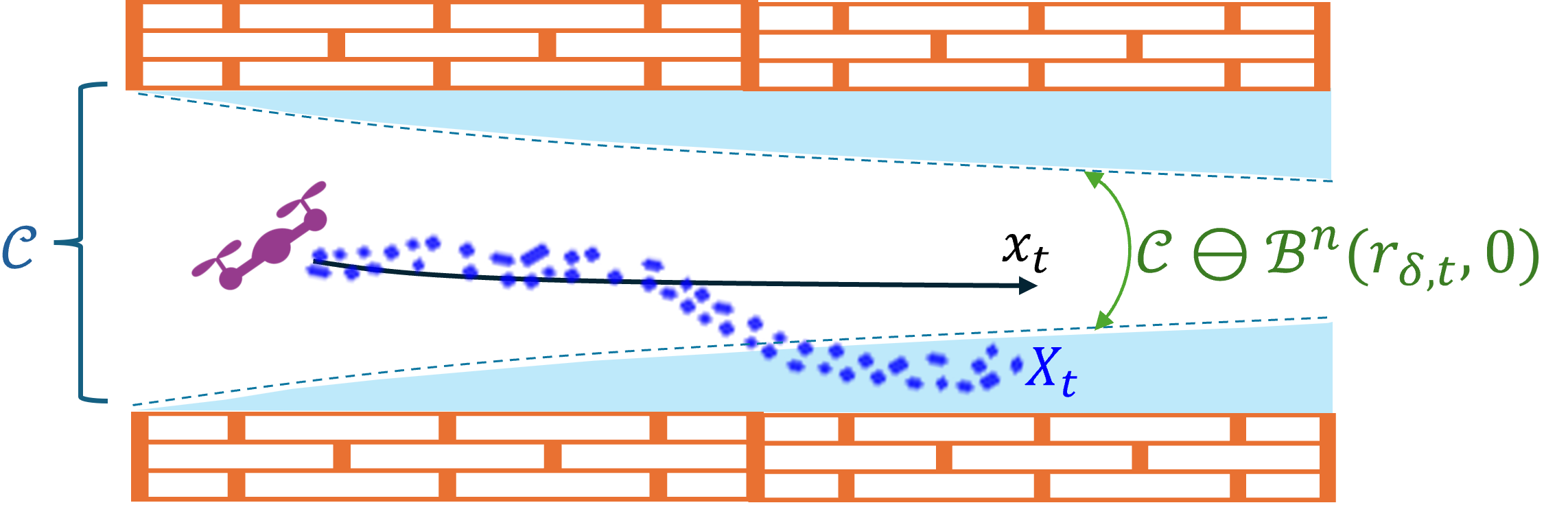}
			\caption{Set Erosion Strategy}
	\end{subfigure}%
        
        \begin{subfigure}[t]{0.48\textwidth}
		\centering
		\includegraphics[width=0.7\textwidth]{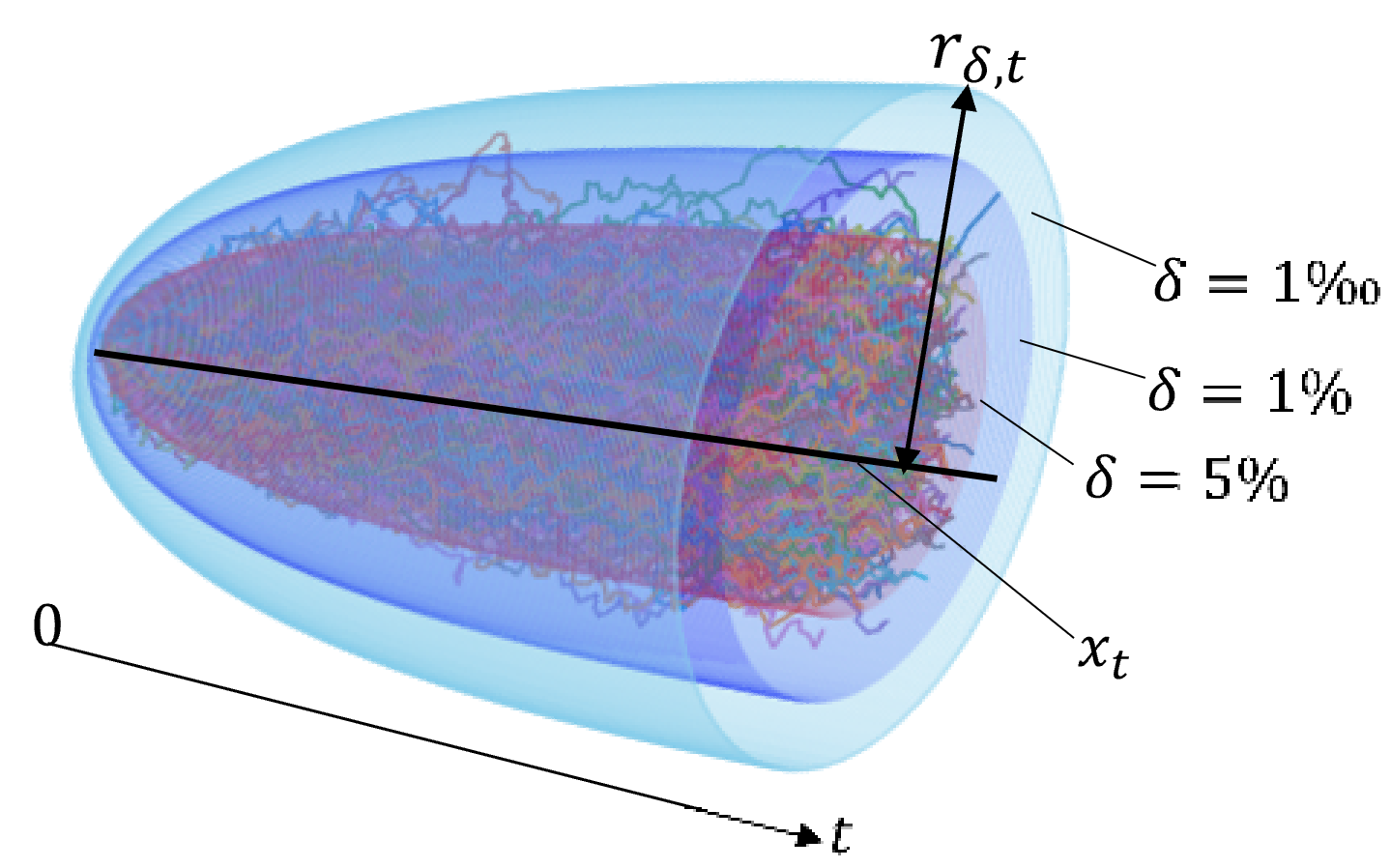}
		\caption{Probabilistic Tube}
	\end{subfigure}%
\caption{An illustration of set-erosion strategy and probabilistic tube. \textbf{(a):} $\mC$ between the walls is the safe set, and the aisle between the blue areas is the eroded subset $\mC\ominus\mathcal{B}^n\left(r_{\delta,t},0\right)$ with $r_{\delta,t}$ defined in Definition \ref{def: PT}. By Theorem \ref{thm: set-erosion}, if the deterministic trajectory stays in the white aisle at any time, then the stochastic trajectory is safe on $\mC$ with $1-\delta$ guarantee. \textbf{(b):} Probabilistic tube (PT). The black solid line is the deterministic trajectory $x_t$, and the stochastic trajectories $X_t$ are associated to it. Tubes in different colors are PTs with different probability level $\delta$. 
}
\label{fig: set-erosion}
\end{figure}

 Intuitively, the trajectory of a stochastic system fluctuates around the trajectory of the associated noise-free deterministic system with high probability.
Given a safe set $\mC\subset\R^n$, we can thus erode $\mC$ from its boundary to get a subset $\tmC\subset\mC$, and then refine the erosion depth so that the safety of the deterministic system on $\tmC$ suffices to deduce the safety of its associated stochastic system on $\mC$ with high-probability guarantee. Building on this intuition, we propose a strategy termed \textit{set-erosion} in our previous work \cite{liu2024safety}. 

The key of eroding the safe set $\mC$ is the proper erosion depth, which is captured by the probabilistic tube (PT) of the stochastic system. We say that a stochastic trajectory $X_t$ and a deterministic trajectory $x_t$ are \textit{associated} if $X_0=x_0$ and they have the same $d_t$ at any time. With the definition of associated trajectories, PT is defined as follows. 

\begin{definition}[Probabilistic Tube]\label{def: PT}
Consider a stochastic system \eqref{sys: c-t ss} (respectively \eqref{sys: d-t ss}) and its associated deterministic system \eqref{sys: c-t ds} (respectively \eqref{sys: d-t ds}). Given a finite time horizon $[0,\,T]$ and a probability level $\delta\in(0,1)$, a curve  $r_{\delta,t}:[0,T]\to\R_{\geq0}$, the set $\mT=\{(t,y)|0\le t\le T, \|y\|\le r_{\delta,t}\}$ is said to be a \textit{probabilistic tube} (PT) of the stochastic system if for any associated trajectory $X_t$ and $x_t$:
\begin{align*}
    &\prob{(t,X_t-x_t)\in \mT, ~\forall t\leq T}\\=~&\prob{\|X_t-x_t\|\leq r_{\delta,t},~\forall t\leq T}\geq 1-\delta.
\end{align*}     
\end{definition}

Based on PT, the set-erosion strategy provides a sufficient condition for the safety of the stochastic system with $1-\delta$ guarantee.
\begin{thm}[Set-erosion strategy]\label{thm: set-erosion}
     Consider a stochastic system \eqref{sys: c-t ss} (respectively \eqref{sys: d-t ss}) and its associated deterministic system \eqref{sys: c-t ds} (respectively \eqref{sys: d-t ds}). Given a safe set $\mC\in\R^n$, an initial set $\mathcal{X}_0\in\mC$, and a terminal time $T$, if the size $r_{\delta,t}$ of PT following Definition \ref{def: PT} satisfies
     \begin{equation}\label{eq: thm1}
         x_t\in\mC\ominus\mathcal{B}^n\left(r_{\delta,t},0\right),~\mbox{ for all } t\leq T,
     \end{equation}
     then the system \eqref{sys: c-t ss} (respectively \eqref{sys: d-t ss}) is safe with $1-\delta$ guarantee during $t\leq T$. 
\end{thm}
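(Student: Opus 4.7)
The plan is to show that the $1-\delta$ probability event guaranteed by the PT definition already forces the stochastic trajectory into $\mC$ on the entire horizon, using the assumed deterministic containment in the eroded set. In other words, I would not try to analyze $X_t$ directly; instead I would couple $X_t$ to its associated deterministic trajectory $x_t$ and control the deviation via the PT.

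First, I would unpack the Minkowski difference. Since the ball $\mathcal{B}^n(r_{\delta,t},0)$ is symmetric about the origin, the paper's definition $\mC \ominus B = (\mC^c \oplus (-B))^c$ is equivalent to the set-theoretic characterization
\begin{equation*}
\mC \ominus \mathcal{B}^n(r_{\delta,t},0) \;=\; \bigl\{\, y \in \R^n : y \oplus \mathcal{B}^n(r_{\delta,t},0) \subseteq \mC \,\bigr\}.
\end{equation*}
This is a short algebraic check that I would include as a one-line lemma (or an inline remark). In particular, the hypothesis \eqref{eq: thm1} reads: for every $t \le T$ and every admissible $d_\cdot$, the point $x_t$ lies at distance at least $r_{\delta,t}$ from $\mC^c$, so $x_t \oplus \mathcal{B}^n(r_{\delta,t},0) \subseteq \mC$.

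Next, fix any admissible disturbance $d_\cdot$ and couple the stochastic trajectory $X_t$ (with $X_0 = x_0 \in \mX_0$) to its associated deterministic trajectory $x_t$ driven by the same $d_\cdot$. By Definition \ref{def: PT}, the event
\begin{equation*}
\mathcal{E} \;=\; \bigl\{\, \|X_t - x_t\| \le r_{\delta,t} \text{ for all } t \le T \,\bigr\}
\end{equation*}
satisfies $\prob{\mathcal{E}} \ge 1-\delta$. On $\mathcal{E}$, for each $t \le T$ we have $X_t \in x_t \oplus \mathcal{B}^n(r_{\delta,t},0)$, and combined with the Minkowski-difference characterization above together with the hypothesis \eqref{eq: thm1}, this gives $X_t \in \mC$ for all $t \le T$. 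Hence $\prob{X_t \in \mC \text{ for all } t \le T} \ge \prob{\mathcal{E}} \ge 1-\delta$, which is the required $1-\delta$ safety guarantee. Since this holds for any admissible $d_\cdot$ and any initial state $X_0 \in \mX_0$, the stochastic system is safe with $1-\delta$ guarantee in the sense of \eqref{eq: sto safety}.

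There is no real obstacle here; the theorem is essentially the formal content of the set-erosion picture in Figure \ref{fig: set-erosion}. The only place a careful reader might want reassurance is the equivalence of the two Minkowski-difference formulations (definition as complement of Minkowski sum vs.\ ``all translates of $B$ stay inside $A$''), so I would make sure to record that equivalence explicitly rather than treating it as obvious.
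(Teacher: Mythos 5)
Your proof is correct and is precisely the intended argument: the paper itself gives no proof of this theorem (it is inherited from the cited prior work on the set-erosion strategy), treating it as an immediate consequence of Definition \ref{def: PT} and the Minkowski-difference characterization $\mC\ominus B=\{y: \{y\}\oplus B\subseteq \mC\}$, which is exactly what you spell out. One tiny remark: that characterization follows from the definition $A\ominus B=(A^c\oplus(-B))^c$ for \emph{any} $B$, so the symmetry of the ball is not actually needed for the equivalence.
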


An illustration of PT and the set-erosion strategy is shown in Figure \ref{fig: set-erosion}. Theorem \ref{thm: set-erosion} states that to verify the safety of the stochastic system on $\mC$,
 it suffices to verify the safety of its \textit{associated deterministic system} on the eroded subset $\mC\ominus\mathcal{B}^n\left(r_{\delta,t},0\right),~\forall t\leq T$, once a PT is provided. If $r_{\delta,t}$ is too large, $\mC\ominus\mathcal{B}^n\left(r_{\delta,t},0\right)$ can be small or even empty, rendering conservative conditions. Therefore, it is crucial to establish a tight PT for stochastic systems. This motivates us to focus on the following problem.
\begin{problem} \label{prob: PT}
Given a finite time horizon $[0,\,T]$, establish tight probabilistic tubes for both stochastic system \eqref{sys: c-t ss} under Assumption \ref{as: boundness}, and stochastic system \eqref{sys: d-t ss} under Assumption \ref{ass: Lipschitz f}-\ref{ass: bounded sigma}.
\end{problem}

\subsection{Probabilistic Bound of Stochastic Fluctuation} \label{sec: safety verification}

The fluctuation of stochastic dynamics around its associated deterministic dynamics can be analyzed with stochastic incremental stability \cite{tsukamoto2021contraction}. This method analyzes the evolution of the mean-squared fluctuation and bound $\mbE(\|X_t-x_t\|^2)$ at any given time step. A probabilistic bound on $\|X_t-x_t\|$ with $\mO(\sqrt{1/\delta})$ dependence on the probability level $\delta$ then follows by applying the Markov's inequality. A significantly better bound is established in our recent work \cite{szy2024TAC,szy2024Auto}, which has $\mO(\sqrt{\log(1/\delta)})$ dependence on $\delta$, for both CT and DT systems. 
\begin{proposition} \label{prop: CT single} 
\cite[Theorem 1]{szy2024TAC}
     Consider the CT stochastic system \eqref{sys: c-t ss} and its associated deterministic system \eqref{sys: c-t ds} under Assumption \ref{as: boundness}. Let $X_t$ be a trajectory of \eqref{sys: c-t ss} and $x_t$ be the associated deterministic trajectory of $X_t$. Then, for any $t\ge 0$, $\delta\in(0,1)$ and tunable parameter $\varepsilon\in(0,1)$,
    \begin{equation} \label{eq: prop1}
         \|X_t-x_t\|\leq \sqrt{\frac{\sigma^2(e^{2ct}-1)}{2c}(\varepsilon_1n+\varepsilon_2\log(1/\delta))}
     \end{equation}
     holds with probability at least $1-\delta$,
     where 
     \begin{equation}\label{eq: epsilon val}
        \varepsilon_1=\frac{\log(\frac{1}{1-\varepsilon^2})}{\varepsilon^2},~ \varepsilon_2=\frac{2}{\varepsilon^2}.
    \end{equation}
\end{proposition}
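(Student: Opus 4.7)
The plan is to apply a moment-generating-function argument to the pointwise deviation $Y_t \defeq X_t - x_t$. Since $X_0 = x_0$ and the two trajectories share the same $d_t$, subtracting \eqref{sys: c-t ds} from \eqref{sys: c-t ss} gives
\begin{equation*}
\dY_t = \bigl[f(X_t,d_t,t) - f(x_t,d_t,t)\bigr]\dt + g_t(X_t)\dW_t,\quad Y_0 = 0.
\end{equation*}
By the fundamental theorem of calculus, the drift factors as $A(t)Y_t$, where $A(t) \defeq \int_0^1 D_x f(x_t+sY_t, d_t, t)\,\mathrm{d}s$. Since the matrix measure is convex in its argument, Assumption~\ref{as: boundness}.1 yields $\mu(A(t)) \leq c$ pathwise.

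Next I would introduce the averaged moment generating function (AMGF) energy
\begin{equation*}
V_\lambda(y) \defeq \mbE_{\ell \sim \mS^{n-1}}\bigl[e^{\lambda \innerp{\ell,y}}\bigr]
\end{equation*}
with a tunable parameter $\lambda > 0$. Because $V_\lambda$ is radial, its gradient is aligned with $y$, and one can show $\innerp{\nabla V_\lambda(y),\, A y} \leq \mu(A)\|y\|V_\lambda'(\|y\|)$, so the drift contribution is controlled by the matrix measure of $A(t)$; simultaneously $\nabla^2 V_\lambda(y) \preceq \lambda^2 V_\lambda(y)\,I_n$ controls the Itô correction coming from $g_t g_t^\top \preceq \sigma^2 I_n$. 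Combining these with Itô's formula on $V_\lambda(Y_t)$ and taking expectation produces a linear differential inequality in $\mbE[V_\lambda(Y_t)]$ whose Grönwall integration yields a closed-form upper bound in which the factor $\sigma^2(e^{2ct}-1)/(2c)$ of \eqref{eq: prop1} appears naturally as the time-integrated noise intensity.

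The tail bound then follows from Markov's inequality applied to $V_\lambda(Y_t)$:
\begin{equation*}
\prob{\|Y_t\|\geq r} \leq \frac{\mbE[V_\lambda(Y_t)]}{\inf_{\|y\|=r} V_\lambda(y)}.
\end{equation*}
By radial symmetry, the denominator reduces to a one-variable spherical integral in $\lambda r$, while the numerator evaluates to a Gaussian-MGF-like expression of the form $(1-\varepsilon^2)^{-n/2}$, with $\varepsilon\in(0,1)$ being the natural normalization of $\lambda$ against the noise-scaled time. Setting the resulting right-hand side equal to $\delta$ and solving for $r$ yields exactly the claimed form $\sqrt{\tfrac{\sigma^2(e^{2ct}-1)}{2c}(\varepsilon_1 n + \varepsilon_2\log(1/\delta))}$: the coefficient $\varepsilon_1 = \log(1/(1-\varepsilon^2))/\varepsilon^2$ originates from the $(1-\varepsilon^2)^{-n/2}$ factor in the numerator, and $\varepsilon_2 = 2/\varepsilon^2$ from converting the $\log(1/\delta)$ into a squared-radius scale.

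The main obstacle is verifying the one-sided inequality $\innerp{\nabla V_\lambda(y), A y} \leq \mu(A)\|y\|V_\lambda'(\|y\|)$, which is precisely what distinguishes the AMGF from a naive exponential moment like $e^{\lambda\|Y_t\|^2}$. The latter would yield an Itô drift bounded only by $\|A(t)\|$, erasing the sign information of $c$ and ruining the bound in the contractive regime $c<0$. The rotational invariance of the AMGF allows the matrix measure — a one-sided quantity that can be much smaller than $\|A\|$, and even negative — to appear instead, which is what keeps the final estimate bounded uniformly as $t\to\infty$ whenever $c<0$. Carrying this through rigorously reduces to explicit estimates of exponential integrals over $\mS^{n-1}$ and is the technical heart of the argument.
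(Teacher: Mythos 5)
Your overall architecture --- track the AMGF energy of the pointwise deviation $Y_t=X_t-x_t$, bound its drift via the matrix measure and its It\^o correction via $g_tg_t^\top\preceq\sigma^2 I_n$, then convert the resulting moment bound into a tail bound using the exponential-growth lower bound $\amgf{y}\geq(1-\varepsilon^2)^{n/2}e^{\varepsilon\lambda\|y\|}$ and optimizing over $\lambda$ --- is exactly the mechanism behind this proposition (and behind the paper's proof of Theorem~\ref{thm: CT bound}, which reuses it). The Chernoff--Markov step, the origin of $\varepsilon_1$ and $\varepsilon_2$, and the choice $\lambda^*\propto\varepsilon r$ are all correct as you describe them, modulo the small slip that the $(1-\varepsilon^2)^{-n/2}$ factor comes from lower-bounding the denominator $\inf_{\|y\|=r}V_\lambda(y)$, not from evaluating the numerator.

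The gap is in the Gr\"onwall step. Your drift estimate gives $\innerp{\nabla V_\lambda(y),A y}\leq \mu(A)\,\|y\|\,V_\lambda'(\|y\|)$, and $\|y\|V_\lambda'(\|y\|)$ is \emph{not} bounded by a constant multiple of $V_\lambda(y)$: writing $V_\lambda(y)=\phi(\lambda\|y\|)$, the ratio $s\phi'(s)/\phi(s)$ grows linearly in $s$. So for $c\neq 0$ the resulting differential inequality is not linear in $\mbE[V_\lambda(Y_t)]$ and does not close; and if you simply drop the drift term when $c\le 0$ you only recover the factor $t$ in place of the sharper $(e^{2ct}-1)/(2c)$. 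In neither case does the claimed factor ``appear naturally'' from a fixed-$\lambda$ linear ODE. The standard repair --- the one the paper uses in the general-case part of the proof of Theorem~\ref{thm: CT bound} --- is the exponential time rescaling $\tilde Y_t=e^{-ct}Y_t$: the rescaled drift has matrix measure at most $0$, so its contribution to the It\^o formula for $V_\lambda(\tilde Y_t)$ is nonpositive and can be discarded, while the diffusion contributes $\tfrac{\lambda^2\sigma^2e^{-2ct}}{2}V_\lambda(\tilde Y_t)$; Gr\"onwall then gives $\mbE[V_\lambda(\tilde Y_t)]\le\exp\bigl(\tfrac{\lambda^2\sigma^2}{2}\cdot\tfrac{1-e^{-2ct}}{2c}\bigr)$, and undoing the rescaling via $\|Y_t\|=e^{ct}\|\tilde Y_t\|$ produces exactly the coefficient $\sigma^2(e^{2ct}-1)/(2c)$ in \eqref{eq: prop1}. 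With that substitution your argument goes through; without it, the stated bound is not reached.
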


\begin{remark}
    The reader may notice that $\varepsilon_1, \varepsilon_2$ in \cite{szy2024TAC} are $\varepsilon_1=\frac{2\log(1+2/(1-\varepsilon))}{\varepsilon^2}$ and $\varepsilon_2=\frac{2}{\varepsilon^2}$, slightly worse than \eqref{eq: epsilon val}. These parameters have been recently improved \cite{liu2024norm}. 
\end{remark}

\begin{proposition} \label{prop: DT single} 
\cite[Theorem 1]{szy2024Auto}
    Consider the DT stochastic system \eqref{sys: d-t ss} and its associated deterministic system \eqref{sys: d-t ds} under Assumptions \ref{ass: Lipschitz f} and \ref{ass: bounded sigma}. Let $X_t$ be the trajectory of \eqref{sys: d-t ss} and $x_t$ be the associated deterministic trajectory of \eqref{sys: d-t ds}. Then, for any $t\geq0$, $\delta\in(0,1)$ and $\varepsilon\in(0,1)$,
     \begin{equation}\label{eq: prop2}
          \|X_t-x_t\|\leq \sqrt{\frac{\sigma^2(L^{2t}-1)}{L^2-1}(\varepsilon_1n+\varepsilon_2\log(1/\delta))}
     \end{equation}
     holds with probability at least $1-\delta$, where $\varepsilon_1,\varepsilon_2$ are as \eqref{eq: epsilon val}.
\end{proposition}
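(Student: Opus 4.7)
The plan mirrors the AMGF/martingale argument used for the CT Proposition \ref{prop: CT single}, adapted to the discrete recursion. Let $e_t := X_t - x_t$ denote the stochastic fluctuation; subtracting \eqref{sys: d-t ds} from \eqref{sys: d-t ss} and invoking Assumption \ref{ass: Lipschitz f} gives
\begin{equation*}
e_{t+1} = \big[f(X_t,d_t,t) - f(x_t,d_t,t)\big] + w_t, \qquad \|f(X_t,d_t,t) - f(x_t,d_t,t)\| \le L \|e_t\|,
\end{equation*}
with $e_0 = 0$. The energy function I would track is the averaged moment generating function
\begin{equation*}
\Phi_{n,\lambda}(e) := \mathbb{E}_{\ell \sim \mS^{n-1}}\big[e^{\lambda \ell^\top e}\big],
\end{equation*}
which by rotational symmetry depends only on $\|e\|$ and (rewriting the sphere integral in terms of $\cosh(\lambda\|e\|\cos\theta)$) is monotonically increasing in $\|e\|$ for any $\lambda$.

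The first step is a one-step AMGF inequality. Fix $\ell \in \mS^{n-1}$ and condition on $e_t$; the sub-Gaussianity in Assumption \ref{ass: bounded sigma} gives $\mathbb{E}[e^{\lambda \ell^\top e_{t+1}}\mid e_t] \le e^{\lambda^2 \vartheta^2/2}\, e^{\lambda \ell^\top (f(X_t,d_t,t)-f(x_t,d_t,t))}$. Averaging in $\ell$, using Fubini, and then applying monotonicity of $\Phi_{n,\lambda}$ together with the Lipschitz bound yields
\begin{equation*}
\mathbb{E}\big[\Phi_{n,\lambda}(e_{t+1}) \mid e_t\big] \le e^{\lambda^2 \vartheta^2/2}\, \Phi_{n,\,L\lambda}(e_t),
\end{equation*}
where the factor $L$ appears as an $L$-fold rescaling of the AMGF parameter. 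This $L$-dilation is absorbed by running a geometrically shrinking sequence $\lambda_k := \lambda_* L^{-k}$ for a free $\lambda_* > 0$, so that $L\lambda_{t+1} = \lambda_t$ and successive $\Phi$ factors telescope. Taking total expectation, iterating from $k=0$, and collapsing the initial factor via $e_0 = 0$ (so $\Phi_{n,\lambda_0}(0)=1$), one obtains
\begin{equation*}
\mathbb{E}\big[\Phi_{n,\lambda_t}(e_t)\big] \le \exp\!\left(\frac{\vartheta^2}{2}\sum_{k=0}^{t-1}\lambda_{k+1}^2\right) = \exp\!\left(\frac{\lambda_t^2\vartheta^2 (L^{2t}-1)}{2(L^2-1)}\right),
\end{equation*}
which is precisely the DT analog of the integral factor $(e^{2ct}-1)/(2c)$ appearing in \eqref{eq: prop1}.

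The final step is a Chernoff-type conversion to a tail bound for $\|e_t\|$. Here I would import the sharp spherical-integration lower bound of the form $\Phi_{n,\lambda}(e) \ge (1-\varepsilon^2)^{(n-1)/2} \exp(\varepsilon^2 \lambda^2 \|e\|^2/2)$ for any $\varepsilon \in (0,1)$ — the same estimate underlying Proposition \ref{prop: CT single} with the improved constants noted in the remark after it. Applying Markov's inequality to $\Phi_{n,\lambda_t}(e_t)$, combining with the iterated upper bound, and optimizing $\lambda_*$ produces exactly the exponent $\varepsilon_1 n + \varepsilon_2 \log(1/\delta)$ with $\varepsilon_1, \varepsilon_2$ as in \eqref{eq: epsilon val}; the bookkeeping is identical to the CT case, with the geometric sum replacing the exponential integral.

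The main obstacle, as in the CT case, is the one-step AMGF inequality: the Lipschitz drift $f(X_t)-f(x_t)$ enters as a data-dependent linear shift inside the sphere average, and it is not a priori obvious that it can be dominated by a mere $L$-fold rescaling of $\lambda$ rather than incurring a loose $t L^{2(t-1)}$ factor upon iteration. The monotonicity of $\Phi_{n,\lambda}$ in $\|e\|$ combined with the geometric rescaling $\lambda_k = \lambda_* L^{-k}$ is what makes the telescoping work and ultimately produces the tight factor $(L^{2t}-1)/(L^2-1)$.
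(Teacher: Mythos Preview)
Your overall strategy is correct and matches what the paper does (the paper does not reprove Proposition~\ref{prop: DT single} explicitly, but the identical machinery appears in the proof of Theorem~\ref{thm: DT bound} in the appendix). The one-step AMGF inequality $\mathbb{E}[\Phi_{n,\lambda}(e_{t+1})\mid e_t]\le e^{\lambda^2\vartheta^2/2}\Phi_{n,L\lambda}(e_t)$ is right, and your geometric reparametrization $\lambda_k=\lambda_* L^{-k}$ is equivalent to the paper's device of rescaling the state via $\tilde X_t=L^{-t}X_t$ (so that $\tilde f$ has Lipschitz constant $1$ and $\tilde w_t$ has variance proxy $L^{-2t}\vartheta^2$) and then running the argument with fixed $\lambda$; the two are related by $\Phi_{n,\lambda}(L^{-t}e_t)=\Phi_{n,L^{-t}\lambda}(e_t)$, and both produce the same accumulated exponent $\lambda_t^2\vartheta^2(L^{2t}-1)/(2(L^2-1))$.

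The one genuine error is the form of your spherical lower bound. You write $\Phi_{n,\lambda}(e)\ge(1-\varepsilon^2)^{(n-1)/2}\exp(\varepsilon^2\lambda^2\|e\|^2/2)$, but this cannot hold: since $|\ell^\top e|\le\|e\|$ pointwise, $\Phi_{n,\lambda}(e)\le e^{\lambda\|e\|}$, so any quadratic-in-$\|e\|$ lower bound is false for large $\|e\|$. The correct estimate (Lemma~\ref{lemma: AMGF_prop}\ref{p4:AMGF}) is linear in the exponent, $\Phi_{n,\lambda}(e)\ge(1-\varepsilon^2)^{n/2}e^{\varepsilon\lambda\|e\|}$. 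This matters for the Chernoff step: with your quadratic form the Markov bound becomes $(1-\varepsilon^2)^{-(n-1)/2}\exp(\lambda_t^2[A-\varepsilon^2 r^2/2])$ with $A=\vartheta^2(L^{2t}-1)/(2(L^2-1))$, which either blows up or collapses to zero as $\lambda_t$ varies and never produces the $\log(1/\delta)$ dependence. With the correct linear bound the exponent is $A\lambda_t^2-\varepsilon\lambda_t r$, whose minimization over $\lambda_t$ gives precisely $-\varepsilon^2 r^2/(4A)$ and hence the constants $\varepsilon_1,\varepsilon_2$ of \eqref{eq: epsilon val}. Once you swap in the linear lower bound, the rest of your argument goes through verbatim.
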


The probabilistic bounds in Proposition \ref{prop: CT single}-\ref{prop: DT single} are shown to be tight \cite{szy2024TAC,szy2024Auto}, meaning it is impossible to achieve better probabilistic bound on $\|X_t-x_t\|$ without additional assumptions. Nevertheless, the probabilistic bounds in Proposition \ref{prop: CT single}-\ref{prop: DT single} are for $\|X_t-x_t\|$ at one time step in the state level, not quite the same as the trajectory level bound in PT. 

\subsection{Union Bound Approach}
For DT systems, it is possible to combine state level bound at one time step using the union bound inequality to establish trajectory level probabilistic bound. 
\begin{proposition} \label{prop: DT union}
    \cite[Theorem 2]{liu2024safety}
    Consider the stochastic system \eqref{sys: d-t ss} and its associated deterministic system \eqref{sys: d-t ds} under Assumption \ref{ass: Lipschitz f} and \ref{ass: bounded sigma}. Let $X_t$ be the trajectory of \eqref{sys: d-t ss} and $x_t$ be the associated deterministic trajectory of \eqref{sys: d-t ds}. Given any $t\geq0$, $\delta\in(0,1)$ and $\varepsilon\in(0,1)$, define
    \begin{equation}\label{eq: r union bound}
        r_{\delta,t}=
      \sqrt{\frac{\sigma^2(L^{2t}-1)}{L^2-1}(\varepsilon_1n+\varepsilon_2\log(T/\delta))},
    \end{equation}
   where $\varepsilon_1,\varepsilon_2$ are as in \eqref{eq: epsilon val}. Then 
   \begin{equation}
\mathbb{P}\left(\|X_t-x_t\|\leq r_{\delta,t}, ~\forall t\leq T\right)\geq 1-\delta.
   \end{equation}
\end{proposition}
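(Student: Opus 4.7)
The plan is to obtain the trajectory-level statement by combining the single-time-step bound of Proposition \ref{prop: DT single} with a union bound over the finitely many discrete instants $t=0,1,\dots,T$. The key observation is that, since the system is discrete-time, the event $\{\|X_t-x_t\|\le r_{\delta,t},\,\forall t\le T\}$ is the intersection of only $T+1$ events, so a standard union bound is finite and tight.

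First I would invoke Proposition \ref{prop: DT single} at each fixed $t\in\{1,\dots,T\}$, but with the confidence parameter replaced by $\delta/T$ instead of $\delta$. Substituting $\delta\mapsto\delta/T$ into the bound \eqref{eq: prop2} gives exactly the radius
\begin{equation*}
r_{\delta,t}=\sqrt{\frac{\sigma^2(L^{2t}-1)}{L^2-1}\bigl(\varepsilon_1 n+\varepsilon_2\log(T/\delta)\bigr)},
\end{equation*}
which matches the expression \eqref{eq: r union bound} in the statement. Hence Proposition \ref{prop: DT single} yields $\mbP(\|X_t-x_t\|>r_{\delta,t})\le \delta/T$ for every fixed $t\in\{1,\dots,T\}$. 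At $t=0$ we have $X_0=x_0$ by definition of associated trajectories, so $\|X_0-x_0\|=0\le r_{\delta,0}=0$ holds deterministically and contributes nothing to the failure probability.

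Next I would take a union bound over the $T$ nontrivial time instants:
\begin{equation*}
\mbP\!\left(\exists\,t\le T:\|X_t-x_t\|>r_{\delta,t}\right)\le\sum_{t=1}^{T}\mbP\!\left(\|X_t-x_t\|>r_{\delta,t}\right)\le T\cdot\frac{\delta}{T}=\delta.
\end{equation*}
Passing to the complementary event yields the claimed trajectory-level probability $\mbP(\|X_t-x_t\|\le r_{\delta,t},\,\forall t\le T)\ge 1-\delta$.

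I do not expect any real obstacle: the argument is essentially a one-line application of the union bound to Proposition \ref{prop: DT single}, exploiting the discreteness of time. The only points to be careful about are (i) correctly rescaling the per-step confidence from $\delta$ to $\delta/T$ so that the logarithmic term becomes $\log(T/\delta)$, and (ii) noting that the $t=0$ event is trivial, so the sum in the union bound has exactly $T$ terms. This is precisely why such an approach fails in continuous time, where an uncountable union would be required, motivating the martingale-based tubes developed later in the paper.
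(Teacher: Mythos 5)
Your proof is correct and follows exactly the route the paper ascribes to this result: apply the single-time-step bound of Proposition \ref{prop: DT single} with confidence $\delta/T$ at each of the $T$ nontrivial discrete instants (the $t=0$ event being vacuous since $X_0=x_0$ and $r_{\delta,0}=0$), then take a union bound, which is precisely why the radius acquires the $\log(T/\delta)$ term the paper highlights. The paper itself only cites this proposition from prior work rather than reproving it, but your argument is the intended one and is complete.
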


Compared to the state level bound \eqref{eq: prop2}, the trajectory level bound \eqref{eq: r union bound} involves an additional $\mO(\sqrt{\log T})$ term introduced by the union bound inequality. This additional factor can make the PT conservative \cite{frey2020collision} for long horizon tasks where $T$ is large. 
Moreover, this union bound approach is not applicable to CT systems.

\section{Probabilistic Tube of CT Systems} \label{sec: CT bound}
In this work, we present a martingale-based approach for Problem \ref{prob: PT} that is applicable to both CT systems (Section \ref{sec: CT bound}-\ref{sec: improved CT bound}) and DT systems (Section \ref{sec: DT bound}). Our approach to bound PT relies on a generalized notion of martingale and a novel energy function for applying this martingale.

\subsection{Affine Martingale}
We begin by introducing a generalization of $c$-martingale \cite{steinhardt2012finite} dubbed affine martingale.
\begin{definition} [CT Affine Martingale] \label{def: CT AM}
   For a CT stochastic process $\{v_t\}$, a nonnegative function $M(v,t):\R^n\times[0,T]\to\R_{\geq0}$ is said to be an affine martingale (AM) of $\{v_t\}$ if there exist $a_t,b_t\in\R$ such that
   \begin{equation}\label{eq: def CT AM}
       \frac{\expect{M(v_{t+\dt},t+\dt)|v_t}-M(v_t,t)}{\dt}\leq a_tM(v_t,t)+b_t
   \end{equation}
for all $t$.
\end{definition}
When $a_t\equiv0$, AM reduces to the classical $c$-martingale, and when $a_t,b_t\equiv0$, AM reduces to super-martingale. Analogous notions of martingales have been used in the literature to study safety of stochastic systems including~\cite{santoyo2021barrier,cosner2023robust}.
Similar to all other existing semi-martingales, following the Doob's martingale inequality, one can construct a sublevel set based on the AM and quantify the probability of the trajectory $\{v_t\}$ staying in the set. This is formalized in the following lemma (proof in Appendix \ref{app: lemma CT-AM}).
\begin{lemma}\label{lemma: CT-AM}
    Consider a CT stochastic trajectory $\{v_t\}$. Let $M(v,t)$ be an AM of $\{v_t\}$ over $[0,T]$ with coefficients $a_t, b_t$.
    Define 
    \begin{equation}\label{eq: CT-tildeM}
        \widetilde{M}(v_t,t)=M(v_t,t)\psi_t+\int_t^T b_\tau\psi_\tau \dd\tau,
    \end{equation}
    where $\psi_t=e^{\int_t^Ta_\tau\dd\tau}$. Then given any $\overline{M}>0$ and the set $\mV_t=\{v:~\widetilde{M}(v_t,t)\leq \overline{M}\}$, it holds that
    \begin{equation*}
        \prob{v_t\in\mV_t, \forall t\leq T}\geq 1-\frac{M(v_0,0)\psi_0+\int_0^Tb_\tau\psi_\tau\dd\tau}{\overline{M}}.
    \end{equation*}
\end{lemma}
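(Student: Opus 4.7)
The plan is to reduce the result to the standard Doob's martingale inequality for nonnegative super-martingales by showing that the rescaled process $\widetilde{M}(v_t,t)$ is itself a super-martingale along the trajectory. The key observation is that the construction in \eqref{eq: CT-tildeM} precisely cancels the affine drift term on the right-hand side of \eqref{eq: def CT AM}, converting the AM into a super-martingale. This is analogous to the standard trick of ``integrating factor'' used to turn a $c$-martingale into a true super-martingale, but now applied to the more general affine case.

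First I would compute the drift of $t\mapsto \widetilde{M}(v_t,t)$. Since $\widetilde{M}$ depends on time both through $v_t$ and explicitly through the deterministic factors $\psi_t$ and $\int_t^T b_\tau \psi_\tau \dd\tau$, splitting these contributions using the product rule gives a drift of the form
\begin{equation*}
\psi_t \cdot \text{drift}(M(v_t,t)) + M(v_t,t)\dot\psi_t - b_t\psi_t.
\end{equation*}
Plugging in $\dot\psi_t = -a_t\psi_t$ (from $\psi_t=e^{\int_t^T a_\tau\dd\tau}$) and the AM bound $\text{drift}(M)\leq a_t M(v_t,t)+b_t$ from \eqref{eq: def CT AM}, the $a_t\psi_t M$ terms cancel, the $b_t\psi_t$ terms cancel, and we are left with drift $\leq 0$. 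Hence $\widetilde{M}(v_t,t)$ is a super-martingale with respect to the natural filtration of $\{v_t\}$.

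Next I would apply Doob's maximal inequality to this nonnegative super-martingale. Since $M\geq 0$, $\psi_t>0$, and (in the intended applications) $\int_t^T b_\tau\psi_\tau\dd\tau\geq 0$, the process $\widetilde{M}$ is nonnegative, so Doob's inequality yields
\begin{equation*}
\mathbb{P}\Bigl(\sup_{t\leq T}\widetilde{M}(v_t,t)\geq \overline{M}\Bigr) \leq \frac{\widetilde{M}(v_0,0)}{\overline{M}} = \frac{M(v_0,0)\psi_0+\int_0^T b_\tau \psi_\tau\dd\tau}{\overline{M}}.
\end{equation*}
Taking the complement event gives exactly the claimed lower bound on $\mathbb{P}(v_t\in\mV_t,\,\forall t\leq T)$.

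The main obstacle is the first step: writing down the drift rigorously in the continuous-time setting. The AM definition in \eqref{eq: def CT AM} is stated only as a conditional expectation limit, not via an Itô-type generator, so some care is needed to justify the product-rule manipulation — in particular, to argue that multiplying by the deterministic smooth factor $\psi_t$ and subtracting the deterministic quantity $\int_t^T b_\tau \psi_\tau\dd\tau$ preserves the inequality at the drift level. Once this cancellation is verified, Doob's inequality closes the argument immediately.
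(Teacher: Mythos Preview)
Your proposal is correct and follows essentially the same route as the paper: the paper also shows that $\widetilde{M}(v_t,t)$ is a super-martingale via the product-rule cancellation $\dot\psi_t=-a_t\psi_t$ against the affine drift $a_tM+b_t$, and then invokes Doob's inequality. The only cosmetic difference is that the paper carries out the cancellation at the level of the unconditional expectation $\frac{\mathrm{d}}{\mathrm{d}t}\mathbb{E}[\widetilde{M}(v_t,t)]$ rather than directly at the conditional/drift level as you do; your version is arguably the more rigorous way to verify the super-martingale property needed for Doob.
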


Lemma \ref{lemma: CT-AM} implies that the probability of the stochastic trajectory $S_t=X_t-x_t$ for associated $X_t, x_t$ stays inside the set $\mV_t$ defined above is lower bounded given an AM $M(X_t-x_t,t)$. If one can construct an AM $M(X_t-x_t,t)$ such that the set $\mV_t$ corresponds to and is contained in a PT for stochastic system \eqref{sys: c-t ss}, then Lemma \ref{lemma: CT-AM} points to a solution to Problem \ref{prob: PT}.

\subsection{Energy Function of AMGF}
So how do we construct such an AM to bound PT? In general, finding a suitable martingale in real-world applications is known to be a challenging problem. Herein, we present an AM based on a relaxed version of moment generating function known as the averaged moment generating function (AMGF). We demonstrate that the energy function associated with the AMGF is an AM for general nonlinear stochastic system \eqref{sys: c-t ss} under Assumption \ref{as: boundness}.
Recall the definition of AMGF and its energy function.
\begin{definition}[AMGF \& Energy Function]\label{def: amgf}
Given a constant $\lambda\in\R$, the Averaged Moment Generating Function (AMGF) $\mbE_X(\Phi_{n,\lambda}): \R^n\to\R$ is defined as
    \begin{equation}\label{eq: AMGF}
        \mbE_X(\Phi_{n,\lambda}(X))\defeq \mbE_X\expectw{\ell\sim\mS^{n-1}}{e^{\lambda\innerp{\ell,X}}},
    \end{equation}
    where 
    \begin{equation}\label{eq:energyAMGF}
    \amgf{X}=\expectw{\ell\sim\mS^{n-1}}{e^{\lambda\innerp{\ell,X}}}
    \end{equation}
is called the Energy Function of AMGF.
\end{definition}

The AMGF was recently proposed in \cite{altschuler2022concentration} to study sampling problems, and is a core mathematical tool in our earlier work in reachability analysis \cite{szy2024TAC,szy2024Auto}. Its energy function $\Phi_{n,\lambda}$ has several intriguing properties, as summarized in the lemma below.  
We refer the reader to \cite[Section V.B]{szy2024TAC} for the proof of Lemma \ref{lemma: AMGF_prop}-\ref{p1:AMGF}-\ref{p2:AMGF}), \cite{liu2024norm} for the proof of Lemma \ref{lemma: AMGF_prop}-\ref{p4:AMGF}), and \cite[Section 4.2]{szy2024Auto} for the proof of Lemma \ref{lemma: AMGF_prop}-\ref{p5:AMGF}).
\begin{lemma}\label{lemma: AMGF_prop}
    Consider the energy function $\amgf{x}$ defined in \eqref{eq: AMGF}. The following statements hold. 
    \begin{enumerate}
        \item\label{p1:AMGF} {\it Rotation invariance:} For any $x\in\R^n$ and $\ell\in\mS^{n-1}$, 
        \begin{align*}
            \amgf{x} = \amgf{\|x\|\cdot\ell}.
        \end{align*}
        \item\label{p2:AMGF} {\it Monotonicity:} For any $x,y\in\R^n$ such that $\|x\|\leq\|y\|$, 
        $$ 1\le\amgf{x}\leq \amgf{y}.$$
        \item \label{p4:AMGF} {\it Exponential Growth:} For every dimension $n$, any $x\in\R^n$, and any given $\varepsilon\in(0,1)$,
        $$\amgf{x}\geq (1-\varepsilon^2)^{\frac{n}{2}}e^{\varepsilon \|\lambda x\|}.$$
        \item \label{p5:AMGF} {\it Sub-Gaussian Decoupling:} Given $x\in\R^n$ and $w\sim subG(\vartheta^2)$,
        \begin{equation*}
            \mathbb{E}_w(\amgf{x+w})\leq e^{\frac{\lambda^2\vartheta^2}{2}}\amgf{x}.
        \end{equation*} 
    \end{enumerate}
\end{lemma}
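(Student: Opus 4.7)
The plan is to prove the four properties in sequence, reusing earlier ones as needed. The unifying tool is the rotational symmetry of the uniform measure on $\mS^{n-1}$, which lets me reduce every claim to a one-dimensional integral in the coordinate $\ell_1 = \innerp{\ell,\, x/\|x\|}$, whose density on $[-1,1]$ is proportional to $(1-s^2)^{(n-3)/2}$.

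For \emph{rotation invariance}, I would pick any orthogonal map $R$ sending $x/\|x\|$ to a chosen unit vector, then use that $R\ell$ has the same uniform law as $\ell$ on $\mS^{n-1}$ to conclude that $\amgf{x}$ depends only on $\|x\|$; the case $x=0$ is trivial. For \emph{monotonicity}, $\amgf{x}\geq 1$ follows from Jensen's inequality combined with $\mbE_\ell[\ell]=0$. For the comparison under $\|x\|\leq\|y\|$, rotation invariance reduces the question to the scalar function $f(t)=\mbE[e^{\lambda t\ell_1}]$; since the density of $\ell_1$ is even, so is $f$, and differentiating yields $f'(t)=2\lambda\int_0^1 s\sinh(\lambda t s)\,p(s)\,\mathrm{d}s\geq 0$ for $t\geq 0$, giving monotonicity in $\|x\|$.

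For \emph{exponential growth}, I would again reduce to one dimension and use symmetry of $\ell_1$ to write $\amgf{x}=\mbE[\cosh(\lambda\|x\|\ell_1)]$. Restricting the integral to the spherical cap $\{\ell_1\geq\varepsilon\}$ and lower-bounding the integrand there by $\tfrac12 e^{\varepsilon\|\lambda x\|}$, the residual task is to estimate the cap measure. Using the explicit density $p(s)\propto(1-s^2)^{(n-3)/2}$ together with a suitable change of variables, this cap probability can be controlled from below by an expression of the form $(1-\varepsilon^2)^{n/2}$, which delivers the target inequality.

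Finally, \emph{sub-Gaussian decoupling} is Fubini plus the defining MGF bound. Swapping the order of expectations gives $\mbE_w[\amgf{x+w}]=\mbE_\ell\!\left[e^{\lambda\innerp{\ell,x}}\,\mbE_w[e^{\lambda\innerp{\ell,w}}]\right]$; since $\ell\in\mS^{n-1}$, Definition \ref{def: subG} yields $\mbE_w[e^{\lambda\innerp{\ell,w}}]\leq e^{\lambda^2\vartheta^2/2}$ uniformly in $\ell$, which factors out of the outer expectation. I expect the main obstacle to be property 3: matching the exact prefactor $(1-\varepsilon^2)^{n/2}$ in the cap estimate rather than an off-by-one variant requires careful bookkeeping of the normalizing Beta constant, whereas the other three properties follow cleanly from symmetry, Jensen, and Fubini.
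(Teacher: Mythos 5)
The paper does not prove this lemma in-house; it defers parts \ref{p1:AMGF}--\ref{p2:AMGF} to \cite[Section V.B]{szy2024TAC}, part \ref{p4:AMGF} to \cite{liu2024norm}, and part \ref{p5:AMGF} to \cite[Section 4.2]{szy2024Auto}. Your arguments for rotation invariance, monotonicity, and sub-Gaussian decoupling are correct and are essentially the standard ones those references use: rotation invariance of the uniform measure on $\mS^{n-1}$; Jensen with $\mbE_\ell[\ell]=0$ plus the evenness and monotonicity of the one-dimensional profile $f(t)=\mbE[e^{\lambda t\ell_1}]$ (your formula $f'(t)=2\lambda\int_0^1 s\sinh(\lambda ts)p(s)\,\mathrm{d}s\ge 0$ checks out); and Tonelli/Fubini combined with Definition \ref{def: subG} applied to the unit vector $\ell$, which is exactly the decoupling proof.

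Part \ref{p4:AMGF} is where your plan has a genuine gap, and it is not the ``bookkeeping of the Beta constant'' you anticipate. Writing $\rho=\|\lambda x\|$, the cap argument can only ever certify $\amgf{x}\ge e^{\varepsilon\rho}\,\mbP(\ell_1\ge\varepsilon)$ (the factor $\tfrac12$ from $\cosh$ and the two-sided cap cancel, so this is the best the truncation gives). You would therefore need $\mbP(\ell_1\ge\varepsilon)\ge(1-\varepsilon^2)^{n/2}$ uniformly, and this is false: $\mbP(\ell_1\ge\varepsilon)<\mbP(\ell_1\ge 0)=\tfrac12$ for every $\varepsilon>0$ and every $n$, whereas $(1-\varepsilon^2)^{n/2}\to 1$ as $\varepsilon\to 0^+$; already for $n=3$ the required inequality fails for all $\varepsilon\lesssim 0.61$, and the paper's experiments use $\varepsilon=1/16$. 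The failure is structural: at $x=0$ the claimed bound $1\ge(1-\varepsilon^2)^{n/2}$ is nearly an equality for small $\varepsilon$, but any truncation to a cap throws away a constant fraction of the mass and can only certify $\amgf{0}\ge\mbP(\ell_1\ge\varepsilon)\le\tfrac12$. Since the whole value of the lemma is the sharp prefactor (it is what produces $\varepsilon_1=\log(\tfrac{1}{1-\varepsilon^2})/\varepsilon^2$ in the tube radius), losing a dimension-independent constant here is a real loss, not a cosmetic one. A correct proof must be exact at $\rho=0$; the cited reference obtains the constant by a change-of-measure/variational argument on the sphere (equivalently, a term-by-term comparison of the moment series of $\mbE[e^{\rho\ell_1}]$ against that of $(1-\varepsilon^2)^{n/2}e^{\varepsilon\rho}$ optimized over $\varepsilon$), not by restricting the integral to a spherical cap. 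As a sanity check that truncation cannot be repaired, note that even in the degenerate case $n=1$ one needs $\cosh(\rho)\ge\sqrt{1-\varepsilon^2}\,e^{\varepsilon\rho}$, which holds but requires optimizing $\cosh(\rho)e^{-\varepsilon\rho}$ exactly at $\tanh\rho=\varepsilon$ rather than keeping only the atom at $+1$.
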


\subsection{AM-Based Probabilistic Tube}
Equipped with the energy function $\amgf{x}$ of AMGF, we are ready to present a PT for the CT stochastic system \eqref{sys: c-t ss}. The derivation of the bound is based on the utilization of AM and the properties of $\amgf{x}$ in Lemma \ref{lemma: AMGF_prop}. Essentially, we show $\amgf{X_t-x_t}$ is an AM in Theorem \ref{thm: CT bound}. The tightness and conservativeness of the bound are discussed following the proof of Theorem \ref{thm: CT bound}.
\begin{thm} \label{thm: CT bound}
    Consider the CT stochastic system \eqref{sys: c-t ss} and its associated deterministic system \eqref{sys: c-t ds} under Assumption \ref{as: boundness}. Let $X_t$ be the trajectory of \eqref{sys: d-t ss} and $x_t$ be its associated deterministic trajectory over a time horizon $[0,T]$. Given $\delta\in(0,1)$ and $\varepsilon\in(0,1)$, define
    \begin{equation}\label{eq: r CT}
        r_{\delta,t}=
      e^{ct}\sigma\sqrt{\frac{1-e^{-2cT}}{2c}(\varepsilon_1n+\varepsilon_2\log(1/\delta))}
    \end{equation}
   for any $t\in[0,T]$ where $\varepsilon_1$, $\varepsilon_2$ are as \eqref{eq: epsilon val}. Then 
   \begin{equation}\label{eq:result}
\mathbb{P}\left(\|X_t-x_t\|\leq r_{\delta,t}, ~\forall t\leq T\right)\geq 1-\delta.
   \end{equation} 
\end{thm}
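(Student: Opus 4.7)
The plan is to construct a concrete affine martingale for the difference process $S_t = X_t - x_t$ based on a time-rescaled AMGF energy function, invoke Lemma~\ref{lemma: CT-AM}, and invert the resulting sublevel-set inequality via the exponential-growth property (Lemma~\ref{lemma: AMGF_prop}-\ref{p4:AMGF}). The key design choice is to absorb the $c$-growth of the drift into the energy function itself, so that only the diffusion drives the growth of the martingale. Concretely, I would take
\begin{equation*}
M(v,t) \defeq \Phi_{n,\lambda}\bigl(e^{-ct}\,v\bigr)
\end{equation*}
for a constant $\lambda>0$ to be optimized at the end, and apply Lemma~\ref{lemma: CT-AM} with $v_t=S_t$.

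First I would compute the infinitesimal generator of $M(S_t,t)$ via It\^o's formula. Writing $Y_t = e^{-ct}S_t$, the dynamics of $S_t$ give $dY_t = e^{-ct}\bigl(f(X_t,d_t,t)-f(x_t,d_t,t) - c\,S_t\bigr)\dt + e^{-ct}g_t(X_t)\dW_t$. Using rotation invariance of $\Phi_{n,\lambda}$ so that $\nabla\Phi_{n,\lambda}(Y)=\Phi_{n,\lambda}'(\|Y\|)\,Y/\|Y\|$, the first-order drift becomes
\begin{equation*}
\Phi_{n,\lambda}'(\|Y_t\|)\,\frac{e^{-ct}\,S_t^{\top}\bigl(f(X_t)-f(x_t) - c\,S_t\bigr)}{\|S_t\|}.
\end{equation*}
The integral form $f(X_t)-f(x_t)=\bigl(\int_0^1 D_xf(x_t+sS_t,d_t,t)\,ds\bigr)S_t$ together with Assumption~\ref{as: boundness}-(1) (which gives $u^\top D_xf(\cdot)\,u \le c\|u\|^2$ for every $u$) implies $S_t^{\top}(f(X_t)-f(x_t)-cS_t)\le 0$, so this entire first-order contribution is nonpositive. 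For the second-order It\^o term, Assumption~\ref{as: boundness}-(2) gives $g_tg_t^\top\preceq\sigma^2I_n$, while a direct calculation on $\Phi_{n,\lambda}$ shows $\mathrm{tr}\bigl(\nabla^2\Phi_{n,\lambda}(y)\bigr)=\lambda^2\Phi_{n,\lambda}(y)$. Combining with the $e^{-2ct}$ factor from the chain rule yields
\begin{equation*}
\frac{\mbE[dM(S_t,t)\mid S_t]}{\dt}\le \frac{\sigma^2\lambda^2e^{-2ct}}{2}\,M(S_t,t),
\end{equation*}
so $M$ is an AM in the sense of Definition~\ref{def: CT AM} with $a_t=\tfrac{\sigma^2\lambda^2}{2}e^{-2ct}$ and $b_t=0$.

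Next I would apply Lemma~\ref{lemma: CT-AM} with the choice $\overline{M} = \psi_0/\delta$, noting that $M(S_0,0)=\Phi_{n,\lambda}(0)=1$ so that the failure probability is exactly $\delta$. The resulting event is $\Phi_{n,\lambda}(e^{-ct}S_t)\le \psi_0/(\delta\psi_t)$ for all $t\le T$, where $\psi_0/\psi_t = \exp\!\bigl(\tfrac{\sigma^2\lambda^2}{4c}(1-e^{-2ct})\bigr)$. Applying Lemma~\ref{lemma: AMGF_prop}-\ref{p4:AMGF} (exponential growth) inverts this into
\begin{equation*}
\|S_t\|\le \frac{e^{ct}}{\varepsilon\lambda}\left[\log\tfrac{1}{\delta}+\tfrac{\sigma^2\lambda^2}{4c}(1-e^{-2ct})+\tfrac{n}{2}\log\tfrac{1}{1-\varepsilon^2}\right].
\end{equation*}
Using the elementary monotonicity $(1-e^{-2ct})/(2c)\le(1-e^{-2cT})/(2c)$ for $t\le T$ (valid for every $c\in\R$), and then minimizing the right-hand side over $\lambda$ by the AM-GM choice $\lambda=\sqrt{A/B}$ with $A=\log(1/\delta)+\tfrac{n}{2}\log\tfrac{1}{1-\varepsilon^2}$ and $B=\sigma^2(1-e^{-2cT})/(4c)$, produces exactly the $\varepsilon_1,\varepsilon_2$ stated in \eqref{eq: epsilon val} and the target radius in \eqref{eq: r CT}.

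The main technical hurdle is verifying the vanishing of the rescaled drift: the cancellation between the $-cS_t$ term introduced by the $e^{-ct}$ rescaling and the matrix-measure bound on $f(X_t)-f(x_t)$ must produce a nonpositive inner product for \emph{every} realization of the joint trajectory, not merely in expectation. All other steps are essentially mechanical once $M(v,t)$ is chosen correctly: the trace identity for $\nabla^2\Phi_{n,\lambda}$, the exponential growth inversion, and the final $\lambda$-optimization are straightforward. A secondary subtlety is handling the sign of $c$ uniformly (including the $c=0$ limit), but the monotonicity of $(1-e^{-2ct})/(2c)$ holds in all regimes and absorbs this.
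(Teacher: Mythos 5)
Your proposal is correct and follows essentially the same route as the paper: the AMGF energy function composed with the $e^{-ct}$ rescaling is exactly the paper's construction (the paper rescales the state to $\tilde S_t=e^{-ct}S_t$ and shows $\Phi_{n,\lambda}(\tilde S_t)$ is an AM with $a_t=\lambda^2\sigma^2e^{-2ct}/2$, which is identical to your choice of $M(v,t)=\Phi_{n,\lambda}(e^{-ct}v)$), and the subsequent steps — Lemma \ref{lemma: CT-AM}, the exponential-growth inversion, and the $\lambda$-optimization yielding $\varepsilon_1,\varepsilon_2$ — match the paper's. The only cosmetic differences are that you verify the nonpositivity of the first-order drift term directly via the radial form of $\nabla\Phi_{n,\lambda}$ and the matrix-measure bound, where the paper cites an external lemma, and that you treat general $c$ in one pass rather than first proving the $c=0$ case.
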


\begin{proof}
We start with a special case where Assumption \ref{as: boundness} holds with a matrix measure bound $c=0$ and then generalize it to cases where Assumption \ref{as: boundness} holds with arbitrary $c$.

\subsubsection{Special Case}
Denote $S_t=X_t-x_t$ and $\beta_t=f(X_t,d_t,t)-f(x_t,d_t,t)$, then
\begin{equation}\label{eq: Xt-xt sde}
     \dd S_{t}= \beta_t\dt+ g_t(X_t)\dW_t.
\end{equation} 
Using the Fokker–Planck equation \cite{sarkka2019applied}, $\amgf{S_t}$ satisfies 
\begin{equation}\label{eq: FPK Eh(t)}
\begin{split}
    &\frac{\expect{\amgf{S_{t+\dt}}|S_t}-\amgf{S_t}}{\dt} \\
    =&\innerp{\nabla\amgf{S_t},\beta_t} 
        +\tfrac{1}{2}\innerp{\nabla^2\amgf{S_t},g_t(X_t)g_t(X_t)^{\top}}.
\end{split}
\end{equation}
By differentiating the energy function of AMGF \eqref{eq:energyAMGF}, we get 
\begin{equation}\label{eq: cal part 1 FPK}
        \innerp{\nabla\amgf{S_t},\beta_t} 
        = \mbE_{\ell\sim\mS^{n-1}} \left(e^{\lambda\innerp{\ell,S_t}}\lambda\ell^{\top}\beta_t \right).
\end{equation}
On the other hand, using \cite[Lemma V.3]{szy2024TAC}, for every $t\ge 0$, 
\begin{equation}\label{eq: part 1 FPK}
    \mbE_{\ell\sim\mS^{n-1}}\left(e^{\lambda\innerp{\ell,S_t}}\lambda\ell^{\top}\beta_t\right)\leq 0.
\end{equation}
This implies that, for every $t\ge 0$,
\begin{align}\label{eq:1}
    \innerp{\nabla\amgf{S_t},\beta_t} \le 0
\end{align}  
Furthermore, the term $\frac{1}{2}\expect{\innerp{\nabla^2\amgf{S_t},g_t(X_t)g_t(X_t)^{\top}}}$ can be bounded as follows:
\begin{equation}\label{eq: part 2 FPK}
    \begin{split}
        \tfrac{1}{2}&\innerp{\nabla^2\amgf{S_t},g_t(X_t)g_t(X_t)^{\top}} \\
        =&\tfrac{1}{2} \expectw{\ell\sim\mS^{n-1}}{\innerp{\lambda^2e^{\lambda\innerp{\ell,S_t}}\ell\ell^{\top},g_t(X_t)g_t(X_t)^{\top}}} \\
        \leq& \tfrac{1}{2}\expectw{\ell\sim\mS^{n-1}}{\lambda^2e^{\lambda\innerp{\ell,S_t}}\tr{\ell\ell^{\top}}\,\|g_t(X_t)g_t(X_t)^{\top}\|} \\
        \leq& \tfrac{\lambda^2\sigma^2}{2}\amgf{S_t},
    \end{split}
\end{equation}
where the first inequality holds by H\"older's inequality and the last inequality holds using the fact that $\tr{\ell\ell^{\top}}=1$ for any $\ell\in\mS^{n-1}$ and $\|g_t(X_t)g_t(X_t)^{\top}\|\leq \sigma^2$ as in Assumption \ref{as: boundness}. By using the inequalities~\eqref{eq:1} and \eqref{eq: part 2 FPK} in equation~\eqref{eq: FPK Eh(t)}, for every $t\ge 0$, we get
\begin{align}
\begin{split}\label{eq: ode c=0}
    \frac{\expect{\amgf{S_{t+\dt}}|S_t}-\amgf{S_t}}{\dt}&\leq \frac{\lambda^2\sigma^2}{2}\amgf{S_t},\\ S_0&=0.
    \end{split}
\end{align}
By Definition \ref{def: CT AM}, the inequality \eqref{eq: ode c=0} implies that $\Phi_{n,\lambda}(S_t)$ is an AM of the stochastic process $\{S_t\}$ with $a_t\equiv\frac{\lambda^2\sigma^2}{2}$ and $b_t\equiv0$. Applying Lemma \ref{lemma: CT-AM} and Lemma \ref{lemma: AMGF_prop}, for every $r\ge 0$, we get
\begin{equation}\label{eq: CT prob |v_t|<r_lam}
    \begin{split}
        &\prob{\|S_t\|\leq r, \forall t\leq T} \\
        =&\prob{\amgf{S_t}\leq \amgf{r\ell}, \forall t\leq T} ~~~~\textbf{(Lemma \ref{lemma: AMGF_prop}-(\ref{p1:AMGF}-\ref{p2:AMGF}))} \\
        =& \prob{e^{\frac{\lambda^2\sigma^2(T-t)}{2}}\amgf{S_t}\leq e^{\frac{\lambda^2\sigma^2(T-t)}{2}}\amgf{r\ell}, \forall t\leq T} \\
        \geq& \prob{e^{\frac{\lambda^2\sigma^2(T-t)}{2}}\amgf{S_t}\leq \amgf{r\ell}, \forall t\leq T} \\
        \geq &1-\frac{e^{\frac{\lambda^2\sigma^2T}{2}}}{\amgf{r\ell}},\quad \forall \ell\in\mS^{n-1} ~~~~~~~~~~~~~~~~\textbf{(Lemma \ref{lemma: CT-AM})} \\
        \geq & 1- (1-\varepsilon^2)^{-\frac{n}{2}}\exp\left(\tfrac{\lambda^2\sigma^2T}{2}-{\varepsilon\lambda r}\right) ~~~~~\textbf{(Lemma \ref{lemma: AMGF_prop}-\ref{p4:AMGF})}
    \end{split}
\end{equation}
Minimizing the last line of \eqref{eq: CT prob |v_t|<r_lam} over $\lambda$, we get $\lambda^{*}=\frac{\varepsilon r}{\sigma^2T}$. Plugging $\lambda=\lambda^{*}$ into \eqref{eq: CT prob |v_t|<r_lam} yields
\begin{equation}\label{eq: CT prob |v_t|<r}
    \prob{\|S_t\|\leq r, \forall t\leq T}\geq 1-(1-\varepsilon^2)^{-\frac{n}{2}}e^{\frac{-\varepsilon^2r^2}{2\sigma^2T}},
\end{equation}
for every $r\ge 0$. For a given $\delta\in(0,1)$, set
\begin{equation} \label{CT r_val c=0}
    r=\sqrt{\tfrac{2\sigma^2T}{\varepsilon^2}(\tfrac{n}{2}\log\left(\tfrac{1}{1-\varepsilon^2}\right)+\log(1/\delta))}
\end{equation}
in the inequality \eqref{eq: CT prob |v_t|<r} and we conclude that
\begin{equation}\label{eq: CT prob |v_t|<r_T}
     \prob{\|S_t\|\leq r, \forall t\leq T}\geq 1-\delta.
\end{equation}
which corresponds to \eqref{eq: r CT} in the case that $c\to0$. 

\subsubsection{General Cases}
Next, we present the proof to the general case for any $c \in \mathbb{R}$ based on that for the special case $c = 0$.
 Define $\tX_t=e^{-ct}X_t$, $\tx_t=e^{-ct}x_t$, $\tS_t=\tX_t-\tx_t$ and $\tilde{h}_t=\mbE(\amgf{\tS_t})$. Then $\tx_t$ is a trajectory of the deterministic system
\begin{equation}\label{sys: ct-asso-ds}
    \dot{\tx}_t=-c\tx_t+e^{-ct}f(e^{ct}\tx_t,d_t,t)\defeq \tilde{f}(\tx_t,d_t,t).
\end{equation}
Similarly, $\tX_t$ satisfies 
\begin{equation}\label{sys: ct-asso-ss}
        \dd\tX_t
        =\tilde{f}(\tX_t,d_t,t)\dt+e^{-ct}g_t(X_t)\dW_t.
\end{equation} 

Clearly, $\tilde{f}$ satisfies Assumption \ref{as: boundness} with $\tilde{c}=0$ \cite[Section V.C.2)]{szy2024TAC}, 
and the diffusion term of \eqref{sys: ct-asso-ss} satisfies $\|e^{-2ct}g_t(X_t)g_t(X_t)^{\top}\|\leq e^{-2ct}\sigma^2\defeq \tilde{\sigma}_t^2$. Thus the derivation for the special case can be applied. Following the same steps as \eqref{eq: Xt-xt sde}-\eqref{eq: ode c=0}, we obtain 
\begin{equation} \label{eq: ode tc=0}
     \der{\tilde{h}_t}{t}\leq \frac{\lambda^2\tilde{\sigma}_t^2}{2}\tilde{h}_t,\qquad \tilde{h}_0=1
\end{equation}
which implies that $\amgf{\tS_t}$ is an AM of the stochastic process $\{\tS_t\}$ with $a_t=\frac{\lambda^2\tilde{\sigma}_t^2}{2}$ and $b_t\equiv0$. Then following the same steps as \eqref{eq: CT prob |v_t|<r_lam}-\eqref{eq: CT prob |v_t|<r_T}, we get
\begin{equation}\label{eq: CT r_scale}
    \prob{\|\tX_t-\tx_t\|\leq \tilde{r}, \forall t\leq T}\geq 1-\delta,
\end{equation}
where $\tilde{r}=\sigma\sqrt{\frac{1-e^{-2cT}}{c\varepsilon^2}(\frac{n}{2}\log\frac{1}{1-\varepsilon^2}+\log(1/\delta))}$. Recalling $X_t=e^{ct}\tX_t$, $x_t=e^{ct}\tx_t$, we conclude that
\begin{equation}\label{eq: CT result}
    \prob{\|X_t-x_t\|\leq r_{\delta,t}, \forall t\leq T}\geq 1-\delta,
\end{equation}
where 
\begin{equation} \label{eq: CT r_t}
    r_{\delta,t}=e^{ct}\sigma\sqrt{\frac{1-e^{-2cT}}{2c\varepsilon^2}(n\log\frac{1}{1-\varepsilon^2}+2\log(1/\delta))}.
\end{equation}

This completes the proof.
\end{proof}

The size $r_{\delta,t}$ of the PT in \eqref{eq: r CT} has $\mO(\sqrt{\log(1/\delta)})$ dependence on the probability level $\delta$, making it ideal for safety verification. It results in a reasonable size of PT even when $\delta$ is really small (e.g., $\sqrt{\log(1/\delta)}=3.71$ when $\delta=10^{-6}$). 

The value of \eqref{eq: r CT} is similar to \eqref{eq: prop1} for any given $t\approx T$, and is the same as \eqref{eq: prop1} when $t=T$. When $c=0$, $r_{\delta,t}$ in \eqref{eq: r CT} becomes a constant $\sigma\sqrt{T(\varepsilon_1n+\varepsilon_2\log(1/\delta))}$ for any $t\leq T$. In this case, Theorem \ref{thm: CT bound} implies $\prob{\sup_{t\leq T}\|X_t-x_t\|\leq r}\geq 1-\delta$ with $r=\sigma\sqrt{T(\varepsilon_1n+\varepsilon_2\log(1/\delta))}$, while Proposition \ref{prop: CT single} points to $\prob{\|X_T-x_T\|\leq r}\geq 1-\delta$. This is surprising as the probabilistic bound in the former is on the trajectory level $\sup_{t\leq T}\|X_t-x_t\|$ while the latter is on the state level $\|X_T-x_T\|$ at one time step. By the tightness of Proposition \ref{prop: CT single}, Theorem \ref{thm: CT bound} is tight when $c=0$. This connection is also closely related to the reflection principle \cite{jacobs2010stochastic}, which connects the probability of a one-dimensional Wiener process crossing a threshold over a time interval with the probability of the same process surpassing the threshold at the terminal time. In fact, when $f\equiv0$, $g_t(X_t)\equiv1$, $n=1$, Theorem \ref{thm: CT bound} recovers the reflection principle. 

For $c\neq 0$, $r_{\delta,t}$ in \eqref{eq: r CT} has coefficient $e^{ct}\sigma\sqrt{\frac{1-e^{-2cT}}{2c}}$ and the bound \eqref{eq: prop1} has coefficient $\sigma\sqrt{\frac{e^{2ct}-1}{2c}}$. When $c>0$, these two coefficients are almost the same. Thus, Theorem \ref{thm: CT bound} is also tight when $c>0$ by the tightness of Proposition \ref{prop: CT single}. However, the bound \eqref{eq: CT r_t} can be conservative when $c<0$ and $T$ is large. In this case, the coefficient $e^{ct}\sigma\sqrt{\frac{1-e^{-2cT}}{2c}}$ can be significantly larger than coefficient $\sigma\sqrt{\frac{e^{2ct}-1}{2c}}$, especially when $t\ll T$, meaning the trajectory level probabilistic bound in Theorem \ref{thm: CT bound} is much worse than the state level probabilistic bound in Proposition \ref{prop: CT single}. 

To illustrate the bound \eqref{eq: r CT}, consider the linear system 
\begin{equation}\label{sys: lin}
    \dd X_t = cX_t\dt + \sigma\dW_t,\quad X_t\in\R,~ X_0=0,~\sigma=\sqrt{0.1},
\end{equation}
whose associated deterministic trajectory $x_t\equiv0$. The trajectories of $\|X_t\|=\|X_t-x_t\|$ and the plot of $r_{\delta,t}$ calculated by \eqref{eq: r CT} under different $c$ and $T$ are displayed in Figure \ref{fig:linear eg plots}. As shown in Figure \ref{fig:linear eg plots}(a)-(b), when $c\ge 0$, these trajectories are tightly bounded by $r_{\delta,t}$ as expected. When $c<0$, the derived bound is overly conservative when $T$ is large and $t\ll T$, as shown in Figure \ref{fig:linear eg plots}(c).

\begin{figure}[t]
	\centering
        \begin{subfigure}[t]{0.24\textwidth}
			\centering
			\includegraphics[width=1.05\textwidth]{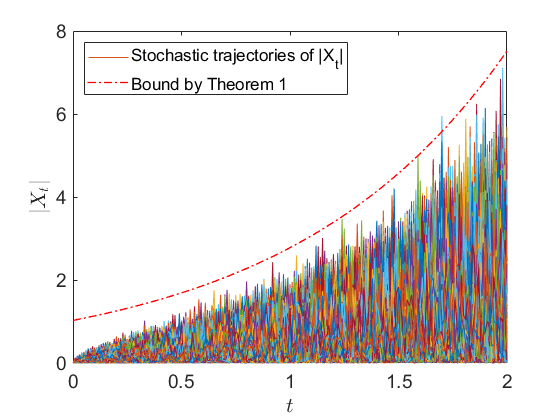}
			\caption{$c=1$, $T=2$}
	\end{subfigure}%
        \begin{subfigure}[t]{0.24\textwidth}
		\centering
		\includegraphics[width=1.05\textwidth]{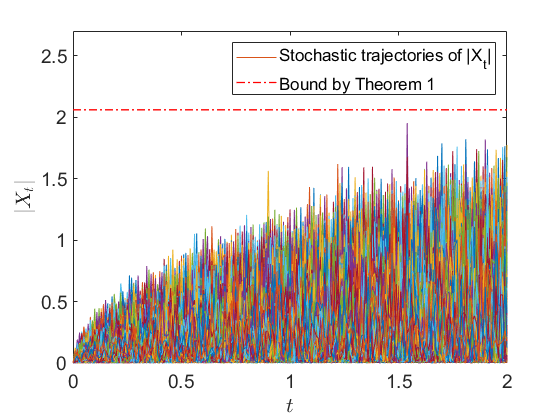}
		\caption{$c=0$, $T=2$}
	\end{subfigure}%

    \begin{subfigure}[t]{0.24\textwidth}
		\centering
		\includegraphics[width=1.05\textwidth]{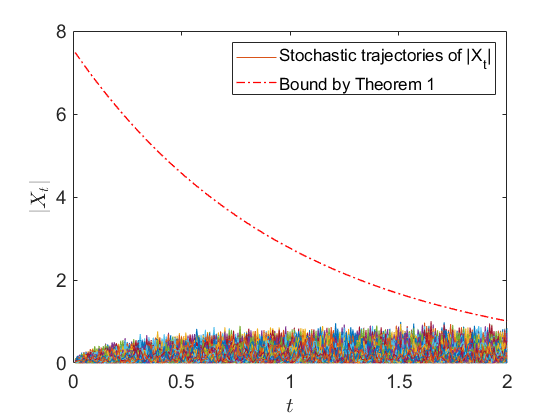}
		\caption{$c=-1$, $T=2$}
	\end{subfigure}%
    \begin{subfigure}[t]{0.24\textwidth}
		\centering
		\includegraphics[width=1.05\textwidth]{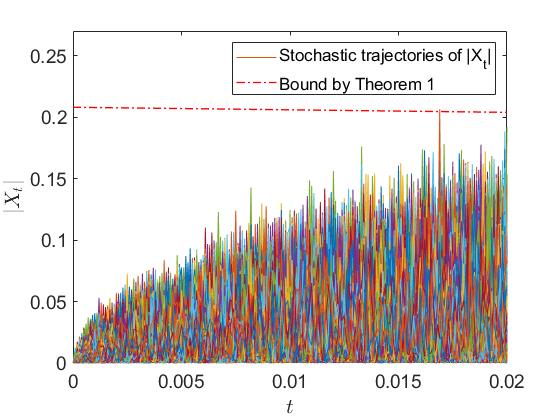}
		\caption{$c=-1$, $T=0.02$}
	\end{subfigure}%
	\caption{Figures of trajectories $\|X_t\|$ of the linear system \eqref{sys: lin} . Each figure contains 5000 independent trajectories and the $r_{\delta,t}$ calculated by \eqref{eq: r CT} with $\delta=10^{-3}$ and $\varepsilon=1/16$, but under different $c$ and $T$.}
\label{fig:linear eg plots}
\end{figure}

\section{Modified Probabilistic Tube for CT Contractive Systems} \label{sec: improved CT bound}

\begin{figure}[tbp]
 \centering
\includegraphics[width =0.49\linewidth]{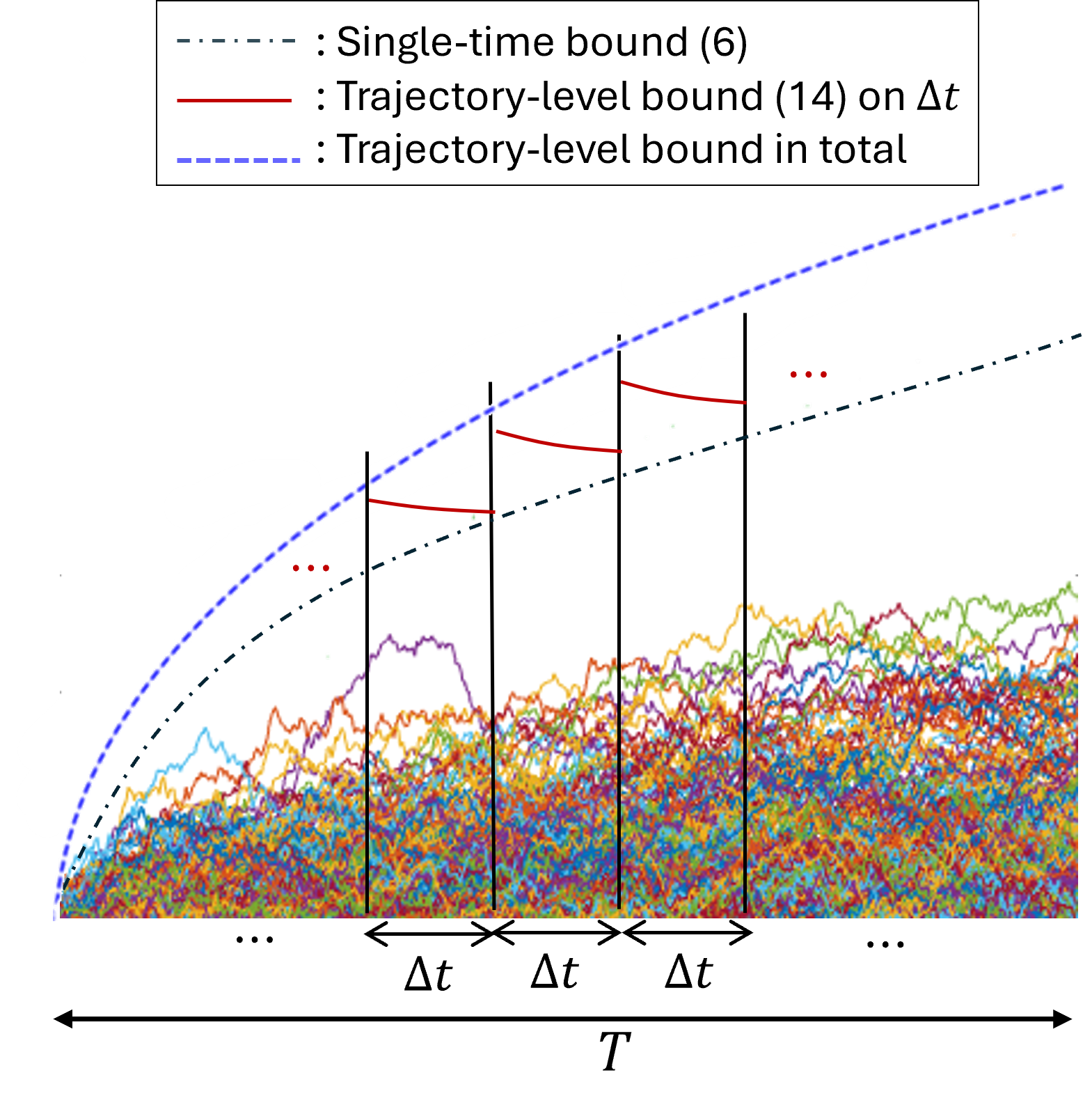}
  \caption{The strategy of modifying PT for contractive systems. The black dashed line is the probabilistic bound \eqref{eq: prop1} on stochastic deviation, the short red curves are the bound \eqref{eq: r CT} on PT over a short time window, and the blue dashed line is the modified PT.}
	\label{fig: improve tech}
 \end{figure}

As discussed above, for strictly contractive systems, which refer to systems that satisfy Assumption \ref{as: boundness} with $c<0$, the bound \eqref{eq: r CT} of PT can be conservative, especially when $T$ is large and $t\ll T$. This conservativeness turns out to be caused by the proof techniques used in Theorem \ref{thm: CT bound} and becomes less severe when $T$ is small, as illustrated in Figure \ref{fig:linear eg plots}(d). 

Based on this observation, we propose a modified bound of PT for contractive systems that only applies Theorem \ref{thm: CT bound} over small intervals. Our strategy is illustrated in Figure \ref{fig: improve tech}. We first split the time horizon $[0,T]$ into multiple shorter intervals. The probabilistic bounds of $\|X_t-x_t\|$ at the endpoints of these time intervals can be obtained by Proposition \ref{prop: CT single}. Within each time interval, a probabilistic bound of PT can be established using Theorem \ref{thm: CT bound}. Finally, these probabilistic bounds can be combined using the union bound inequality to get a probabilistic bound of PT over $[0,T]$, as summarized in the following theorem.
\begin{thm}\label{thm: improved CT bound}
     Consider the CT stochastic system \eqref{sys: c-t ss} and its associated deterministic system \eqref{sys: c-t ds} under Assumption \ref{as: boundness} with $c<0$. Let $X_t$ be the trajectory of \eqref{sys: d-t ss} and $x_t$ be its associated deterministic trajectory over a finite time horizon $[0,\,T]$. Given $\delta\in(0,1)$, $\Delta t\in(0,T)$ and tunable parameter $\varepsilon\in(0,1)$, let
    \begin{equation*}\label{eq: improved r CT}
        r_{\delta,t} =\frac{\sigma(\sqrt{1-e^{2ct}}+\sqrt{e^{-2c\Delta t}-1})}{\sqrt{-2c}}\sqrt{\varepsilon_1n+\varepsilon_2\log\frac{2T}{\delta \Delta t}}.
    \end{equation*}
    where $\varepsilon_1$, $\varepsilon_2$ are defined as \eqref{eq: epsilon val}. Then 
   \begin{equation}
\mathbb{P}\left(\|X_t-x_t\|\leq r_{\delta,t}, ~\forall t\leq T\right)\geq 1-\delta.
   \end{equation} 
\end{thm}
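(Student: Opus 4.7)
My plan is to compensate for the looseness of Theorem \ref{thm: CT bound} on contractive dynamics by invoking it only on short subintervals, while anchoring $\|X_{t_k}-x_{t_k}\|$ at a grid of intermediate times via the tight state-level bound of Proposition \ref{prop: CT single}. Specifically, I would partition $[0,T]$ into $N=T/\Delta t$ subintervals $[t_k,t_{k+1}]$ with $t_k=k\Delta t$, spend a probabilistic budget of $\delta/(2N)$ at each endpoint and $\delta/(2N)$ on each subinterval, and close with a union bound over the $2N$ events, which uses the identity $2N/\delta = 2T/(\delta\Delta t)$ to match the log term in the statement.

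At each $t_k$, Proposition \ref{prop: CT single} with $\delta'=\delta/(2N)$ gives
\begin{equation*}
\|X_{t_k}-x_{t_k}\|\le \sigma\sqrt{\frac{1-e^{2ct_k}}{-2c}}\sqrt{\varepsilon_1 n+\varepsilon_2\log(2N/\delta)},
\end{equation*}
after rewriting $(e^{2ct_k}-1)/(2c)$ as $(1-e^{2ct_k})/(-2c)$ for $c<0$. Within each $[t_k,t_{k+1}]$, I would introduce the auxiliary deterministic trajectory $y_t^{(k)}$ solving \eqref{sys: c-t ds} with $y_{t_k}^{(k)}=X_{t_k}$, so that $X_t$ and $y_t^{(k)}$ start together at $t_k$. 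A time-shifted application of Theorem \ref{thm: CT bound} over the interval of length $\Delta t$ (with tolerance $\delta/(2N)$) then yields, uniformly for $t\in[t_k,t_{k+1}]$,
\begin{equation*}
\|X_t-y_t^{(k)}\|\le \sigma\sqrt{\frac{e^{-2c\Delta t}-1}{-2c}}\sqrt{\varepsilon_1 n+\varepsilon_2\log(2N/\delta)},
\end{equation*}
after absorbing the prefactor $e^{c(t-t_k)}\le 1$ (since $c<0$) and rewriting the coefficient $(1-e^{-2c\Delta t})/(2c)$ as $(e^{-2c\Delta t}-1)/(-2c)$.

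The last step is a triangle inequality $\|X_t-x_t\|\le \|X_t-y_t^{(k)}\|+\|y_t^{(k)}-x_t\|$. The second term compares two deterministic trajectories of the same system and, by the contractivity implied by $\mu(D_xf)\le c<0$, satisfies $\|y_t^{(k)}-x_t\|\le e^{c(t-t_k)}\|X_{t_k}-x_{t_k}\|\le \|X_{t_k}-x_{t_k}\|$. Feeding in the endpoint bound and using the monotonicity $\sqrt{1-e^{2ct_k}}\le \sqrt{1-e^{2ct}}$ for $t\ge t_k$ with $c<0$ produces the claimed $r_{\delta,t}$; the overall union bound over the $2N$ probabilistic events keeps the failure probability at most $\delta$.

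I expect the main obstacle to be bookkeeping: allocating the failure budget so that the log term in the square root comes out exactly as $\log(2T/(\delta\Delta t))$, performing the monotonicity replacement that makes the bound uniform in $t$, and handling the edge case where $\Delta t$ does not divide $T$ by setting $N=\lceil T/\Delta t\rceil$ and noting that shortening the last subinterval only shrinks the corresponding bound. A small technical point to address is that Theorem \ref{thm: CT bound} may be applied starting at $t_k$ rather than $0$, which is legitimate because Assumption \ref{as: boundness} is time-uniform and the bound in Theorem \ref{thm: CT bound} depends only on the interval length.
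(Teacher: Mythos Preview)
Your proposal is correct and follows essentially the same approach as the paper: the paper also partitions $[0,T]$ into $N=T/\Delta t$ subintervals, applies Proposition~\ref{prop: CT single} at the grid points $t_k=k\Delta t$ with budget $\delta/(2N)$, introduces the auxiliary deterministic trajectory $y_t^{(k)}$ starting from $X_{t_k}$, applies Theorem~\ref{thm: CT bound} on each subinterval with budget $\delta/(2N)$, and closes with the triangle inequality, contractivity, the monotonicity step $\sqrt{1-e^{2ct_k}}\le\sqrt{1-e^{2ct}}$, and a union bound over the $2N$ events. The paper also handles the non-divisibility edge case by a footnote remark, just as you suggest.
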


\begin{proof}
 Define $N=T/\Delta t$\footnote{We choose $\Delta t$ so that $N=T/\Delta t$ is an integer for convenience, but the same conclusion holds for arbitrary $\Delta t$.}. For $t=k\Delta t$, $k=1,\dots, N$, let 
 $$r_{k\Delta t}= \sqrt{\frac{\sigma^2(e^{2ck\Delta t}-1)}{2c}(\varepsilon_1n+\varepsilon_2\log\frac{2N}{\delta})}.$$
 Then, by Proposition \ref{prop: CT single}, 
\begin{equation}\label{eq: t=kdt}
\begin{split}
    \mathbb{P}\left(\|X_{k\Delta t}-x_{k\Delta t}\|\leq r_{k\Delta t}\right)\geq 1-\frac{\delta}{2N}. 
\end{split}
\end{equation}

For any $t\in(k\Delta t, (k+1)\Delta t)$, $k=0,\dots,N-1$, define a trajectory $y_t^{(k)}$ that satisfies
\begin{equation}
    \begin{split}
        \dot{y}_t^{(k)} = f(y_t^{(k)},d_t,t), \,\,\, y_{k\Delta t}^{(k)} = X_{k\Delta t}.
    \end{split}
\end{equation}
Then on the interval $(k\Delta t, (k+1)\Delta t)$, it holds that 
\begin{equation}\label{eq: sde xt'-xt}
    \dot{x}_t-\dot{y}_t^{(k)}=f(x_t,d_t,t)-f(y_t^{(k)},d_t,t).
\end{equation}
Since $f(x_t,d_t,t)$ is contractive, it follows that \cite{tsukamoto2021contraction}
\begin{equation}\label{eq: xt'-xt}
\begin{split}
    &\|x_t-y_t^{(k)}\|\leq e^{c(t-k\Delta t)}\|x_{k\Delta t}-y_{k\Delta t}^{(k)}\| \\
    \leq &\|x_{k\Delta t}-y_{k\Delta t}^{(k)}\|=\|x_{k\Delta t}-X_{k\Delta t}\|.
\end{split}
\end{equation}

Note that $X_t$ and $y_t^{(k)}$ are associated trajectories over the time horizon $(k\Delta t, (k+1)\Delta t)$.
Let
$$r^\Delta=\sqrt{\frac{\sigma^2(1-e^{-2c\Delta t})}{2c}(\varepsilon_1n+\varepsilon_2\log\frac{2N}{\delta})},$$
then, by Theorem \ref{thm: CT bound}, it holds that
\begin{equation} \label{eq: Xt'-xt}
    \begin{split}
        &\prob{\|X_t-y_t^{(k)}\|\leq r^\Delta,~ \forall t\in(k\Delta t, (k+1)\Delta t)} \\
        \geq &\prob{\|X_t-y_t^{(k)}\|\leq e^{c(t-k\Delta t)}r^\Delta,~ \forall t\in(k\Delta t, (k+1)\Delta t)} \\
        \geq &1-\frac{\delta}{2N},
    \end{split}
\end{equation}
 where the second ``$\geq$'' directly follows Theorem \ref{thm: CT bound} by setting the time period as $\Delta t$ and the initial time as $k\Delta t$.
 
Next, we combine the probabilistic bound \eqref{eq: t=kdt} with \eqref{eq: Xt'-xt} to complete the proof. Define the following sequences of events:
\begin{equation}\label{eq:unknown}
    \begin{split}
        &E_k^{(1)}:~ \|X_{k\Delta t}-x_{k\Delta t}\|\leq r_{k\Delta t} \\
        &E_k^{(2)}:~ \|X_t-y_t^{(k)}\|\leq r^\Delta,~ \forall t\in(k\Delta t, (k+1)\Delta t).
    \end{split}
\end{equation}

Then by \eqref{eq: t=kdt}, \eqref{eq: Xt'-xt} and union bound inequality, the probability that both $E_k^{(1)}$ and $E_k^{(2)}$ hold for the whole time period can be upper bounded by
\begin{equation}\label{eq: union k=1,,N}
    \begin{split}
       & \prob{\left(\bigcap_{k=1}^N E_k^{(1)}\right) \bigcap \left(\bigcap_{k=0}^{N-1} E_k^{(2)}\right)} \\
        \geq & 1-\left(\sum_{k=1}^{N} \frac{\delta}{2N} + \sum_{k=0}^{N-1} \frac{\delta}{2N}\right) =1-\delta.
    \end{split}
\end{equation}

Let $k= \lceil \frac{t}{\Delta t}\rceil$. When the joint event in \eqref{eq: union k=1,,N} happens, we can get that
\begin{equation}\label{eq: CT improved bound}
    \begin{split}
        \|X_t-x_t\|\leq &  \|x_t-y_t^{(k)}\|+\|X_t-y_t^{(k)}\| \\
        \leq & \|X_{k\Delta t}-x_{k\Delta t}\|+\|X_t-y_t^{(k)}\| \\
        \leq & r_{k\Delta t}+r^{\Delta} \\
        \leq & \sqrt{\tfrac{\sigma^2(e^{2ct}-1)}{2c}(\varepsilon_1n+\varepsilon_2\log\tfrac{2N}{\delta})}+r^{\Delta} \\
        =& \tfrac{\sigma(\sqrt{1-e^{2ct}}+\sqrt{e^{-2c\Delta t}-1})}{\sqrt{-2c}}\sqrt{\varepsilon_1n+\varepsilon_2\log\tfrac{2N}{\delta}},
    \end{split}
\end{equation}
holds for all $t\in[0,T]$, where the second line of \eqref{eq: CT improved bound} follows \eqref{eq: xt'-xt}, and the fourth line of \eqref{eq: CT improved bound} holds using the facts that the function $\frac{e^{2ct}-1}{2c}$ is an increasing function with $t$ when $c<0$, and $t>k\Delta t$ when $t\in(k\Delta t, (k+1)\Delta t)$. By plugging $N=T/\Delta t$ into \eqref{eq: CT improved bound}, the last line of \eqref{eq: CT improved bound} becomes $r_{\delta,t}$ defined in Theorem \ref{thm: improved CT bound}. Therefore, we conclude that
 \begin{equation}
\mathbb{P}\left(\|X_t-x_t\|\leq r_{\delta,t}, ~\forall t\leq T\right)\geq 1-\delta,
   \end{equation} 
which completes the proof.
\end{proof}

\begin{figure}[t]
 \centering
\includegraphics[width =0.49\linewidth]{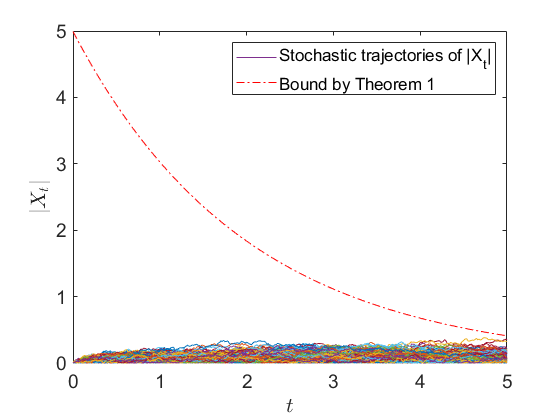}
\includegraphics[width =0.49\linewidth]{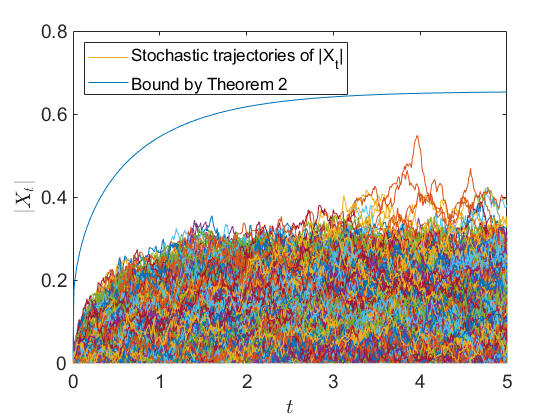}
  \caption{Comparison of the PT for linear system \eqref{sys: lin} with $c=-0.5$. {\bf Left}: the red dashed curve is the bound $r_{\delta,t}$ in Theorem \ref{thm: CT bound}. {\bf Right}: the blue solid curve is the bound $r_{\delta,t}$ in Theorem \ref{thm: improved CT bound} with $\Delta t=0.05$. Each figure contains 5000 independent trajectories of $\|X_t-x_t\|$ (with $x_t\equiv0$), and use $\delta=0.001$, $\varepsilon=1/16$.}
	\label{fig:compare c<0}
 \end{figure}

At first glance, Theorem \ref{thm: improved CT bound} seems counterintuitive, as it combines a martingale-based method with the union-bound approach that is typically conservative but produces a tighter result than Theorem \ref{thm: CT bound} based purely on martingale. To understand why this happens, we will compare $r_{\delta,t}$ in Theorem \ref{thm: CT bound} with that in Theorem \ref{thm: improved CT bound}. In Theorem \ref{thm: CT bound}, applying AM-based method to the entire trajectory leads to an approximately $\mO(e^{-cT})$ term in $r_{\delta,t}$ when $t\ll T$. In comparison, the interval-splitting strategy in Theorem \ref{thm: improved CT bound} reduces the $\mO(e^{-cT})$ term to $\mO(e^{-c\Delta t})$, while the use of union-bound inequality only causes an additional $\mO(\sqrt{\log\frac{T}{\Delta t}})$ term. When $T$ is large and $\Delta t$ is relatively small, $e^{-cT}\gg e^{-c\Delta t}+\sqrt{\log\frac{T}{\Delta t}}$ and thus Theorem~\ref{thm: improved CT bound} provides a sharper bound on PT than Theorem~\ref{thm: CT bound}. For instance, when $c=-0.5$, $T=10$ and $\Delta t=0.05$, we have $e^{-cT}=148.4\gg e^{-c\Delta t}+\sqrt{\log\frac{T}{\Delta t}}=3.32$.

Figure \ref{fig:compare c<0} depicts a comparison between Theorem \ref{thm: improved CT bound} and Theorem \ref{thm: CT bound}. The experiment is on the linear system \eqref{sys: lin} with $c=-0.5$ and $T=5$. Each subfigure contains 5000 independent trajectories of $\|X_t-x_t\|$ (with $x_t\equiv0$). The subfigure on the left displays the $r_{\delta,t}$ from Theorem \ref{thm: CT bound} with $\delta=10^{-3}$ and $\varepsilon=1/16$, while the subfigure on the right shows the $r_{\delta,t}$ from Theorem \ref{thm: improved CT bound} with the same $\delta$ and $\varepsilon$. It is clear that Theorem \ref{thm: improved CT bound} gives a much tighter bound. We also show the impact of $\Delta t$ on $r_{\delta,t}$ in 
Figure \ref{fig: r-Delta_t}. As can be seen from the figure, a relatively small $\Delta t$ suffices to achieve a tight bound, but an extremely small $\Delta t$ can significantly reduce the tightness of the bound.

 \begin{figure}[t]
 \centering
\includegraphics[width =0.49\linewidth]{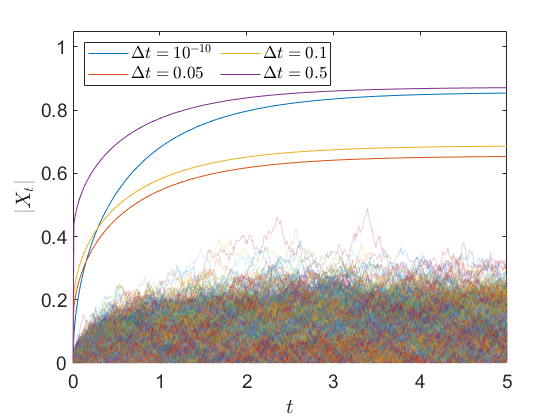}
\includegraphics[width =0.49\linewidth]{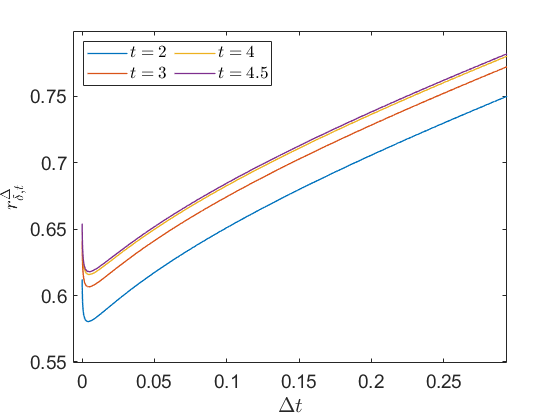}
  \caption{The bound $r_{\delta,t}$ from Theorem \ref{thm: improved CT bound} with different $\Delta t$. {\bf Left}: $r_{\delta,t}$ with four different choices of $\Delta t$ over $t\in[0,5]$, {\bf Right}: $r_{\delta,t}$ with $\Delta t\in[10^{-4},0.3]$ at different time $t$.}
	\label{fig: r-Delta_t}
 \end{figure}

\section{Probabilistic Tube for DT Systems} \label{sec: DT bound}

Following the same techniques in Section \ref{sec: CT bound}, in this section, we derive the PT of DT systems. We first introduce the DT affine martingale and then prove an AM-based PT for general DT systems. We also improve its tightness for contractive systems by leveraging the union-bound inequality.                                                                              

\subsection{AM Based Probabilistic Tube}
To begin with, we introduce the definition of DT affine martingale, the DT counterpart of Definition \ref{def: CT AM}.
\begin{definition} [DT Affine Martingale] \label{def: DT-AM}
   For a DT stochastic trajectory $\{v_t\}$, a nonnegative function $M(v,t):\R^n\times\{0,\dots, T\}\to\R_{\geq0}$ is said to be an affine martingale if $\exists a_t,b_t\in\R$ such that:
  \begin{equation}\label{eq: deg DT AM}
      \expect{M\left(v_{t+1},t+1\right)|v_t}\leq a_tM(v_t,t)+b_t.
  \end{equation}
\end{definition}
When $a_t\equiv1$, AM reduces to discrete-time $c$ martingale and is further reduced to supermartingale if $b_t\equiv0$. Similar to the CT affine martingale case, one can build a time-dependent sublevel set. This property is stated in the following lemma. Its proof can be found in Appendix \ref{app: lemma DT-AM}.
\begin{lemma}\label{lemma: DT-AM}
Consider a DT stochastic trajectory $\{v_t\}$. Let $M(v,t)$ be an AM of $\{v_t\}$ on $t=0,\dots,T$ with coefficients $a_t, b_t$. Define 
        \begin{equation}\label{eq: DT-tildeM}
        \widetilde{M}(v_t,t)=\begin{cases}
            M(v_t,t)\phi_t+ \sum_{\tau=t}^{T-1}b_\tau\phi_{\tau+1} \quad t<T, \\
            M(v_T,T)\quad t=T,
        \end{cases}
    \end{equation}
     where $\phi_t=\prod_{\tau=t}^{T-1}a_\tau$ for $t<T$ and $\phi_T=1$. Then given any $\overline{M}>0$ and the set $\mV_t=\{v_t:~\widetilde{M}(v_t,t)\leq \overline{M}\}$, it holds that
    \begin{equation*}
        \prob{v_t\in\mV_t, \forall t\leq T}\geq 1-\frac{\expect{M(v_0,0)}\phi_0+ \sum_{\tau=0}^{T-1}b_\tau\phi_{\tau+1}}{\overline{M}}
    \end{equation*}
\end{lemma}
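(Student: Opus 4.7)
The plan is to mirror the Doob-style argument underlying Lemma~\ref{lemma: CT-AM} in the discrete-time setting. Specifically, I would show that the auxiliary process $\widetilde{M}(v_t,t)$ defined in \eqref{eq: DT-tildeM} is a nonnegative supermartingale along $\{v_t\}$, and then apply the discrete Doob maximal inequality to control its running maximum. Translating the resulting sublevel-set event back in terms of $\mV_t$ will yield the claim.

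The key algebraic ingredient is the backward recursion
\begin{equation*}
\phi_t \;=\; a_t\,\phi_{t+1}, \qquad t = 0,1,\dots,T-1,
\end{equation*}
which is immediate from $\phi_t = \prod_{\tau=t}^{T-1}a_\tau$ together with $\phi_T = 1$. Using this, for $t \leq T-2$ I would compute
\begin{align*}
\expect{\widetilde{M}(v_{t+1},t+1) \mid v_t}
&= \phi_{t+1}\,\expect{M(v_{t+1},t+1) \mid v_t} + \sum_{\tau=t+1}^{T-1}b_\tau\phi_{\tau+1} \\
&\leq \phi_{t+1}\bigl(a_tM(v_t,t)+b_t\bigr) + \sum_{\tau=t+1}^{T-1}b_\tau\phi_{\tau+1} \\
&= \phi_t M(v_t,t) + \sum_{\tau=t}^{T-1}b_\tau\phi_{\tau+1} \;=\; \widetilde{M}(v_t,t),
\end{align*}
where the inequality invokes the AM condition \eqref{eq: deg DT AM} and the last line absorbs the $b_t$ contribution into the sum via $\phi_{t+1}a_t = \phi_t$. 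The boundary case $t = T-1$ is handled analogously: the AM condition gives $\expect{M(v_T,T) \mid v_{T-1}} \leq a_{T-1}M(v_{T-1},T-1) + b_{T-1}$, which equals $\widetilde{M}(v_{T-1},T-1)$ because $\phi_{T-1} = a_{T-1}$ and $\phi_T = 1$, so the piecewise split in \eqref{eq: DT-tildeM} is consistent at the terminal step.

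Given that $M \geq 0$ and the AM coefficients are nonnegative in the intended regime (so that $\widetilde{M} \geq 0$), the discrete Doob maximal inequality for nonnegative supermartingales yields
\begin{equation*}
\prob{\max_{0\leq t\leq T}\widetilde{M}(v_t,t) \geq \overline{M}} \;\leq\; \frac{\expect{\widetilde{M}(v_0,0)}}{\overline{M}}.
\end{equation*}
The complementary event is exactly $\{v_t \in \mV_t,\ \forall t \leq T\}$, and a direct evaluation at $t=0$ gives $\expect{\widetilde{M}(v_0,0)} = \expect{M(v_0,0)}\phi_0 + \sum_{\tau=0}^{T-1}b_\tau\phi_{\tau+1}$, which delivers the stated bound.

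No step poses a real difficulty. The only bookkeeping item is the indexing in the residual sum (pairing $\phi_{\tau+1}$ with $b_\tau$), which is precisely what the piecewise definition in \eqref{eq: DT-tildeM} is designed to absorb, together with the terminal case $t = T-1$. The construction is the exact discrete analogue of the CT proof: the integrating factor $\psi_t = e^{\int_t^T a_\tau\,\dd\tau}$ is replaced by the product $\phi_t = \prod_{\tau=t}^{T-1}a_\tau$, and the integral $\int_t^T b_\tau\psi_\tau\,\dd\tau$ by the sum $\sum_{\tau=t}^{T-1}b_\tau\phi_{\tau+1}$.
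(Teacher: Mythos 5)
Your proposal is correct and follows essentially the same route as the paper's proof: show that the discounted process $\widetilde{M}(v_t,t)$ is a supermartingale via the recursion $\phi_t=a_t\phi_{t+1}$, then apply Doob's maximal inequality and evaluate $\expect{\widetilde{M}(v_0,0)}$. If anything, your version is slightly more careful, since you verify the conditional inequality $\expect{\widetilde{M}(v_{t+1},t+1)\mid v_t}\leq \widetilde{M}(v_t,t)$ directly (and flag the nonnegativity needed for Doob), whereas the paper only iterates the unconditional expectations before invoking the same supermartingale conclusion.
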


Following the same techniques as Theorem \ref{thm: CT bound} for CT systems, we establish an AM-based PT for DT systems. The result is stated in the following theorem, and its proof can be found in Appendix \ref{app thm DT bound}.
\begin{thm} \label{thm: DT bound}
    Consider the DT stochastic system \eqref{sys: d-t ss} and its associated deterministic system \eqref{sys: d-t ds} under Assumptions \ref{ass: Lipschitz f}-\ref{ass: bounded sigma}. Let $X_t$ be the trajectory of \eqref{sys: d-t ss} and $x_t$ be its associated deterministic trajectory with terminal time $T$. Given any $t\in\{0,\dots,T\}$, $\delta\in(0,1)$ and $\varepsilon\in(0,1)$, define
    \begin{equation}\label{eq: r DT}
        r_{\delta,t}=
     L^t\sigma\sqrt{\frac{L^{-2T}-1}{L^{-2}-1}(\varepsilon_1n+\varepsilon_2\log(1/\delta))},
    \end{equation}
   where $\varepsilon_1$, $\varepsilon_2$ are defined in \eqref{eq: epsilon val}. Then 
   \begin{equation}
\mathbb{P}\left(\|X_t-x_t\|\leq r_{\delta,t}, ~\forall t\leq T\right)\geq 1-\delta.
   \end{equation} 
\end{thm}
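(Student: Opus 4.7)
The plan is to mirror the proof of Theorem \ref{thm: CT bound} in the discrete setting, using Definition \ref{def: DT-AM} (DT affine martingale) in place of Definition \ref{def: CT AM} and Lemma \ref{lemma: DT-AM} in place of Lemma \ref{lemma: CT-AM}. The three ingredients that carry over unchanged are (i) sub-Gaussian decoupling (Lemma \ref{lemma: AMGF_prop}-\ref{p5:AMGF})), which absorbs the noise $w_t$ into an $e^{\lambda^2\vartheta^2/2}$ factor; (ii) monotonicity together with rotation invariance (Lemma \ref{lemma: AMGF_prop}-\ref{p1:AMGF}-\ref{p2:AMGF})), combined with the Lipschitz bound $\|f(X_t,d_t,t)-f(x_t,d_t,t)\|\le L\|S_t\|$ from Assumption \ref{ass: Lipschitz f}; and (iii) exponential growth (Lemma \ref{lemma: AMGF_prop}-\ref{p4:AMGF})), which converts the sublevel set of $\Phi_{n,\lambda}$ back into a norm ball.

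First I would treat the special case $L=1$. Setting $M(v,t)=\Phi_{n,\lambda}(v)$ with a constant $\lambda>0$ and writing $S_{t+1}-w_t = f(X_t,d_t,t)-f(x_t,d_t,t)$, applying sub-Gaussian decoupling to $w_t$ followed by monotonicity (with $L=1$) gives the one-step inequality $\mathbb{E}[\Phi_{n,\lambda}(S_{t+1})\mid S_t]\le e^{\lambda^2\vartheta^2/2}\Phi_{n,\lambda}(S_t)$, so $\Phi_{n,\lambda}(S_t)$ is a DT affine martingale with $a_t = e^{\lambda^2\vartheta^2/2}$ and $b_t=0$. Lemma \ref{lemma: DT-AM} with $\bar M = \phi_0/\delta$, combined with exponential growth, yields a uniform norm bound $\|S_t\|\le r$ for every $t\le T$ with probability at least $1-\delta$. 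An AM--GM optimization over $\lambda$ closes this special case. Then, for general $L$, I would reduce to $L=1$ via the rescaling $\tilde X_t = L^{-t}X_t$ (analogous to the CT step $\tilde X_t=e^{-ct}X_t$). Under this change of variables the rescaled drift $\tilde f(\tilde x,d,t)=L^{-(t+1)}f(L^t\tilde x,d,t)$ has Lipschitz constant $1$, while the rescaled noise is sub-Gaussian with a time-varying proxy of order $L^{-2(t+1)}\vartheta^2$. Rerunning the $L=1$ argument with time-varying variance produces $\phi_0=\exp\bigl((\lambda^2\vartheta^2/2)\sum_k L^{-2k}\bigr)$, a geometric sum that collapses to the factor $\tfrac{L^{-2T}-1}{L^{-2}-1}$ appearing in \eqref{eq: r DT}. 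Scaling back via $\|S_t\|=L^t\|\tilde S_t\|$ then reintroduces the prefactor $L^t$, and the AM--GM optimization over $\lambda$ gives the $\sqrt{\varepsilon_1 n+\varepsilon_2\log(1/\delta)}$ factor. Equivalently, one can avoid the rescaling by staying in the original coordinates and using a time-varying $\lambda_t$ chosen so that $L\lambda_{t+1}=\lambda_t$: this is precisely the choice that closes the monotonicity step, since $\Phi_{n,\lambda_{t+1}}(LS_t)=\Phi_{n,L\lambda_{t+1}}(S_t)=\Phi_{n,\lambda_t}(S_t)$, and it produces the same final bound.

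The main technical hurdle is the bookkeeping of the geometric sums. In the CT proof, the analogous reduction gives the smooth integral $\tfrac{1-e^{-2cT}}{2c}$; in DT the corresponding sum is a geometric series whose indexing (whether the noise $w_t$ is attached to step $t$ or to the transition into step $t+1$) must be chosen correctly to produce the exact prefactor $L^t$ in front of the square root, rather than an $L^{t-1}$ or $L^{t+1}$ shift. Once this indexing is fixed and the ``worst-$t$'' uniform sublevel set from Lemma \ref{lemma: DT-AM} is handled, the rest of the argument is essentially identical to that of Theorem \ref{thm: CT bound}: the martingale approach delivers a trajectory-level bound directly, without invoking the union bound used in Proposition \ref{prop: DT union}, and therefore keeps the $\log(1/\delta)$ dependence sharp rather than $\log(T/\delta)$.
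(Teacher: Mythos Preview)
Your proposal is correct and follows essentially the same approach as the paper's proof: first handle the special case $L=1$ by showing $\Phi_{n,\lambda}(S_t)$ is a DT affine martingale with $a_t=e^{\lambda^2\vartheta^2/2}$, $b_t=0$ via sub-Gaussian decoupling and monotonicity, then apply Lemma \ref{lemma: DT-AM} and optimize over $\lambda$; for general $L$, reduce to the $L=1$ case via the rescaling $\tilde X_t=L^{-t}X_t$, which makes the rescaled drift $1$-Lipschitz and the rescaled noise sub-Gaussian with time-varying proxy, so that the resulting geometric sum produces exactly the factor $\tfrac{L^{-2T}-1}{L^{-2}-1}$ before scaling back. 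The only addition you make beyond the paper is the observation that one could equivalently use a time-varying $\lambda_t=L^{-t}\lambda_0$ in the original coordinates, which is a valid (and arguably cleaner) reformulation of the same rescaling trick.
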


\subsection{Modified PT for Contractive DT Systems}
Similar to the CT setting, the PT in Theorem \ref{thm: DT bound} is conservative for contractive DT  systems, i.e., DT systems \eqref{sys: d-t ss} that satisfy Assumption \ref{ass: Lipschitz f} with $L<1$. This is because when $L<1$, $T$ is large and $t\ll T$, the coefficient $L^{t}\sqrt{L^{-2T}-1}$ in \eqref{eq: r DT} can be extremely large, while $\|X_t-x_t\|$ is small with high probability when $t\ll T$ by Proposition \ref{prop: DT single}. Applying the same strategy as shown in Section \ref{sec: CT bound}-D, we improve the tightness of the PT for DT contractive systems, as stated in the following theorem. The proof of this result can be found in Appendix \ref{app thm imp DT}.

\begin{thm}\label{thm: improved DT bound}
         Consider the DT stochastic system \eqref{sys: d-t ss} and its associated deterministic system \eqref{sys: d-t ds} under Assumptions \ref{ass: Lipschitz f}-\ref{ass: bounded sigma} with $L<1$. Let $X_t$ be the trajectory of \eqref{sys: d-t ss} and $x_t$ be its associated deterministic trajectory with terminal time $T$. Given $\delta\in(0,1)$, $\Delta t\in\{1,\dots,T\}$ and $\varepsilon\in(0,1)$, let
    \begin{equation*}\label{eq: improved r DT}
        r_{\delta,t} =\sigma\left(\sqrt{\tfrac{L^{2t}-1}{L^2-1}}+\sqrt{\tfrac{L^{-2(\Delta t-1)}-1}{L^{-2}-1}}\right)\sqrt{\varepsilon_1n+\varepsilon_2\log\frac{2T}{\delta \Delta t}}.
    \end{equation*}
    where $\varepsilon_1$, $\varepsilon_2$ are defined as~\eqref{eq: epsilon val}. Then 
   \begin{equation}
\mathbb{P}\left(\|X_t-x_t\|\leq r_{\delta,t}, ~\forall t\leq T\right)\geq 1-\delta.
   \end{equation} 
\end{thm}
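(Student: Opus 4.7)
The plan is to mirror the proof of Theorem \ref{thm: improved CT bound} step by step, swapping its continuous-time ingredients for the discrete-time analogs already established in this paper, namely Proposition \ref{prop: DT single} at sampling instants and Theorem \ref{thm: DT bound} within sub-horizons. Assuming without loss of generality that $N \defeq T/\Delta t$ is an integer (otherwise a padding argument analogous to the footnote of Theorem \ref{thm: improved CT bound} applies), I partition $\{0,1,\dots,T\}$ into $N$ blocks whose endpoints are $k\Delta t$ for $k=0,\dots,N$. For each block $k\in\{0,\dots,N-1\}$ I introduce the reset deterministic trajectory $y_t^{(k)}$ defined by $y_{k\Delta t}^{(k)}=X_{k\Delta t}$ and $y_{t+1}^{(k)}=f(y_t^{(k)},d_t,t)$, so that $X_t$ and $y_t^{(k)}$ form an associated pair over the block. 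The triangle inequality $\|X_t-x_t\|\leq \|X_t-y_t^{(k)}\|+\|y_t^{(k)}-x_t\|$ decouples the stochastic fluctuation from the deterministic mismatch.

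The first summand is handled by applying Theorem \ref{thm: DT bound} to $(X_t,y_t^{(k)})$ on the sub-horizon of length $\Delta t-1$ with probability level $\delta/(2N)$; since $L<1$ the coefficient $L^s$ is maximized at $s=0$, yielding $\|X_t-y_t^{(k)}\|\leq r^{\Delta}\defeq \sigma\sqrt{\frac{L^{-2(\Delta t-1)}-1}{L^{-2}-1}\big(\varepsilon_1 n+\varepsilon_2\log(2N/\delta)\big)}$ uniformly over $t\in\{k\Delta t,\dots,(k+1)\Delta t-1\}$. The second summand is purely deterministic: iterating Assumption \ref{ass: Lipschitz f} gives $\|y_t^{(k)}-x_t\|\leq L^{t-k\Delta t}\|X_{k\Delta t}-x_{k\Delta t}\|\leq\|X_{k\Delta t}-x_{k\Delta t}\|$, and Proposition \ref{prop: DT single} invoked at the block endpoint $t=k\Delta t$ with probability level $\delta/(2N)$ controls $\|X_{k\Delta t}-x_{k\Delta t}\|$ by $\sigma\sqrt{\frac{L^{2k\Delta t}-1}{L^2-1}\big(\varepsilon_1 n+\varepsilon_2\log(2N/\delta)\big)}$, which I then relax to the same expression with $k\Delta t$ replaced by $t$ using the monotonicity of $\frac{L^{2s}-1}{L^2-1}$ in $s$ for $L<1$.

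A union bound over the $N$ endpoint events and the $N$ within-block events, each contributing at most $\delta/(2N)$ to the failure probability, yields a total failure probability of at most $\delta$. On the intersection, combining the two summand bounds and substituting $N=T/\Delta t$ produces exactly the stated $r_{\delta,t}$. The main obstacle I expect is not mathematical difficulty but careful bookkeeping: one must select the correct sub-horizon length ($\Delta t-1$ rather than $\Delta t$) when invoking Theorem \ref{thm: DT bound}, since the within-block event is only needed on the strict interior of each block while the boundary instants $(k+1)\Delta t$ are already covered by Proposition \ref{prop: DT single}; a sloppy choice would shift the $\sqrt{\frac{L^{-2(\Delta t-1)}-1}{L^{-2}-1}}$ factor by $L^{-2}$ and weaken the bound. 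The edge case $k=0$ is automatic, since $y_t^{(0)}\equiv x_t$ and no endpoint event is needed at $t=0$.
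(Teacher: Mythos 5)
Your proposal is correct and follows essentially the same route as the paper's proof in Appendix D: reset deterministic trajectories $\bar{x}_{t,k}$ anchored at $X_{k\Delta t}$, a triangle-inequality split, Proposition \ref{prop: DT single} at the $N$ block endpoints and Theorem \ref{thm: DT bound} on the $N$ sub-horizons (each at level $\delta/(2N)$), a union bound, and the monotonicity of $\frac{L^{2s}-1}{L^2-1}$ in $s$ to pass from $k\Delta t$ to $t$. Your remark about using sub-horizon length $\Delta t-1$ matches the exponent $L^{-2(\Delta t-1)}$ in the paper's $r^{\Delta}$, so no gap remains.
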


Note that the bound in Theorem \ref{thm: improved DT bound} resembles the result in Proposition \ref{prop: DT union}. When $\Delta t=1$, there is no state between $(k,k+1)$ for any $k=0,\dots,T-1$ and thus the events $F_k^{(2)}$ defined in \eqref{eq: events F} do not occur. In this case, 
$r_{k\Delta t}$ defined in \eqref{eq: r_kDt=} reduces to 
$$r_{k\Delta t}=\sqrt{\frac{\sigma^2(L^{2k\Delta t}-1)}{L^2-1}(\varepsilon_1n+\varepsilon_2\log\frac{N}{\delta})},$$
and \eqref{eq: DT union k=1,,N} is relaxed to
\begin{equation}
     \begin{split}
       & \prob{\|X_t-x_t\|\leq \sqrt{\frac{\sigma^2(L^{2t}-1)}{L^2-1}(\varepsilon_1n+\varepsilon_2\log\frac{N}{\delta})}} \\
       \geq & \prob{\|X_t-x_t\|\leq \sqrt{\frac{\sigma^2(L^{2k\Delta t}-1)}{L^2-1}(\varepsilon_1n+\varepsilon_2\log\frac{N}{\delta})}} \\
       = & \prob{\bigcap_{k=1}^N F_k^{(1)}} \geq 1-\sum_{k=1}^{N} \frac{\delta}{N} =1-\delta,
    \end{split}
\end{equation}
which is consistent with the PT derived in Proposition \ref{prop: DT union}. In fact, in many applications, $\Delta t=1$ is close to the optimal choice. As an example, Figure \ref{fig: discrete r-Delta_t} shows the value of $r_{\delta,t}$ derived by Theorem \ref{thm: improved DT bound} with respect to $\Delta t$ under $\delta=0.01$, $\varepsilon=1/16$, $\sigma^2=0.01(1-L)$ and different $T$. It is clear that $\Delta t=1$ is close to but not always the optimal choice in the displayed cases.

\begin{figure}[t]
 \centering
\includegraphics[width =0.49\linewidth]{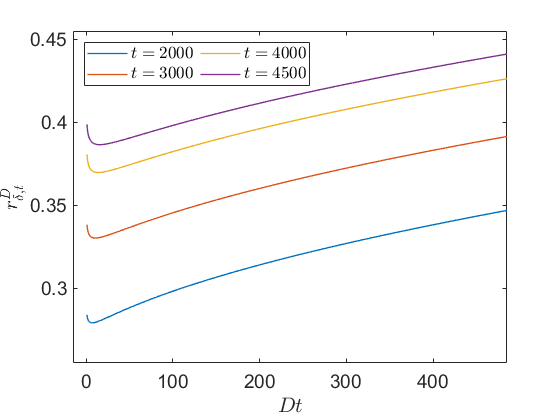}
\includegraphics[width =0.49\linewidth]{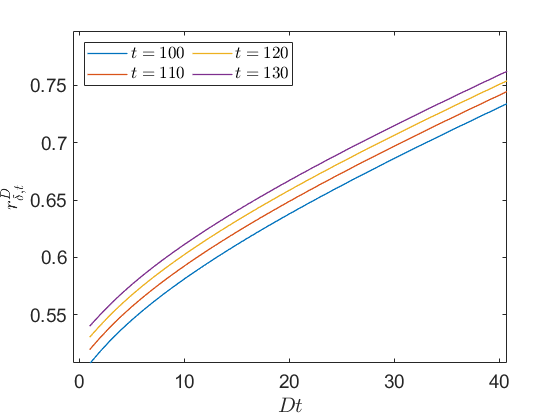}
  \caption{The bound $r_{\delta,t}$ derived from Theorem \ref{thm: improved DT bound} with respect to $\Delta t$ at different time $t$. {\bf Left}: $L=0.9999$ with $\Delta t\in\{1,\dots,450\}$. The optimal choice is $\Delta t=19$.  {\bf Right}: $L=0.99$ with $\Delta t\in\{1,\dots,40\}$.The optimal choice is $\Delta t=1$.}
	\label{fig: discrete r-Delta_t}
 \end{figure}

\section{Applications of Safety Verification with Derived PT} \label{sec: app}

In the previous sections, we established tight PTs of different types of systems. The choices of $r_{\delta,t}$ in different cases are summarized in Table \ref{tab: choose r}.
\begin{table}[ht]
    \centering
    \begin{tabular}{|c|c|c|}
  \hline
  \diagbox{~}{$r_{\delta,t}$}{~~~~~~} & CT & DT \\
  \hline
  non-Contractive & Theorem \ref{thm: CT bound} & Theorem \ref{thm: DT bound}  \\
  \hline
  Contractive & Theorem \ref{thm: improved CT bound} & Theorem \ref{thm: improved DT bound} \\ 
  \hline
  \end{tabular} 
  \caption{}
    \label{tab: choose r}
\end{table}

By combining the set-erosion strategy in Theorem \ref{thm: set-erosion} with the size $r_{\delta,t}$ of PT chosen from Table \ref{tab: choose r}, an effective safety verification scheme can be formalized as the following Theorem.
\begin{thm}
    Consider the CT stochastic system \eqref{sys: c-t ss} satisfying Assumption~\ref{as: boundness} or DT stochastic system~\eqref{sys: d-t ss} satisfying Assumptions~\ref{ass: Lipschitz f}-\ref{ass: bounded sigma}. Given a safe set $\mC\in\R^n$, an initial state set $\mX_0\subseteq\mC$ and a probability level $\delta\in(0,1)$, the system can be verified to be safe with $1-\delta$ guarantee on the time horizon $[0,T]$ if
\begin{equation}\label{eq: set ero}
    \begin{split}
       x_0\in \mathcal{X}_0 \Rightarrow x_t\in \mC\ominus\mathcal{B}^n\left(r_{\delta,t},0\right),~ r_{\delta,t}~ \mbox{is as Table \ref{tab: choose r}}, \\
       \forall d_t\in\mD ~\forall t\leq T.
    \end{split}
\end{equation}
\end{thm}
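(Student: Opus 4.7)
The plan is to derive this statement as a direct corollary obtained by chaining two results already established in the excerpt: the set-erosion strategy of Theorem~\ref{thm: set-erosion}, and the probabilistic tubes constructed in Theorems~\ref{thm: CT bound}, \ref{thm: improved CT bound}, \ref{thm: DT bound}, and \ref{thm: improved DT bound}. The entire argument reduces to a case split over the four entries of Table~\ref{tab: choose r}, and in each of the four cases the structure is identical.

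First I would fix the system class and select the corresponding $r_{\delta,t}$ as prescribed by Table~\ref{tab: choose r}: for the non-contractive CT system invoke Theorem~\ref{thm: CT bound}; for the contractive CT system (the regime $c<0$) invoke Theorem~\ref{thm: improved CT bound}; for the non-contractive DT system invoke Theorem~\ref{thm: DT bound}; and for the contractive DT system ($L<1$) invoke Theorem~\ref{thm: improved DT bound}. In every case the cited theorem delivers exactly
\[
\mathbb{P}\!\left(\|X_t - x_t\| \le r_{\delta,t},~ \forall t \le T\right) \ge 1-\delta,
\]
which by Definition~\ref{def: PT} certifies that $\mT = \{(t,y) : 0\le t\le T,~ \|y\| \le r_{\delta,t}\}$ is a valid probabilistic tube for the given stochastic system, with the associated deterministic trajectory sharing the same realization of the disturbance $d_t$.

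Next I would invoke Theorem~\ref{thm: set-erosion} with this PT. Its hypothesis~\eqref{eq: thm1} demands that $x_t \in \mC \ominus \mathcal{B}^n(r_{\delta,t},0)$ for all $t\le T$ along every associated deterministic trajectory originating in $\mX_0$ under every admissible $d_t\in\mD$, which is precisely the assumption~\eqref{eq: set ero} of the present theorem. The conclusion of Theorem~\ref{thm: set-erosion} then immediately produces safety of the stochastic system with respect to $\mC$ with $1-\delta$ guarantee over $[0,T]$, which is the desired claim.

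I do not anticipate a substantive obstacle, since the result is a synthesis statement that packages the previously proven PT estimates with the set-erosion lemma. The only point requiring modest care is bookkeeping on the dependence on $d_t$: each PT in Definition~\ref{def: PT} is stated pathwise for a fixed realization of $d_t$, so the universal quantifier ``$\forall d_t\in\mD$'' in~\eqref{eq: set ero} is precisely what couples each stochastic trajectory with its deterministic counterpart and makes the set-erosion implication valid uniformly over disturbances. Once this pairing is made explicit, the reduction to Theorem~\ref{thm: set-erosion} is immediate and the proof closes.
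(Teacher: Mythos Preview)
Your proposal is correct and matches the paper's treatment: the paper states this theorem immediately after Table~\ref{tab: choose r} with no explicit proof, presenting it as the direct combination of Theorem~\ref{thm: set-erosion} with the PT bounds from Theorems~\ref{thm: CT bound}--\ref{thm: improved DT bound}, which is exactly the chaining you describe.
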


This theorem reduces the stochastic safety problem to a deterministic one, enabling the use of efficient deterministic methods for verifying the safety of stochastic systems. In this section, we exemplify this framework by providing two safety-assurance applications that can be efficiently handled by \eqref{eq: set ero}. 

\subsection{Reachability Based Safety Verification}

In safety-critical applications where the system should be kept outside the unsafe region, reachability analysis is a key machinery to verify the safety of the system. For deterministic systems \eqref{sys: c-t ds} and \eqref{sys: d-t ds}, their deterministic reachable set (DRS) $\mR_t$ is defined as
\begin{equation*}
    \mR_t=\left\{ x_t \middle|
    \begin{aligned}
    &x_\tau\mbox{ is a deterministic trajectory} \\ & \mbox{with } x_0\in \mathcal{X}_0 \mbox{ and } d_\tau\in\mD
    \end{aligned}
    \right\}
\end{equation*}

For a stochastic system, given the safe set $\mC\subseteq\R^n$, if the DRS $\mR_t$ of its associated deterministic system remains within $\mC\ominus\mathcal{B}^n\left(r_{\delta,t},0\right)$ during $t\leq T$, where $r_{\delta,t}$ is as Table \ref{tab: choose r}, then the system \eqref{sys: d-t ss} is safe with $1-\delta$ guarantee due to the set-erosion strategy. See the following proposition.

\begin{proposition} \label{prop: app safety}
    Consider the CT stochastic system \eqref{sys: c-t ss} that satisfies Assumption \ref{as: boundness}, or the DT stochastic system \eqref{sys: d-t ss} that satisfies Assumptions~\ref{ass: Lipschitz f}-\ref{ass: bounded sigma}. Given a safe set $\mC\in\R^n$, an initial set $\mathcal{X}_0\in\mC$, and the deterministic disturbance domain $\mD\in\R^p$, the DRS $\mR_t$ of its associated deterministic system starting from $\mX_0$, and a terminal time $T$, if:
    \begin{equation} \label{eq: DRS safety ver}
\mR_t\subset\mC\ominus\mathcal{B}^n\left(r_{\delta,t},0\right),~\mbox{ for all } t\leq T,~ d_t\in\mD,
    \end{equation}
where $r_{\delta,t}$ is chosen as Table \ref{tab: choose r}, then the stochastic system is safe with $1-\delta$ guarantee during the horizon $[0,T]$.
\end{proposition}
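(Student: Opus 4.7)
The plan is to recognize Proposition \ref{prop: app safety} as an essentially immediate corollary of the set-erosion strategy in Theorem \ref{thm: set-erosion}, combined with the four probabilistic tube results that populate Table \ref{tab: choose r}. My strategy is to first confirm that for every row/column entry of Table \ref{tab: choose r}, the stated $r_{\delta,t}$ is the size of a valid probabilistic tube in the sense of Definition \ref{def: PT}, and then feed this PT into Theorem \ref{thm: set-erosion} applied pointwise to each trajectory contained in the deterministic reachable set.

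Concretely, I would proceed in three short steps. First, I would verify the PT claim: depending on which of the four scenarios we are in (CT versus DT, contractive versus non-contractive), $r_{\delta,t}$ is defined by one of Theorems \ref{thm: CT bound}, \ref{thm: improved CT bound}, \ref{thm: DT bound}, or \ref{thm: improved DT bound}. Each of these theorems asserts exactly the uniform-in-time probabilistic bound $\mathbb{P}(\|X_t-x_t\|\le r_{\delta,t},\,\forall t\le T)\ge 1-\delta$ between any stochastic trajectory $X_t$ and its associated deterministic counterpart $x_t$, so $\mT=\{(t,y):0\le t\le T,\ \|y\|\le r_{\delta,t}\}$ is a PT by Definition \ref{def: PT}. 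Second, I would use the definition of the DRS to observe that for every admissible initial condition $x_0\in\mathcal{X}_0$ and every admissible disturbance realization $d_\tau\in\mD$, the resulting deterministic trajectory $x_t$ is contained in $\mathcal{R}_t$ for all $t\le T$. Hence the hypothesis \eqref{eq: DRS safety ver} forces $x_t\in\mC\ominus\mathcal{B}^n(r_{\delta,t},0)$ for all $t\le T$ and all $d_t\in\mD$, which is precisely condition \eqref{eq: thm1} of Theorem \ref{thm: set-erosion}.

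Third, I would invoke Theorem \ref{thm: set-erosion} directly: its conclusion is exactly the desired statement that the stochastic system is safe with $1-\delta$ guarantee on $[0,T]$. No further computation is needed, since the set-erosion theorem has already been proved using the definition of Minkowski difference together with the PT property.

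There is essentially no hard step here, so in my write-up I would be explicit that Proposition \ref{prop: app safety} is a structural consequence of the theory developed in Sections \ref{sec: CT bound}--\ref{sec: DT bound} rather than a new technical result. The only mild point worth mentioning is that the hypothesis \eqref{eq: DRS safety ver} is stated for the reachable set $\mR_t$ while Theorem \ref{thm: set-erosion} is stated for individual trajectories $x_t$; the gap is closed in one line by noting that every trajectory generated from $\mathcal{X}_0$ under $d_\tau\in\mD$ is, by construction, a point of $\mR_t$, so that set-level containment implies trajectory-level containment uniformly over admissible disturbances.
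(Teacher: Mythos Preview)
Your proposal is correct and follows essentially the same route as the paper's own proof: the paper also observes that, by the definition of the DRS, hypothesis \eqref{eq: DRS safety ver} forces every deterministic trajectory starting in $\mathcal{X}_0$ to lie in $\mC\ominus\mathcal{B}^n(r_{\delta,t},0)$, and then invokes Theorem \ref{thm: set-erosion} to conclude. Your write-up is slightly more explicit in verifying that each entry of Table \ref{tab: choose r} yields a valid PT, but the underlying argument is identical.
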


\begin{proof}
    By the definition of DRS, \eqref{eq: DRS safety ver} implies that
    \begin{equation}
        \forall x_0\in\mX_0\Rightarrow x_t\in \mC\ominus\mathcal{B}^n\left(r_{\delta,t},0\right),
    \end{equation}
    which is sufficient to conclude the safety of the stochastic system with $1-\delta$ guarantee by Theorem \ref{thm: set-erosion}.
\end{proof}

Proposition \ref{prop: app safety} converts the stochastic safety verification problem into a deterministic safety verification on a time-varying set. This conversion offers tremendous flexibility as one can leverage any deterministic reachability methods, including Hamilton-Jacobi reachability \cite{SB-MC-SH-CJT:17}, contraction-based reachability \cite{JM-MA:15}, and simulation-based reachability \cite{chuchu2017simulation} to verify safety of stochastic systems with high probability. Moreover, the $r_{\delta,t}$ chosen from Table \ref{tab: choose r} is tight enough for the set erosion strategy, so Proposition \ref{prop: app safety} can be effectively deployed in most cases without worrying that $\mC\ominus\mathcal{B}^n\left(r_{\delta,t},0\right)$ is too conservative or empty, especially when the probability guarantee level $1-\delta$ is high. 

As an illustration, we revisit the linear system \eqref{sys: lin} with $c=-1$ and $\sigma=\sqrt{0.1}$. 
The task is to verify the safety of this system with $1-\delta$ guarantee on the interval $\mC=\{x\in\R: |x|\leq R\}$ during $t\leq T$. Notice that by fixing $X_0=\{0\}$, the associated deterministic trajectory of the system \eqref{sys: lin} is $x_t\equiv0$, and $\ballone{R,0}\ominus\ballone{r_{\delta,t},0}=\ballone{R-r_{\delta,t},0}$. Therefore, by Proposition \ref{prop: app safety} and Theorem \ref{thm: improved CT bound}, it suffices to verify whether
\begin{equation} \label{eq: R>=}
    R\geq \tfrac{\sigma(\sqrt{1-e^{2cT}}+\sqrt{e^{-2c\Delta t}-1})}{\sqrt{-2c}}\sqrt{\varepsilon_1n+\varepsilon_2\log\tfrac{2T}{\delta \Delta t}}
\end{equation}
for any $t\leq T$. The right-hand side of \eqref{eq: R>=} can be treated as the minimal value $R_{\min}$ of $R$ such that $\mC_{\min}= \{x\in\R: |x|\leq R_{\min}\}$ is safe with $1-\delta$ guarantee. We set $\Delta t=0.01$, $\varepsilon_1=2\log2$ and $\varepsilon_2=2$. Figure \ref{fig: Lin delta-R} shows the plots of $R_{\min}$ with respect to $1/\delta$ under different $T$. Our result is compared with the simulated result given by Monte-Carlo approximations with $3\times10^6$ sampled trajectories. When $R$ is very small, our strategy directly implies that the system is unsafe on $\mC$, as suggested by the simulated result. When $R$ gets larger, our strategy offers a result close to the simulated result.

\begin{figure}[t]
	\centering
  \includegraphics[width =0.47\linewidth]{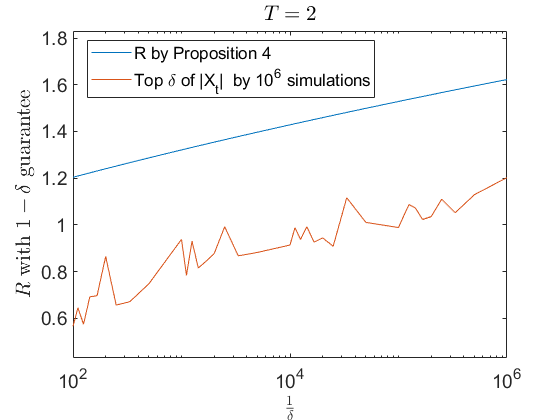}
  \includegraphics[width =0.47\linewidth]{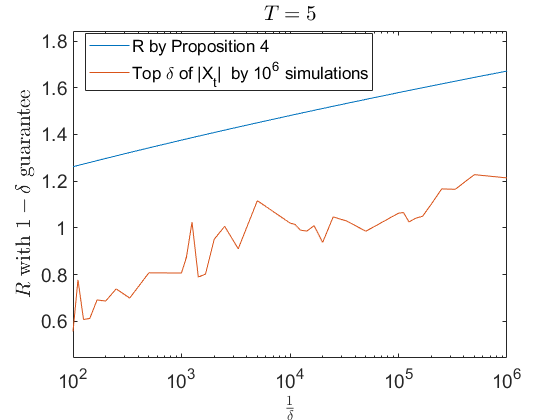}
	\caption{Comparison of $R$ derived by \eqref{eq: R>=} and the simulated result with respect to $1/\delta$ over different time horizons. The blue dashed lines are the safe set radius $R$ calculated by \eqref{eq: R>=} that guarantees safety with probability $1-\delta$ on the horizon $[0,T]$. The red dashed lines are top $\delta$ percentile values of $\|X_t-x_t\|$ (with $x_t\equiv 0$) among $3\times10^6$ Monte-Carlo simulations on the horizon $[0,T]$. The horizontal axis is logarithmic for better visualization. \textbf{Left}: $T=2$. \textbf{Right}: $T=5$.
 }
	\label{fig: Lin delta-R}
\end{figure} 

\subsection{Control Synthesis with Safety Constraints}
Controller design is crucial for enforcing safety constraints in real-world systems. However, designing controllers that rely directly on the feedback of stochastic trajectories poses significant challenges, particularly for nonlinear systems.
 This is because statistical features like $\mbE(X_t)$ may not have closed-form expressions, and one may need to do Monte-Carlo sampling at every time step to verify the safety constraints, which can be computationally intractable for high-dimensional systems. Instead, by combining the set-erosion strategy with the PT derived in this paper, we can reformulate the stochastic control synthesis problem as a deterministic one, which is tractable with a variety of well-studied methods such as model predictive control (MPC) \cite{2017Model} and control barrier functions \cite{ames2019control}. 

Consider the following control system:
\begin{equation} \label{sys: control}
    X_{t+1}=f(X_t,u_t)+w_t
\end{equation}
where $X_t\in \R^n$, $w_t\in\R^n$ is sub-Gaussian and $f$ is a parametrized vector field similar to the one in \eqref{sys: d-t ss}, and $u_t$ is the open-loop control input with values in a bounded input domain $\mU\subseteq\R^p$. We assume that \eqref{sys: control} satisfies Assumptions \ref{ass: Lipschitz f} and \ref{ass: bounded sigma}. Given a safe set $\mC\subseteq\R^n$, let $X_0\in\mX_0\subseteq\mC$. For a stochastic trajectory $X_t$ starting from $X_0$ and its associated deterministic trajectory $x_t$, if there exists a control law $u_t\in\mU$ such that  
\begin{equation}\label{eq: controller condition}
f(x_t,u_t)\in\mC\ominus\mathcal{B}^n\left(r_{\delta,t},0\right), \forall t\leq T,
\end{equation}
where $r_{\delta,t}$ is as Table \ref{tab: choose r}, then by Proposition \ref{prop: app safety}, $\prob{X_{t+1}\in\mC}\geq1-\delta$ if the same $u_t$ is applied to the stochastic trajectory. We point out that condition \eqref{eq: controller condition} is totally deterministic and is only with respect to the deterministic trajectory $x_t$. Therefore, the control scheme can be designed based on $x_t$ with additional safety constraint \eqref{eq: controller condition}, which is flexible with various deterministic control methods. For example, to design a predictive controller with $1-\delta$ safety guarantee for \eqref{sys: control}, one can find the deterministic predictive controller for its associated deterministic system by recursively solving the optimal control problem with the additional constraint \eqref{eq: controller condition}, and then apply the same controller input to the stochastic system~\eqref{sys: control}. According to the set-erosion strategy, the controlled trajectory of \eqref{sys: control} is safe with $1-\delta$ guarantee as long as the deterministic predictive controller is feasible~\cite{kohler2024predictive}.

Note that \eqref{eq: controller condition} can be overly restrictive or even infeasible if $r_{\delta,t}$ is conservative, thus making the control scheme fail. Fortunately, our derived bounds is tight even when $\delta$ is very small. A numerical example is provided in Section \ref{sec: example}-B with $\delta=0.005$, where the predictive controller works well with our derived PT, but infeasible with other conservative bounds. The performance of this controller synthesis framework depends on $c_t$ and $\mC$, and varies in different scenarios. Performance analysis in certain applications is considered as future work.

\section{Numerical examples} \label{sec: example}
In this section, we present two examples to validate the
efficacy of our approach for safety assurance in both CT and DT stochastic systems.

\begin{figure}[t]
	\centering
  \includegraphics[width =0.47\linewidth]{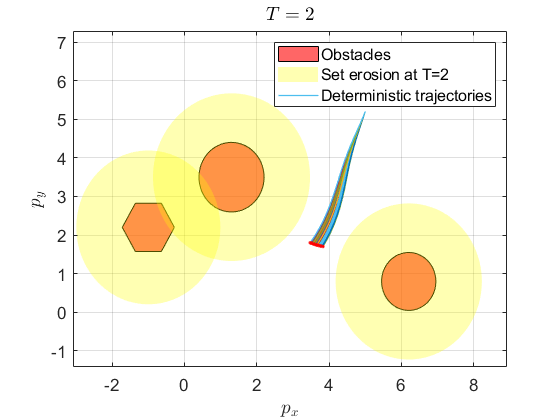}
    \includegraphics[width =0.47\linewidth]{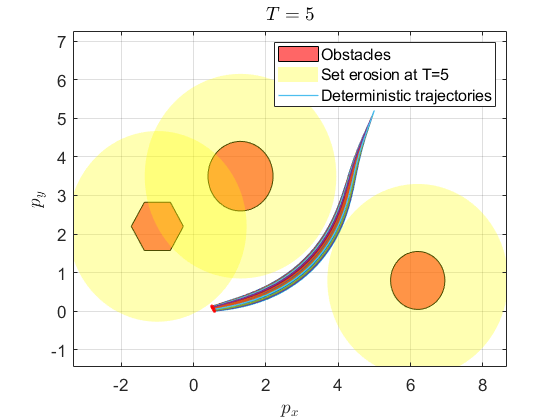}
    \includegraphics[width =0.47\linewidth]{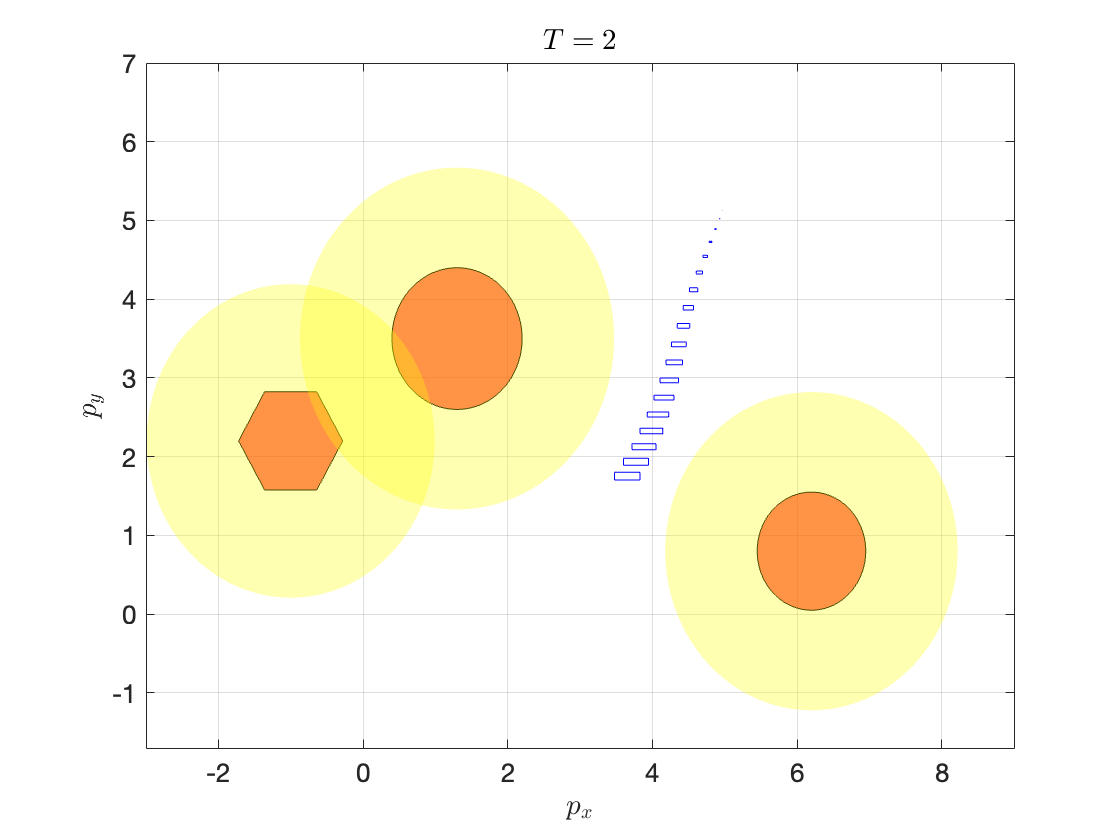}
    \includegraphics[width =0.47\linewidth]{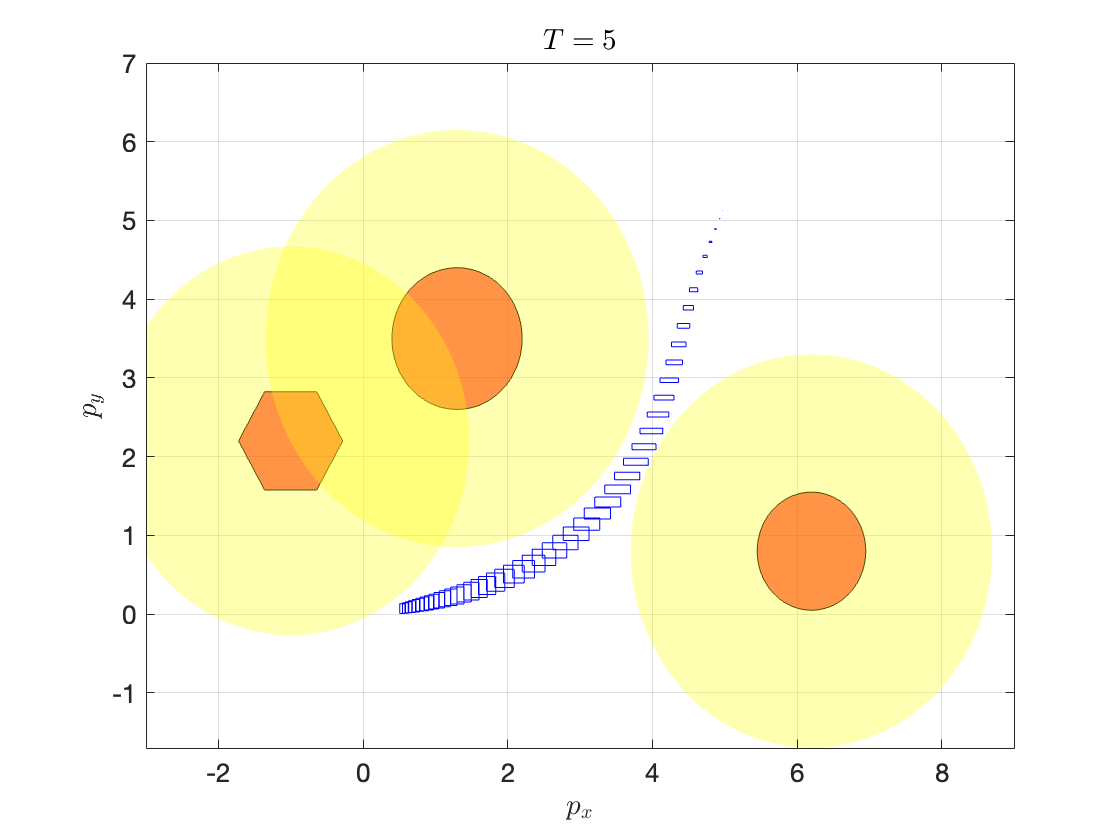}
    \includegraphics[width =0.47\linewidth]{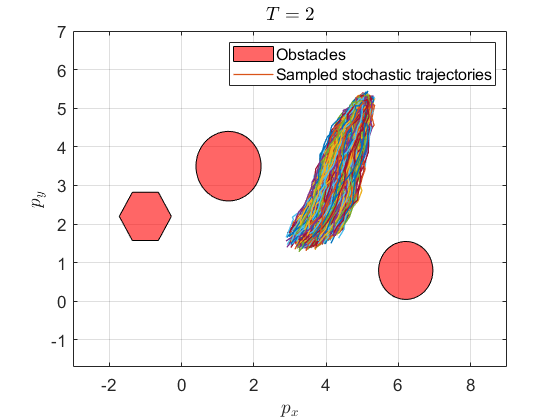}
    \includegraphics[width =0.47\linewidth]{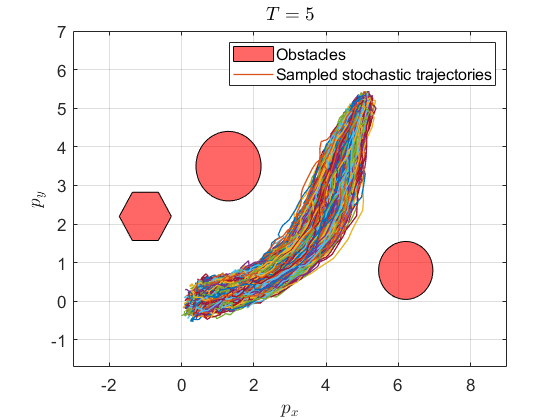}
	\caption{Stochastic safety verification of the unicycle system \eqref{sys: unicycle} with \textbf{99.9\%} guarantee. \textbf{Left:} Stochastic safety verification using the set-erosion strategy during $t\in[0,2]$.  \textbf{Right:} During $t\in[0, 5]$. The red shapes are obstacles. The yellow areas are the eroded part $\mC\backslash(\mC\ominus\mathcal{B}^n\left(r_{\delta,t},0\right))$. Each deterministic curve in the top figures is an independent trajectory of the associated deterministic unicycle system with different $d_t$. Each blue rectangle in the mid figures is DRS of the associated deterministic system at a certain time. Each stochastic plot in the bottom figures is an independently sampled trajectory of the stochastic system \eqref{sys: unicycle} during the time horizon $[0,T]$.} 
	\label{fig: unicycle}
\end{figure}

\subsection{Unicycle System}
In this example, we consider a unicycle moving on a two-dimensional plane with obstacles. The system dynamics is
\begin{align}\label{sys: unicycle}
\dd X_t = \begin{bmatrix}
    v_t \cos(\theta)\\
    v_t \sin(\theta)\\
    \omega_t + d_t
\end{bmatrix} \dt + g_t \dW_t
\end{align}
where $X_t = \begin{bmatrix}p_x & p_y & \theta\end{bmatrix}^{\top}$ is the state of the vehicle, $(p_x,p_y)$ is the position of the center of mass of the vehicle in the plane, $\theta$ is the heading angle of the vehicle, $v_t$ is the linear velocity of the center of mass,  $\omega_t$ is the angular velocity of the vehicle, $d_t$ is the deterministic disturbance on the angular velocity, and $g_t\mathrm{d}W_t$ is the stochastic disturbance on the model with $W_t$ a three-dimensional Wiener process. The settings of all the parameters follow \cite[Section VIII]{szy2024TAC}. The task is to steer the unicycle from $\mX_0=\begin{bmatrix}5 & 5 & -\frac{\pi}{3} \end{bmatrix}^{\top}\pm0.1$ to $[0.5~0]^{\top}$ while avoiding unsafe regions with high probability. The unsafe region 
$\mC_u=\{(p_x-1.3)^2+(p_y-3.5)^2\leq0.9^2\}\cup \{(p_x+1)^2+(p_y-2.2)^2\leq0.72^2\}\cup \{(p_x-6.2)^2+(p_y-0.5)^2\leq0.75^2\}$ is the union of the circumcircle of red obstacles shown in Figure \ref{fig: unicycle}, and the safe region is $\mC=\R^n\setminus\mC_u$. To accomplish this task, $v_t$ and $\omega_t$ are designed as the feedback controllers proposed in~\cite{MA-GC-AB-AB:95}. The details of the controller design can be found in \cite[Section VIII]{szy2024TAC}. 

Our goal is to verify the safety of the closed-loop stochastic system through reachability analysis. In this experiment, we apply Proposition \ref{prop: app safety} with $\delta=10^{-3}$, $\varepsilon=1/16$ and $r_{\delta,t}$ chosen from Theorem \ref{thm: improved CT bound} to verify the safety of the unicycle. Figure \ref{fig: unicycle} 
visualizes the DRS of the associate deterministic system of \eqref{sys: unicycle} and 5000 independent deterministic trajectories starting from different $x_0\in\mX_0$ and under different $d_t$. The second row of figures shows the reachable area of the associated deterministic system, where DRSs are approximated by \cite{murat2020dd}, and the eroded part of the safe set is in yellow. It is clear that all the deterministic trajectories stay in the eroded safe set and have no intersections with the yellow area, implying that \eqref{sys: unicycle} is safe with $99.9\%$ guarantee. To validate this result, we sample 5000 independent trajectories of \eqref{sys: unicycle} starting from $\mX_0$, shown in the last row of Figure \ref{fig: unicycle}. It is clear that all the stochastic trajectories avoid the obstacles, thus validating our theoretical results.

\subsection{Mass-Spring-Damper Chain}
Consider a mass-spring-damper chain model with a nonlinear damper and stochastic disturbance. The system model and its parameters are borrowed from \cite[Section VII]{kohler2024predictive} and is given by
\begin{equation} \label{sys: 3mass}
\begin{split}
    &X_{t+1}=X_t \\
    &+\eta\begin{bmatrix}
        v_{1} \\ v_{2} \\ v_{3} \\
        \frac{1}{m_1}(K\Delta p_{21}+B_1\Delta v_{21}-\tanh(B_2v_{1}))+u_{1}+w_{1} \\
        \frac{1}{m_2}(K\Delta p_{32}+B_1\Delta v_{32}-\tanh(B_2v_{2}))+u_{2}+w_{2} \\
        -\frac{1}{m_3}(Kp_{3}+B_1v_{3}+\tanh(B_2v_{3}))+u_{3}+w_{3} \\
    \end{bmatrix}    
\end{split}
\end{equation}
where $X_t=\begin{bmatrix} p_1 & p_2 & p_3 & v_1 & v_2 & v_3 \end{bmatrix}^{\top}$ is the state variable. $p_1,~p_2,p_3$ are positions of each block, and $v_1,~v_2,~v_3$ are velocities of each one. We denote $\Delta p_{21}=p_2-p_1$ and we use the same notation for $\Delta p_{32}$, $\Delta v_{21}$ and $\Delta v_{32}$. $m_1,~m_2,~m_3$ are the mass of each block, $K$ is the stiffness coefficient of the springs, and $B_1,~B_2$ are coefficient of each damper. $\eta$ is the stepsize. $u_t=\begin{bmatrix} u_1 & u_2 & u_3 \end{bmatrix}$ are external forces imposed on each block, which are treated as the controller input. $w_t=\begin{bmatrix} w_1 & w_2 & w_3 \end{bmatrix}$ are stochastic disturbances from the environment. 

The task is to design a controller that minimize the quadratic cost $\sum_{t=0}^Tc_t(X_t,u_t)=\mbE(\sum_{t=0}^T\|X_t\|_Q^2+\|u_t\|_R^2)$  while maintaining the safety constraints $X_t\in \mathcal{C}$ on a time horizon $[0,T]$ with $1-\delta$ probability guarantee, where $\mC=\{\begin{bmatrix} p_1 & p_2 & p_3 & v_1 & v_2 & v_3 \end{bmatrix}^{\top}\mid ~ p_1,\Delta p_{21},\Delta p_{32}\geq 1.2,~ v_3<2 \}$. We set $Q=I_6$, $R=10^{-4}I_3$, and the feasible domain of $u_t$ as $\|u_t\|_\infty\leq 100$. The controller is chosen as the stochastic MPC open sourced in \cite{kohler2024predictive} . As a model predictive controller, it recursively solves an optimization problem which includes the constraint \eqref{eq: controller condition} with $r_{\delta,t}$ proposed in \cite[Theorem 4]{kohler2024predictive}.    
However, such a $r_{\delta,t}$ is based on the Markov inequality and can make $\mC\ominus \mathcal{B}^n\left(r_{\delta,t},0\right)$ empty when $\delta$ is small. In this experiment, the MPC feasibility check program returns \texttt{``MPC infeasible''} for any $\delta<0.04$.

Our goal is to improve the effectiveness of the stochastic MPC with our derived PT. We apply the $r_{\delta,t}$ derived in Theorem \ref{thm: improved DT bound} with $\Delta t=1$, $\varepsilon=1/16$ and $\delta=0.002$ to the constraint \eqref{eq: controller condition}. We then run the MPC program and reproduce 3000 independent trajectories of the stochastic system \eqref{sys: 3mass}. State trajectories, input curves and cost functions are plotted in Figure \ref{fig: spring-mass}. For each trajectory, it satisfies the safety constraints only if it has no intersections with the black dashed line. It is clear that all the trajectories satisfy the safety constraints, under the probability level $\delta=0.002$. Moreover, the instantaneous cost $c_t(x_t,u_t)$ of each trajectory reduces to $0$ with high probability, indicating that our stochastic MPC program has a good performance.

\begin{figure}[t]
 \centering
\includegraphics[width =0.49\linewidth]{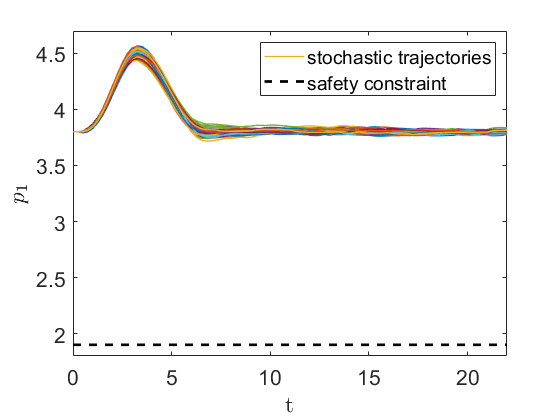}
\includegraphics[width =0.49\linewidth]{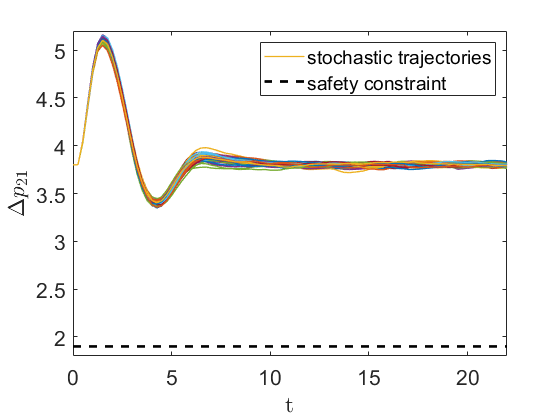}
\includegraphics[width =0.49\linewidth]{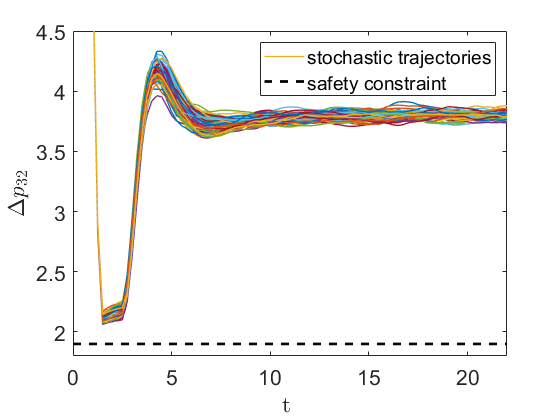}
\includegraphics[width =0.49\linewidth]{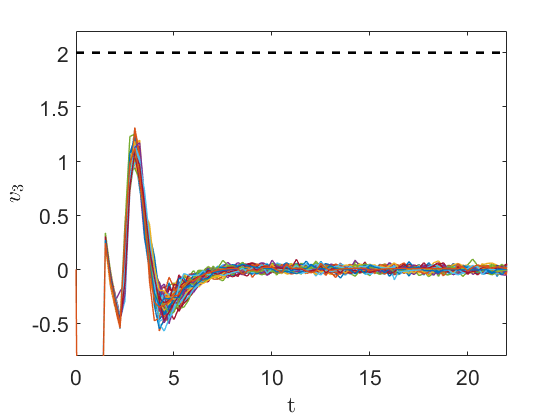}
\includegraphics[width =0.49\linewidth]{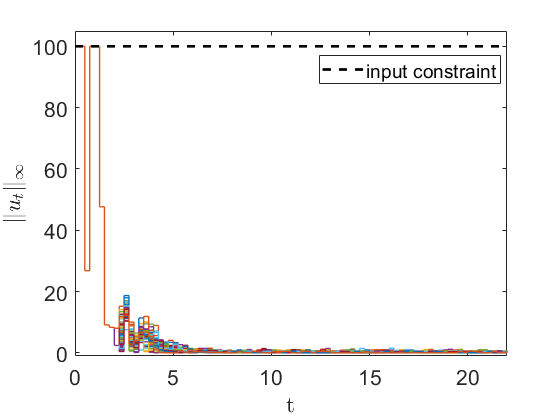}
\includegraphics[width =0.49\linewidth]{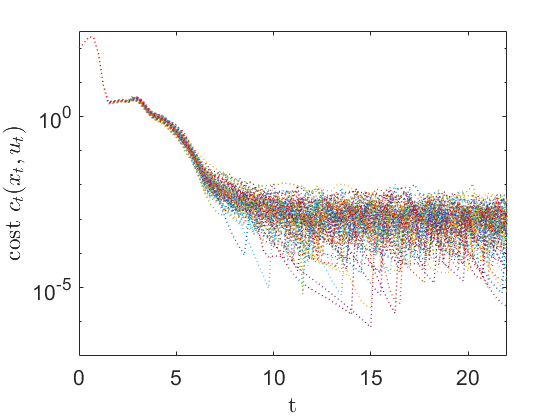}
  \caption{Visualization of stochastic trajectories of the Mass-Spring-Damper system \eqref{sys: 3mass}. In each figure, the stochastic curves in different colors are 3000 independent stochastic trajectories. The black dashed lines represent the boundary of the safe constraint. A trajectory is safe only if it has no intersections with the black dashed line in each figure. \textbf{Left-up}: $p_1$. \textbf{Right-up}: $\Delta p_{21}$. \textbf{Left-mid}: $\Delta p_{32}$. \textbf{Right-mid}: $v_3$. \textbf{Left-down}: $\|u_t\|_\infty$. \textbf{Right-down}: trajectories of cost function.}
	\label{fig: spring-mass}
 \end{figure}

\section{Conclusion}
In this paper, we study the safety verification problem for both CT and DT nonlinear stochastic systems. We address this problem with the set-erosion strategy, whose main challenge is establishing a tight PT of the stochastic system. To solve this problem, we leverage two mathematical tools: affine martingale and the energy function of AMGF, to develop novel AM-based PTs for both general CT and DT systems. We discover that such PTs are tight for non-contrative systems but can be conservative when the systems are contractive. To tackle this challenge, we combine the developed AM-based method with union-bound inequality and provide tighter PTs for this case. We further develop a reachability-based safety verification algorithm and a safe controller design pipeline based on our derived bounds. The bounds of PTs are tight and can convert the stochastic safety assurance problem to a deterministic one.

\bibliographystyle{IEEEtran}
\bibliography{main} 

\appendix

\subsection{Proof of Lemma \ref{lemma: CT-AM}} \label{app: lemma CT-AM}
By taking the expectation over $t$ on both sides of \eqref{eq: def CT AM}, we get
\begin{equation} \label{eq: CT AM ODE}
	\der{\expect{M(v_t,t)}}{t}\leq a_t \expect{M(v_t,t)} + b_t.
\end{equation}
We define  $\psi_t=e^{\int_t^Ta_\tau\dd\tau}$ and $\widetilde{M}$ as in \eqref{eq: CT-tildeM} and observe that $\der{\psi_t}{t}=-a_t\psi_t$. Thus, we get
\begin{equation*}
	\begin{split}
		&\der{\expect{\widetilde{M}(v_t,t)}}{t} \\
		=&\der{\expect{M(v_t,t)}}{t}\psi_t+\expect{M(v_t,t)}\der{\psi_t}{t}+\der{}{t} \int_t^T b_\tau\psi_\tau\dd\tau \\
		= &\der{\expect{M(v_t,t)}}{t}\psi_t-a_t\expect{M(v_t,t)}\psi_t-b_t\psi_t \\
		\leq &(a_t\expect{M(t,v_t)}+b_t)\psi_t - (a_t\expect{M(t,v_t)}+b_t)\psi_t=0,
	\end{split}
\end{equation*}
which implies that $\widetilde{M}(v_t,t)$ is a super-martingale. Using Doob's inequality \cite[Section 4.3]{hajek2015random}, we get that
\begin{equation}
	\begin{split}
		&\prob{\exists t\in[0,T]: \widetilde{M}(v_t,t)>\overline{M}}\leq \frac{\widetilde{M}(v_0,0)}{\overline{M}} \\
		=& \frac{M(v_0,0)\psi_0+\int_0^Tb_\tau\psi_\tau\dd\tau}{\overline{M}},
	\end{split}
\end{equation}
which is equivalent to the statement of Lemma \ref{lemma: CT-AM}. This completes the proof.

\subsection{Proof of Lemma \ref{lemma: DT-AM}} \label{app: lemma DT-AM}
Note that $M(v_t,t)$ is an AM of the stochastic trajectory $\{v_t\}$. Therefore, definition of AM, we get
\begin{equation}\label{eq: M_t+1 M_t}
	\begin{split}
		&\expect{M\left(v_{t+1},t+1\right)}=\mathbb{E}_{v_t}\expect{M\left(v_{t+1},t+1\right)|v_t} \\
		\leq&  \mathbb{E}_{v_t}\left(a_tM(v_t,t)+b_t \right)=a_t\mathbb{E}\left(M(v_t,t)\right)+b_t.
	\end{split}
\end{equation}
We define $\widetilde{M}(v_t,t)$ as in Lemma \ref{lemma: DT-AM}. By iteratively applying the inequality \eqref{eq: M_t+1 M_t} from time $T$ to time $t$, we get
\begin{equation*}
	\begin{split}   &\expect{\widetilde{M}\left(v_{T},T\right)}=\expect{M\left(v_{T},T\right)} \\
		\leq &a_{T-1}\mathbb{E}\left(M(v_{T-1},T-1)\right)+b_{T-1}=\expect{\widetilde{M}\left(v_{T-1},T-1\right)} \\
		\leq& \cdots \leq  \expect{M(v_t,t)}\phi_t+ \sum_{\tau=t}^{T-1}b_\tau\phi_{\tau+1}=\expect{\widetilde{M}\left(v_t,t\right)} \\
		\leq& \cdots \leq  \expect{M(v_0,0)}\phi_0+ \sum_{\tau=0}^{T-1}b_\tau\phi_{\tau+1}=\expect{\widetilde{M}\left(v_0,0\right)}
	\end{split}
\end{equation*}
which implies that $\widetilde{M}(v_t,t)$ is a super-martingale. Using Doob's inequality,
\begin{equation}
	\begin{split}
		&\prob{\exists t\in[0,T]: \widetilde{M}(v_t,t)>\overline{M}}\leq \frac{\expect{\widetilde{M}(v_0,0)}}{\overline{M}} \\
		=& \frac{\expect{M(v_0,0)}\phi_0+ \sum_{\tau=0}^{T-1}b_\tau\phi_{\tau+1}}{\overline{M}},
	\end{split}
\end{equation}
which is equivalent to the statement of Lemma \ref{lemma: DT-AM}. This completes the proof.

\subsection{Proof of Theorem \ref{thm: DT bound}} \label{app thm DT bound}
We begin with the special case where $L=1$. Define $S_t=X_t-x_t$ and $\beta_t=f(X_t,d_t,t)-f(x_t,d_t,t)$, then we have 
\begin{equation}\label{sys: vt}
	S_{t+1}=\beta_t + w_t.
\end{equation}
By Assumption \ref{ass: Lipschitz f}, 
\begin{equation}\label{eq: S_t+1 d}
	\|f(X_t,d_t,t)-f(x_t,d_t,t)\|\leq \|S_t\|.
\end{equation}
By \cite[Lemma 4.1.3]{szy2024Auto}, the conditional expectation of $\amgf{S_{t+1}}$ can be bounded as
\begin{equation}\label{eq: x+w_t}
	\begin{split}
		&\mbE\left(\amgf{S_{t+1}}|S_t\right)=\mbE_{w_t}\left(\amgf{\beta_t+w_t}\right) \\
		&\leq e^{\frac{\lambda^2\sigma^2}{2}}\amgf{\beta_t}
	\end{split}
\end{equation}
Combining \eqref{eq: S_t+1 d} and \eqref{eq: x+w_t}, invoking Lemma \ref{lemma: AMGF_prop}-2), we obtain
\begin{equation}\label{eq: E(AMGF)}
	\expect{\Phi_{n,\lambda}(S_{t+1})|S_t}\leq e^{\frac{\lambda^2\sigma^2}{2}}\Phi_{n,\lambda}(S_t) 
\end{equation}
which means $\amgf{x}$ is a AM of the system \eqref{sys: vt} with $a_t= e^{\frac{\lambda^2\sigma^2}{2}}$ and $b_t=0$. Similar with \eqref{eq: CT prob |v_t|<r_lam}\eqref{eq: CT prob |v_t|<r}, we derive that
\begin{equation}\label{eq: DT prob |S_t|<r}
	\begin{split}
		&\prob{\|S_t\|\leq r, \forall t\leq T} \\
		=&\prob{\amgf{S_t}\leq \amgf{r\ell}, \forall t\leq T} \\
		=& \prob{e^{\frac{\lambda^2\sigma^2(T-t)}{2}}\amgf{S_t}\leq e^{\frac{\lambda^2\sigma^2(T-t)}{2}}\amgf{r\ell}, \forall t\leq T} \\
		\geq& \prob{e^{\frac{\lambda^2\sigma^2(T-t)}{2}}\amgf{S_t}\leq \amgf{r\ell}, \forall t\leq T} \\
		\geq &1-\frac{e^{\frac{\lambda^2\sigma^2T}{2}}}{\amgf{r\ell}},\quad \forall \ell\in\mS^{n-1} \\
		\geq &1- (1-\varepsilon^2)^{-\frac{n}{2}}\exp\{\frac{\lambda^2\sigma^2T}{2}-{\varepsilon\lambda r}\} \\
		\geq & 1-(1-\varepsilon^2)^{-\frac{n}{2}}e^{\frac{-\varepsilon^2r^2}{2\sigma^2T}}
	\end{split}
\end{equation}
By choosing the same $r$ as \eqref{CT r_val c=0}, we can prove Theorem~\ref{thm: DT bound} for this case.

Next we generalize the special case to the case with arbitrary $L_t>1$. Define $\tX_t=L^{-t}X_t$, $\tx_t=L^{-1}x_t$ and $\tS_t=\tX_t-\tx_t$. From \cite{szy2024Auto}, we know that 
\begin{equation}
	\begin{split}
		&\tX_{t+1}=\tilde{f}(\tX_t,d_t,t)+\tilde{w}_t, \\
		&\tx_{t+1}=\tilde{f}(\tx_t,d_t,t),
	\end{split}
\end{equation}
where $\tilde{f}=L^{-t}f(L_t\tX_t)$ is Lipschitz with $\tilde{L}=1$ and $\tilde{w}_t$ is sub-Gaussian with $\tilde{\sigma}_t^2=L^{-2t}\sigma^2$. Thus the result of the special case can be applied. Following similar techniques as \eqref{eq: ode tc=0}-\eqref{eq: CT r_t},  we get

\begin{equation}\label{eq: DT r_scale}
	\prob{\|\tX_t-\tx_t\|\leq \tilde{r}, \forall t\leq T}\geq 1-\delta,
\end{equation}
where $\tilde{r}=\sigma\sqrt{\frac{L^{-2T}-1}{L^{-2}-1}(\varepsilon_1n+\varepsilon_2\log(1/\delta))}$. Recall that $S_t=\prod_{i=0}^{t-1}L_i\tS_t$. Plug this into \eqref{eq: DT r_scale}, we obtain that
\begin{equation} 
	\prob{\|v_t\|\leq r_{\delta,t}, \forall t\leq T}\geq 1-\delta.
\end{equation}
where 
$$r_{\delta,t}=L^t\sigma\sqrt{\frac{L^{-2T}-1}{L^{-2}-1}(\varepsilon_1n+\varepsilon_2\log(1/\delta))}$$
This completes the proof.

\subsection{Proof of Theorem \ref{thm: improved DT bound}} \label{app thm imp DT}
Choose $\Delta t \in\{1,\dots,T\}$ that is dividable by $T$ and let $N=T/\Delta t$. For $t=k \Delta t$, $k=0,\dots, N$, let 
\begin{equation} \label{eq: r_kDt=}
	r_{k \Delta t}= \sqrt{\frac{\sigma^2(L^{2k\Delta t}-1)}{L^2-1}(\varepsilon_1n+\varepsilon_2\log\frac{2N}{\delta})},
\end{equation}
by Proposition \ref{prop: DT single}, we know 
\begin{equation}
	\begin{split}
		\mathbb{P}\left(\|X_{k \Delta t}-x_{k\Delta t}\|\leq r_{k\Delta t}\right)\geq 1-\frac{\delta}{2N} 
	\end{split}
\end{equation}
For any $t\in\{k\Delta t+1,\dots (k+1)\Delta t-1\}$, $k=0,\dots,N-1$, define a trajectory $\bar{x}_{t,k}$ that satisfies 
\begin{equation}
	\begin{split}
		\bar{x}_{t+1,k}=f(\bar{x}_{t,k},d_t,t)\quad \bar{x}_{k\Delta t,k}=X_{k\Delta t} .
	\end{split}
\end{equation}
Then it holds that
\begin{equation}
	\|x_t-\bar{x}_{t,k}\|\leq L^{t-k\Delta t} \|x_{k\Delta t}-\bar{x}_{k\Delta t}\|\leq \|X_{k\Delta t}-x_{k\Delta t}\| 
\end{equation}
Let
$$r^\Delta=\sqrt{\frac{\sigma^2(L^{-2(\Delta t-1)}-1)}{L^{-2}-1}(\varepsilon_1n+\varepsilon_2\log\frac{2T}.{\delta \Delta t})}$$
Following the same step as \eqref{eq: xt'-xt}-\eqref{eq: Xt'-xt}, we can conclude that 
\begin{equation} \label{eq: DT Xt'-xt}
	\begin{split}
		\prob{\|X_t-\bar{x}_{t,k}\|\leq r^D,~ \forall t\in(k\Delta t, (k+1)\Delta t)}
		\geq 1-\tfrac{\delta}{2N}.
	\end{split}
\end{equation}
Next we combine the bound \eqref{eq: DT Xt'-xt} with union-bound inequality to complete the proof. Define the following sequences of events:
\begin{equation}\label{eq: events F}
	\begin{split}
		&F_k^{(1)}:~ \|X_{k \Delta t}-x_{k \Delta t}\|\leq r_{k \Delta t} \\
		&F_k^{(2)}:~ \|X_t-\bar{x}_{t,k}\|\leq r^\Delta,~ \forall t\in(k\Delta t, (k+1)\Delta t).
	\end{split}
\end{equation}
By union bound inequality, the probability that both $F_k^{(1)}$ and $F_k^{(2)}$ hold for the whole time period can be bounded by
\begin{equation}\label{eq: DT union k=1,,N}
	\begin{split}
		& \prob{\left(\bigcap_{k=1}^N F_k^{(1)}\right) \bigcap \left(\bigcap_{k=1}^N F_k^{(2)}\right)} \\
		\geq & 1-(\sum_{k=1}^{N} \frac{\delta}{2N} + \sum_{k=1}^{N} \frac{\delta}{2N}) =1-\delta.
	\end{split}
\end{equation}
When the joint event in \eqref{eq: events F} happens, we can get that
\begin{equation}\label{eq: DT improved bound}
	\begin{split}
		&\|X_t-x_t\|\leq \|x_t-\bar{x}_{t,k}\|+\|X_t-\bar{x}_{t,k}\| \\
		\leq & \|X_{(k-1)\Delta t}-x_{(k-1)\Delta t}\|+\|\bar{X}_{t,k}-x_t\|\\
		\leq & r_{(k-1)\Delta t}+r^{\Delta} \\
		\leq & \sqrt{\frac{\sigma^2(L^{2t}-1)}{L^2-1}(\varepsilon_1n+\varepsilon_2\log\frac{2N}{\delta})}+r^{\Delta} \\
		=& \sigma(\sqrt{\frac{L^{2t}-1}{L^2-1}}+\sqrt{\frac{L^{-2(\Delta t-1)}-1}{L^{-2}-1}})\sqrt{\varepsilon_1n+\varepsilon_2\log\frac{2N}{\delta}},
	\end{split}
\end{equation}
holds for all $t\in[0,T]$, where the forth line follows the fact that $\frac{L^{2t}-1}{L^2-1}$ is increasing with $t$ when $L\in(0,1)$ and $t\geq0$.  Plug $N=T/\Delta t$ into \eqref{eq: DT improved bound}, then the last line of \eqref{eq: DT improved bound} becomes $r_{\delta,t}$ defined in Theorem \ref{thm: improved DT bound}. This completes the proof.

\vspace{-1cm}
\begin{IEEEbiography}[{\includegraphics[width=1in,height=1.25in,clip,keepaspectratio]{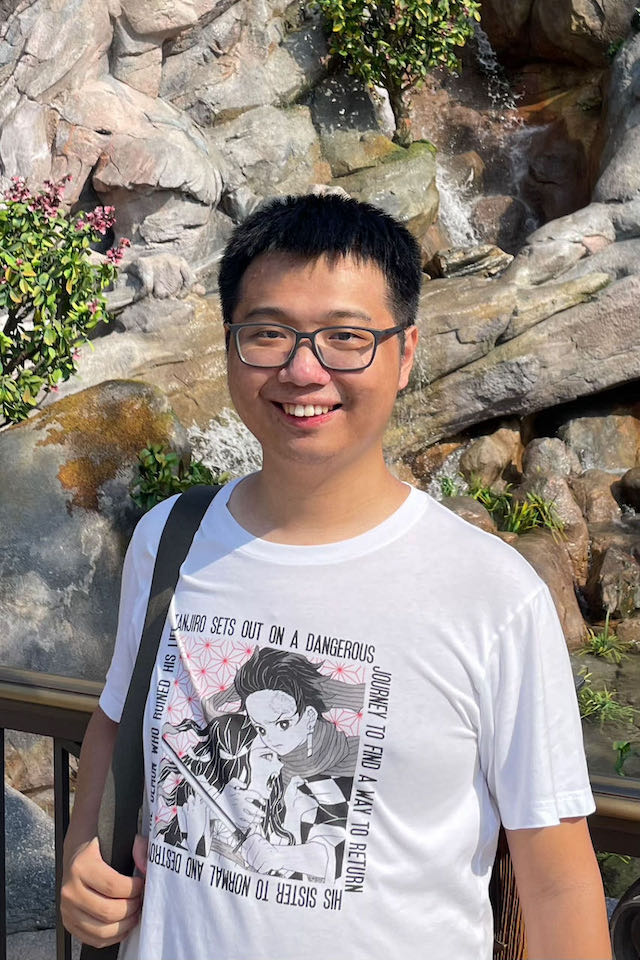}}]{Zishun Liu} (Student Member, IEEE)
received the B.E. degree in Automation from Shanghai JiaoTong University, Shanghai, China, in 2021. He is currently a third-year PhD student at Georgia Institute of Technology, GA, USA. His research interests include stochastic control systems, optimization, and machine learning methods for control. 
\end{IEEEbiography}
\vspace{-0.7cm}
\begin{IEEEbiography}[{\includegraphics[width=1in,height=1.25in,clip,keepaspectratio]{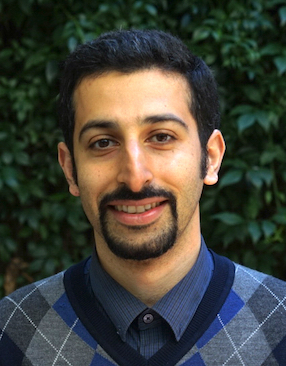}}]{Saber Jafarpour}{\space}(Member, IEEE) received the Ph.D. degree from Department of Mathematics and Statistics, Queen’s University, Kingston, ON, Canada, in 2016. He is currently a Research Assistant Professor at the University of Colorado Boulder, USA in the Department of Electrical, Computer, and Energy Engineering. Prior to joining CU Boulder in 2023, he was a Postdoctoral Researcher with Georgia Institute of Technology and with the Center for Control, Dynamical Systems, and Computation, University of California, Santa Barbara.
His research interests include analysis and control of networks and learning-based systems.
\end{IEEEbiography}
\vspace{-1.0cm}
\begin{IEEEbiography}
[{\includegraphics[totalheight=3.1cm]{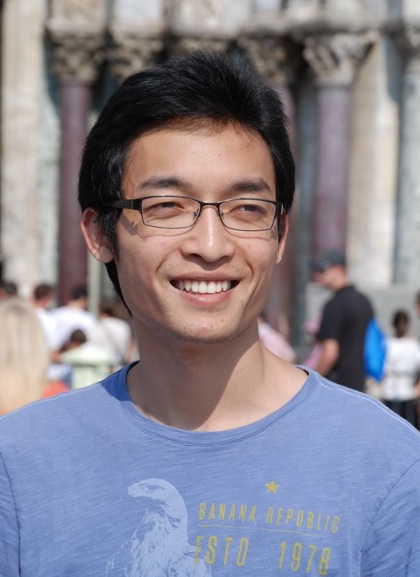}}]
{Yongxin Chen}
(S'13--M'17--SM'22) received his BSc from Shanghai Jiao Tong University in 2011 and Ph.D. from the University of Minnesota in 2016, both in Mechanical Engineering. He is an Associate Professor in the School of Aerospace Engineering at Georgia Institute of Technology. He has served on the faculty at Iowa State University (2017-2018). He is an awardee of the Best Paper Award of IEEE Transactions on Automatic Control in 2017 and the Best Paper prize of SIAM Journal on Control and Optimization in 2023. He received the NSF Faculty Early Career Development Program (CAREER) Award in 2020, the Simons-Berkeley Research Fellowship in 2021, the A.V. Balakrishnan Award in 2021, and the Donald P. Eckman Award for Outstanding Young Engineer in the field of Automatic Control in 2022. His current research interests are in the areas of control, machine learning, and robotics.
\end{IEEEbiography}

\end{document}